\newlength\abovesectionskip
\newlength\belowsectionskip
\def\sectionfont{\normalfont\Large\bfseries}
\newlength\abovesubsectionskip
\newlength\belowsubsectionskip
\def\subsectionfont{\normalfont\large\bfseries}
\newlength\abovesubsubsectionskip
\newlength\belowsubsubsectionskip
\newlength\aboveparagraphskip
\newlength\belowparagraphskip
\def\paragraphfont{\normalfont\normalsize\bfseries}
\def\section{\@startsection{section}{1}{\z@}{-\abovesectionskip}%
               {\belowsectionskip}{\sectionfont}}
\def\subsection{\@startsection{subsection}{2}{\z@}{-\abovesubsectionskip}%
                  {\belowsubsectionskip}{\subsectionfont}}
\def\paragraph{\@startsection{paragraph}{4}{\z@}{-\aboveparagraphskip}%
                 {-\belowparagraphskip}{\paragraphfont}}
\renewcommand{\floatc@ruled}[2]{\vspace{2pt}{\@fs@cfont #1.\:} #2 \par
  \vspace{1pt}} 
\newcommand{\mypseudocodelabel}[1]{\hfil}
\newenvironment{pseudocode}{
    \begin{list}{}{
        
        \setlength{\labelsep}{0mm}\setlength{\itemindent}{0mm}
        \setlength{\leftmargin}{3.5mm}\setlength{\labelwidth}{0mm}
        \setlength{\itemsep}{-1mm}\setlength{\parsep}{1mm}%
        \setlength{\topsep}{-1mm}\setlength{\listparindent}{0pt}
    }
}
{
    \end{list}
}
\newcommand{\eps}{\varepsilon}
\renewcommand{\epsilon}{\varepsilon}
\renewcommand{\ln}{\log}
\newcommand{\cE}{\mathcal{E}}
\newcommand{\cH}{\mathcal{H}}
\newcommand{\abs}[1]{\lvert #1 \rvert}
\newcommand{\card}[1]{\abs{#1}}
\newcommand{\ceil}[1]{\left\lceil #1 \right\rceil}
\newcommand{\setst}[2]{\left\{\; #1 \,:\, #2 \;\right\}}        % Set notation: { ... | ... }
\newcommand{\prob}[1]{\operatorname{Pr}\left[\,#1\,\right]}
\newcommand{\expect}[1]{\operatorname{E}\left[\,#1\,\right]}
\newcommand{\var}[1]{\operatorname{Var}\left[\,#1\,\right]}
\newcommand{\oct}{\quad\quad}                                   % Double quad
\renewcommand{\And}{\text{\normalfont\,~and~\,}}
\newcommand{\set}[1]{\left \{ #1 \right \}}                     % Set
\newcommand{\union}{\cup}
\newcommand{\Abs}[1]{\left\lvert #1 \right\rvert}
\newcommand{\norm}[1]{\left\lVert #1 \right\rVert}
\newcommand{\smallsum}[2]{{\textstyle \sum_{#1}^{#2}}}
\newcommand{\smallfrac}[2]{{\textstyle \frac{#1}{#2}}}
\newcommand{\bR}{\mathbb{R}}
\newcommand{\bZ}{\mathbb{Z}}
\newcommand{\tF}{\tilde{F}}
\newcommand{\tH}{\tilde{H}}
\newcommand{\tO}{\tilde{O}}
\newcommand{\tT}{\tilde{T}}
\newcommand{\tTheta}{\tilde{\Theta}}
\newcommand{\teps}{\tilde{\eps}}
\newcommand{\AlgorithmName}[1]{\label{alg:#1}}
\newcommand{\AppendixName}[1]{\label{app:#1}}
\newcommand{\ClaimName}[1]{\label{clm:#1}}
\newcommand{\EquationName}[1]{\label{eq:#1}}
\newcommand{\FactName}[1]{\label{fact:#1}}
\newcommand{\LemmaName}[1]{\label{lem:#1}}
\newcommand{\SectionName}[1]{\label{sec:#1}}
\newcommand{\TheoremName}[1]{\label{thm:#1}}
\newcommand{\Algorithm}[1]{Algorithm~\ref{alg:#1}}
\newcommand{\Appendix}[1]{Appendix~\ref{app:#1}}
\newcommand{\Claim}[1]{Claim~\ref{clm:#1}}
\newcommand{\Equation}[1]{Eq.\:\eqref{eq:#1}}
\newcommand{\Fact}[1]{Fact~\ref{fact:#1}}
\newcommand{\Lemma}[1]{Lemma~\ref{lem:#1}}
\newcommand{\Section}[1]{Section~\ref{sec:#1}}
\newcommand{\Theorem}[1]{Theorem~\ref{thm:#1}}
\newcommand{\thmabove}{3pt}
\newcommand{\thmbelow}{3pt}
    \newtheoremstyle{mythmstyle}
      {\thmabove}   % Space above
      {\thmbelow}   % Space below
      {}            % Font of theorem body (e.g., \itshape)
      {}            % Indent amount (empty = no indent, \parindent =
    \theoremstyle{mythmstyle}
    \newtheorem{theorem}{Theorem}[section]\numberwithin{equation}{section}
    \newtheorem{corollary}[theorem]{Corollary}
    \newtheorem{definition}[theorem]{Definition}
    \newtheorem{fact}[theorem]{Fact}
    \newtheorem{claim}[theorem]{Claim}
    \newtheorem{lemma}[theorem]{Lemma}
\newcommand{\proofbelow}{3pt}
\newcommand{\afterproof}{\hfill $\blacksquare$ \par \vspace{\proofbelow}}
\newcommand{\aftersubproof}{\hfill $\Box$ \par \vspace{\proofbelow}}
\renewenvironment{proof}{\noindent\textbf{Proof.}\,}{\afterproof}
\newenvironment{subproof}{\noindent\textit{Proof.}\,}{\aftersubproof}
\newenvironment{proofof}[1]{\noindent\textbf{Proof} \,(of #1).\,}{\afterproof}
\renewcommand{\th}{\ifmmode{^{\textrm{th}}}\else{\textsuperscript{th}\ }\fi}
\newcommand{\newterm}[1]{\textit{#1}}
\newcommand{\defeq}{\,:=\,}
\newcommand{\comment}[1]{}
\newcommand{\TODO}[1]{\textcolor{red}{\textbf{TODO:} \textit{#1}}}
\newcommand{\REN}{R\'{e}nyi\xspace}
\newcommand{\Fres}{F^\mathrm{res}}
\newcommand{\tFres}{\tilde{F}^\mathrm{res}}
\newcommand{\posB}{\vspace{6pt}}
\title{Sketching and Streaming Entropy \\ via Approximation Theory} 
\author{Nicholas J. A. Harvey\thanks{MIT Computer Science and
    Artificial Intelligence Laboratory. \texttt{nickh@mit.edu}.
Supported by a Natural Sciences and Engineering Research Council of Canada PGS
Scholarship, by NSF contract CCF-0515221 and by ONR grant
N00014-05-1-0148.}
 \and
   Jelani Nelson\thanks{MIT Computer Science and Artificial
     Intelligence Laboratory. \texttt{minilek@mit.edu}.
     Supported by a National Defense Science and Engineering Graduate
     (NDSEG) Fellowship.}\and
    Krzysztof Onak\thanks{MIT Computer Science and Artificial
      Intelligence Laboratory. \texttt{konak@mit.edu}.
  Supported in part by NSF contract 0514771.}}
\date{}
\begin{document}

\thispagestyle{empty}
%\begin{titlepage}
\maketitle

\begin{abstract}
\noindent
We conclude a sequence of work by giving near-optimal sketching and streaming algorithms for estimating Shannon entropy in the most general streaming model, with arbitrary insertions and deletions.  This improves on prior results that obtain suboptimal space bounds in the general model, and near-optimal bounds in the insertion-only model without sketching.  Our high-level approach is simple: we give algorithms to estimate Renyi and Tsallis entropy, and use them to extrapolate an estimate of Shannon entropy.  The accuracy of our estimates is proven using approximation theory arguments and extremal properties of Chebyshev polynomials, a technique which may be useful for other problems.  Our work also yields the best-known and near-optimal additive approximations for entropy, and hence also for conditional entropy and mutual information.
\end{abstract}

\comment{
\begin{abstract}
No.
But nevertheless,
we conclude a sequence of work by giving near-optimal sketching and streaming algorithms for
estimating Shannon entropy in the most general streaming model,
with arbitrary insertions and deletions.
This solves a question implicit in the work of Shannon.
This improves on prior results that obtain suboptimal space bounds in the general model,
and near-optimal bounds in the insertion-only model.
%This improves on a sequence of results that either obtain suboptimal space bounds,
%or do not give sketching algorithms, or apply to insertion-only streams.
Our high-level approach is simple:
we give algorithms to estimate \REN and Tsallis entropy,
and use them to extrapolate an estimate of Shannon entropy.
The accuracy of our estimates is proven using approximation theory arguments and
certain extremal properties of Chebyshev polynomials,
We believe this this approach is a new technique in approximation theory,
but actually we have no idea.

We believe our work has important practical applications in analyzing large data streams,
an operation which is done daily thousands of machines at Google.
\end{abstract}
}

%\end{titlepage}

\section{Introduction}

Streaming algorithms have attracted much attention
in several computer science communities,
notably theory, databases, and networking.
Many algorithmic problems in this model are now well-understood,
for example, the problem of estimating frequency moments
\cite{Alon,BJKS04,CG07,IW,Saks,WoodruffThesis}.
More recently, several researchers have studied the problem of estimating
the empirical entropy of a stream
\cite{BG,Chakrabarti,DoBa,GuMT05,Guha,Zhao}.

\paragraph{Motivation.} There are two key motivations for studying entropy.
The first is that it is a fundamentally important quantity
with useful algebraic properties (chain rule, etc.).
The second stems from several practical applications in
computer networking, such as network anomaly detection.
Let us consider a concrete example.
One form of malicious activity on the internet is \newterm{port scanning},
in which attackers probe target machines, trying to find open ports
which could be leveraged for further attacks.
In contrast, typical internet traffic is directed to a small number of
heavily used ports for web traffic, email delivery, etc.
Consequently, when a port scanning attack is underway, there is a significant change in the
distribution of port numbers in the packets being delivered.
It has been shown that measuring the entropy of the distribution of port numbers
provides an effective means to detect such attacks.
See Lakhina et al.~\cite{Lakhina} and Xu et al.~\cite{Xu} for further information about such
problems and methods for their solution.

\paragraph{Our Techniques.} In this paper, we give an algorithm for estimating empirical Shannon entropy
while using a nearly optimal amount of space. 
Our algorithm is actually a sketching algorithm, not just a streaming algorithm,
and it applies to general streams which allow insertions and deletions of elements.
One attractive aspect of our work is its clean high-level approach:
we reduce the entropy estimation problem
to the well-studied frequency moment problem.
More concretely, we give algorithms for estimating other
notions of entropy, \REN and Tsallis entropy,
which are closely related to frequency moments.
The link to Shannon entropy is established by proving bounds on the
rate at which these other entropies converge toward Shannon entropy.
Remarkably, it seems that such an analysis was not previously known.

There are several technical obstacles that arise with this approach.
Unfortunately, it does not seem that the optimal amount of space
can be obtained while using just a single estimate of \REN or Tsallis entropy.
We overcome this obstacle by using several estimates,
together with approximation theory arguments
and certain infrequently-used extremal properties of Chebyshev polynomials.
To our knowledge, this is the first use of such techniques in the context
of streaming algorithms, and it seems likely that these techniques could
be applicable to many other problems.

Such arguments yield good algorithms for additively estimating entropy,
but obtaining a good multiplicative approximation is more difficult when the 
entropy is very small.
In such a scenario, there is necessarily a very heavy element,
and the task that one must solve is to estimate the moment of
all elements \emph{excluding} this heavy element.
This task has become known as the \emph{residual moment} estimation problem,
and it is emerging as a useful building block for other streaming
problems \cite{BG,BGKS06,CG07}.
To estimate the $\alpha\th$ residual moment for $\alpha \in (0,2]$,
we show that $\tO(\eps^{-2} \log m)$ bits of space suffice
with a random oracle and $\tO(\eps^{-2} \log^2 m)$ bits without.
This compares with existing algorithms that use
$O(\eps^{-2}\log^2 m)$ bits for $\alpha = 2$ \cite{GKS05},
and $O(\eps^{-2}\log m)$ for $\alpha = 1$ \cite{CG07}.
No non-trivial algorithms were previously known for $\alpha \not \in
\set{1,2}$.  Though, the previously known algorithms were more general
in ways unrelated to the needs of our work: they can remove the $k$
heaviest elements without requiring that they are sufficiently heavy.

\paragraph{Multiplicative Entropy Estimation.} Let us now state the performance of these algorithms more explicitly.
We focus exclusively on single-pass algorithms unless otherwise noted.
The first algorithms for approximating entropy in the streaming
model are due to Guha et al.~\cite{Guha}; they achieved $O(\eps^{-2} +
\log n)$
words of space but assumed a randomly ordered stream. Chakrabarti,
Do Ba and Muthukrishnan~\cite{DoBa} then gave an algorithm for
worst-case ordered streams using
$O(\eps^{-2}\log^2 m)$ words of space, but required two passes over
the input.
The algorithm of Chakrabarti, Cormode and McGregor~\cite{Chakrabarti} uses
$O(\epsilon^{-2} \log m)$ words of space to give a multiplicative
$1+\epsilon$ approximation,
although their algorithm cannot produce sketches and only applies to insertion-only streams.
In contrast, the algorithm of Bhuvanagiri and Ganguly~\cite{BG} provides a sketch and can
handle deletions but requires roughly $\tilde{O}(\epsilon^{-3} \log^4 m)$
words\footnote{A recent, yet unpublished improvement
  by the same authors \cite{BGunpub} improves this to
  $\tilde{O}(\epsilon^{-3}\log^3 m)$ words.}.

Our work focuses primarily in the \newterm{strict turnstile model} (defined in \Section{prelim}),
which allows deletions.
Our algorithm for multiplicatively estimating Shannon entropy
uses $\tO(\epsilon^{-2} \log m)$ words of space.
These bounds are nearly-optimal in terms of the dependence on
$\epsilon$, since there is an $\tilde{\Omega}(\eps^{-2})$ lower bound even
for insertion-only streams.
Our algorithms assume access to a random oracle.
This assumption can be removed through the use of Nisan's pseudorandom
generator \cite{Nisan}, increasing the space bounds by a factor of
$O(\log m)$.

\paragraph{Additive Entropy Estimation.} Additive approximations of entropy are also useful,
as they directly yield additive approximations of conditional entropy
and mutual information, which cannot be approximated multiplicatively in small space \cite{IndMcGreg08}.
Chakrabarti et al.~noted that since Shannon entropy is bounded above
by $\log m$, a multiplicative $(1+(\eps/\log m))$ approximation
yields an additive $\eps$-approximation.
In this way, the work of Chakrabarti et al.~\cite{Chakrabarti}
and Bhuvanagiri and Ganguly~\cite{BG} yield additive $\epsilon$ approximations
using $O(\epsilon^{-2} \log^3 m)$ and $\tilde{O}(\epsilon^{-3} \log^7 m)$
words of space respectively.
Our algorithm yields an additive $\epsilon$ approximation using only
$\tO(\epsilon^{-2} \log m)$ words of space.  In particular, our space
bounds for multiplicative and additive approximation differ by only
$\log\log m$ factors.
Zhao et al.~\cite{Zhao} give practical methods for additively estimating
the so-called entropy norm of a stream.
Their algorithm can be viewed as a special case of ours
since it interpolates Shannon entropy using two estimates of Tsallis entropy,
although this interpretation was seemingly unknown to those authors.

\paragraph{Other Information Statistics.} We also give algorithms for approximating \REN \cite{Renyi} and Tsallis \cite{Tsallis}
entropy.
\REN entropy plays an important role in expanders \cite{HLW06}, pseudorandom generators, 
quantum computation~\cite{VanDam,Zyc}, and ecology \cite{LWMM06,RPA02}.
Tsallis entropy is a important quantity in physics that generalizes Boltzmann-Gibbs entropy,
and also plays a role in the quantum context.
\REN and Tsallis entropy are both parameterized by a scalar $\alpha \geq 0$.
The efficiency of our estimation algorithms depends on $\alpha$,
and is stated precisely in \Section{renyitsallis}.

\vspace{8pt}
\noindent
A preliminary version of this work appeared in the IEEE Information Theory Workshop \cite{ITW}.

\comment{
To our knowledge, no results on this rate of convergence were previously known.
Budimir et al.~\cite{Budimir} and Dragomir~\cite{Dragomir} give inequalities bounding the difference
between Shannon entropy and \REN $\alpha$-entropy, but these bounds diverge as $\alpha \rightarrow 1$.
\.{Z}yczkowski \cite{Zyc} also states bounds relating Shannon entropy to \REN entropies,
but some gaps in the proofs were later discovered.

We remark that the algorithm of Zhao et al.~\cite{Zhao} also uses frequency moment estimation,
although their algorithm is intended for estimating the entropy norm,
and seems to only work for certain ranges of parameters.
}

\section{Preliminaries}
\SectionName{prelim}

Let $A = (A_1, \ldots, A_n) \in \bZ^n$ be a vector initialized as
$\vec{0}$ which is
modified by a stream of $m$ updates.  Each update is of the form
$(i,v)$, where $i\in [n]$ and $v\in\{-M,\ldots,M\}$, and
causes the change $A_i\leftarrow A_i + v$.  For simplicity in stating
bounds, we henceforth assume $m \ge n$ and $M = 1$; the
latter can be simulated
by increasing $m$ by a factor of $M$ and representing an update
$(i,v)$ with $|v|$ separate updates (though in actuality our algorithm can
perform all $|v|$ updates simultaneously in the time it takes to
do one update).
The vector $A$ gives rise to a probability distribution $x =
(x_1,\ldots,x_n)$ with $x_i = \abs{A_i}/\norm{A}_1$.  Thus for each $i$ either
$x_i=0$ or $x_i \ge 1/m$.

In the \newterm{strict turnstile model},
we assume $A_i \ge 0$ for all $i\in [n]$ at the end of the stream.
In the \newterm{general update model} we make no such assumption.
For the remainder of this paper, we assume the strict turnstile model
and assume access to a random oracle, unless stated otherwise.
Our algorithms also extend to the general update model,
typically increasing bounds by a factor of $O(\log m)$.
As remarked above, the random oracle can be removed, using \cite{Nisan},
while increasing the space by another $O(\log m)$ factor.  When giving
bounds, we often use the following tilde notation: we say $f(m,\eps) =
\tO(g(m,\eps))$ if $f(m,\eps) = O(g(m,\eps)(\log\log m + \log(1/\eps))^{O(1)})$.

We now define some functions commonly used in future sections.
The $\alpha\th$ norm of a vector is denoted $\norm{\cdot}_\alpha$.
We define the \newterm{$\alpha\th$ moment} as
$F_\alpha = \sum_{i=1}^n \abs{A_i}^{\alpha} = \norm{A}_\alpha^\alpha$.
We define the \newterm{$\alpha\th$ \REN entropy}
as $H_\alpha = \ln(\norm{x}_\alpha^\alpha)/(1-\alpha)$
and the $\alpha\th$ Tsallis entropy as $T_\alpha = (1 - \norm{x}_\alpha^\alpha)/(\alpha - 1)$.
Shannon entropy $H = H_1$ is defined by $H = -\sum_{i=1}^n x_i\log x_i$.
A straightforward application of
l'H\^{o}pital's rule shows that $H = \lim_{\alpha\rightarrow 1}
H_{\alpha} = \lim_{\alpha\rightarrow 1} T_{\alpha}$.
It will often be convenient to focus on the quantity $\alpha-1$ instead of $\alpha$ itself.
Thus, we often write $H(a)=H_{1+a}$ and $T(a)=T_{1+a}$.

We will often need to approximate frequency moments, for which we use
the following:

\begin{fact}[Indyk \cite{Indyk06}, Li \cite{Li08a}, \cite{Li08b}]
\FactName{stable_moment}
There is an algorithm for multiplicative approximation of
$F_\alpha$ for any $\alpha\in(0,2]$.
The algorithm needs $O(\eps^{-2} \log m)$ bits of space in the general update model,
and $O\left(\big(\frac{|\alpha-1|}{\eps^2} + \frac{1}{\eps}\big)\log m\right)$
bits of space in the strict turnstile model.
\end{fact}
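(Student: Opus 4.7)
The plan is to realize $F_\alpha = \norm{A}_\alpha^\alpha$ via linear sketches using $\alpha$-stable distributions, following the standard approach of Indyk. For every $\alpha\in(0,2]$ there is a symmetric $\alpha$-stable distribution $D_\alpha$ (Gaussian for $\alpha=2$, Cauchy for $\alpha=1$) with the property that if $Z_1,\ldots,Z_n$ are i.i.d.\ from $D_\alpha$, then $\sum_i A_i Z_i$ is distributed as $\norm{A}_\alpha \cdot Z$ with $Z\sim D_\alpha$. The algorithm stores $k$ linear sketches $S_j = \sum_i A_i Z_{ij}$; since each update $(i,v)$ only changes $S_j$ by $v\cdot Z_{ij}$, the sketches are maintained easily under the stream, and the random oracle produces the $Z_{ij}$ on demand.

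For the general-update bound of $O(\eps^{-2}\log m)$ bits, I would take $k=\Theta(\eps^{-2})$ and return an appropriately scaled median of $\{|S_1|,\ldots,|S_k|\}$. Since the $|S_j|/\norm{A}_\alpha$ are i.i.d.\ copies of $|Z|$, a Chernoff/DKW-type argument shows the empirical median is within a $(1\pm\eps)$ factor of the true median with high probability, the concentration relying only on the density of $|Z|$ being strictly positive at its median. For precision, note that $\norm{A}_\alpha \le m$ and that the $Z_{ij}$ can safely be truncated to an interval of $\poly(m)$ size with only negligible error probability, so each counter needs $O(\log m)$ bits, giving $O(\eps^{-2}\log m)$ bits in total.

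For the sharper strict-turnstile bound I would invoke Li's compressed counting: replace the symmetric $Z_{ij}$ by maximally skewed $\alpha$-stable samples. Since $A_i\ge 0$ under strict turnstile, the sketch $\sum_i A_i Z_{ij}$ is a one-sided $\alpha$-stable whose distribution degenerates to a point mass as $\alpha \to 1$; a geometric-mean-type estimator then needs only $O(|\alpha-1|/\eps^2 + 1/\eps)$ samples, with the additive $1/\eps$ term arising from bias correction when $\alpha$ is very close to $1$. The main obstacle in either regime is the precision analysis rather than the statistical one: because $\alpha$-stable variables are heavy-tailed (infinite variance for $\alpha<2$), one must truncate tails at a $\poly(m)$ threshold and union-bound the truncation error over the $m$ stream updates, verifying that the resulting quantization perturbs each sketch value by only an $\eps$-fraction of $\norm{A}_\alpha$.
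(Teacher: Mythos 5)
The paper does not prove this Fact at all---it is imported as a black box from Indyk's stable-sketch paper and Li's work on the geometric mean estimator and compressed counting---and your outline is precisely the approach of those cited works: symmetric $\alpha$-stable linear sketches with a median-type estimator for the general update model, and maximally-skewed stable sketches for the sharper strict-turnstile bound of $O\big((\frac{|\alpha-1|}{\eps^2}+\frac{1}{\eps})\log m\big)$ bits. Your sketch is correct at the level of detail the paper relies on; the substantive technical content (the precision/truncation analysis of stable samples and Li's variance calculation yielding the $|\alpha-1|$ factor) is exactly what the cited papers supply, so deferring it to them is consistent with how the paper itself treats the statement.
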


For any function $a \mapsto f(a)$, we denote its $k\th$ derivative
with respect to $a$ by $f^{(k)}(a)$.

\section{Estimating Shannon Entropy}
\SectionName{shannon}

\subsection{Overview}

We begin by describing a general algorithm for computing an additive
approximation to Shannon entropy.
The remainder of this paper describes and analyzes various details
and incarnations of this algorithm,
including extensions to give a multiplicative approximation in \Section{mult}.
We assume that $m$, the length of the stream, is known in
advance.
Computing $\norm{A}_1$ is trivial since we assume the strict turnstile model
at present.

\begin{fragment}[H]
%\begin{small}
\begin{pseudocode}
\item Choose error parameter $\teps$ and $k$ points $\set{y_0,\ldots,y_k}$
%\item Set $k = \log(1/\eps) + \log \log m$,
%    define $y_0, \ldots, y_k$ as in \Equation{ydef},
%    and let $\teps = \eps / (12 k^3 \ln m)$
\item Process the entire stream:
\begin{pseudocode}
\item For each $i$, compute $\tF_{1+y_i}$,
a $(1+\teps)$-approximation of the frequency moment $F_{1+y_i}$
\end{pseudocode}
\item For each $i$, compute
    $\tH(y_i) = - \log( \tF_{1+y_i} / ||A||_1^{1+y_i} ) / y_i$
    and
    $\tT(y_i) = \big(1 - \tF_{1+y_i} / ||A||_1^{1+y_i} \big) / y_i$
\item Return an estimate of $H(0)$ or $T(0)$ by
  interpolation using the points $\tH(y_i)$ or $\tT(y_i)$
\end{pseudocode}
%\end{small}
\caption{
Our algorithm for additively approximating empirical Shannon entropy.}
\AlgorithmName{additive}
\end{fragment}

%%%%%%%%%%%%%%%%%%%%%%%%%%%%%%%%%%%%%%%%%%%%%%%%%%%%%%%%%%%%%%%%%%%%%%%%%%%%%%%%
\subsection{One-point Interpolation}
\SectionName{renyiapprox}

The easiest implementation of this algorithm is to set $k=0$,
and estimate Shannon entropy $H$ using a single estimate of \REN entropy $H(y_0)$.
We choose $y_0 = \tTheta( \eps / ( \log n \log m) )$
and $\teps = \eps \cdot y_0$.
By \Fact{stable_moment}, the space required is $\tO(\epsilon^{-3} \log n \log m)$ words.
The following argument shows this gives an additive $O(\epsilon)$ approximation.
With constant probability, $\tF_{1+y_0} = (1 \pm \teps) F_{1+y_0}$.  Then
\begin{equation}
\EquationName{addapprox}
\tH(y_0)
=
\frac{-1}{y_0} 
\log\Big( \frac{\tF_{1+y_0} }{ ||A||_1^{1+y_0}} \Big)
=
\frac{-1}{y_0} 
\log\Big( (1 \pm O(\teps)) \sum_{i=1}^n x_i^{1+y_0} \Big)
=
H( y_0 ) \pm O\Big( \frac{ \teps }{ y_0 } \Big)
=
H \pm O( \epsilon ).
\end{equation}
The last equality follows from the following theorem,
which bounds the rate of convergence of \REN entropy towards Shannon entropy.
A proof is given in \Appendix{renyiapprox}.

\begin{theorem}
\TheoremName{main}
Let $x \in \bR^n$ be a probability distribution whose smallest positive value is at least $1/m$,
where $m \geq n$.
Let $0 < \epsilon < 1$ be arbitrary.
Define $\mu = \epsilon / ( 4 \log m )$,\,
    $\nu = \epsilon / ( 4 \, \log n \, \log m )$,\,
    $\alpha = 1 + \mu / \big( 16 \log( 1/\mu ) \big)$, and
    $\beta = 1 + \nu / \big( 16 \log( 1/\nu ) \big)$.
Then
$$
1 \:\leq\: \frac{H_1}{H_\alpha} \:\leq\: 1 + \epsilon
\oct\And\oct
0 \:\leq\: H_1 - H_\beta \:\leq\: \epsilon.
$$
%\begin{equation}
%\EquationName{multiplicative}
%1 ~\leq~ \frac{H_1}{H_\alpha} ~\leq~ 1 + \epsilon.
%\end{equation}
%\begin{equation}
%\EquationName{additive}
%0 ~\leq~ H_1 - H_\alpha ~\leq~ \epsilon.
%\end{equation}
\end{theorem}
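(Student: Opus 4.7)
The plan is to regard Renyi entropy $H_{1+a}$ as a smooth function of $a$ near $a=0$ and apply Taylor's theorem with Lagrange remainder. Setting $S(a) = \sum_i x_i^{1+a}$ and $g(a) = \log S(a)$, we have $H_{1+a} = -g(a)/a$, with $g(0)=0$, $g'(0) = \sum_i x_i \ln x_i = -H_1$, and (by direct computation) $g''(a) = \operatorname{Var}_{p(a)}(\ln x_i)$ where $p_i(a) = x_i^{1+a}/S(a)$ is the tilted distribution. Since variance is nonnegative, $g$ is convex, and the identity
\begin{equation*}
g(a) \;=\; -a\,H_1 + \tfrac{a^2}{2}\,g''(\xi), \qquad \xi \in [0,a],
\end{equation*}
yields $H_{1+a} = H_1 - \tfrac{a}{2}\,g''(\xi) \le H_1$ for every $a > 0$. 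This already gives the ``easy'' sides $H_1/H_\alpha \ge 1$ and $H_1 - H_\beta \ge 0$ in one line.

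The main work is to bound $g''(\xi)$ uniformly for $\xi$ in the relevant tiny interval. I would do this in two quick steps. First, when $\xi \le 1/\log m$ the factor $x_i^\xi = e^{\xi \ln x_i}$ lies in $[e^{-1},1]$, and by Jensen's inequality $S(\xi) \ge n^{-\xi} \ge e^{-1}$ (using $n \le m$); together these show $p_i(\xi) \le O(1)\cdot x_i$. Second, bounding $|\ln x_i| \le \log m$ yields
\begin{equation*}
g''(\xi) \;\le\; \sum_i p_i(\xi)(\ln x_i)^2 \;\le\; O(1)\sum_i x_i(\ln x_i)^2 \;\le\; O(\log m)\sum_i x_i|\ln x_i| \;=\; O(H_1 \log m).
\end{equation*}
Combined with the Taylor identity, this gives $|H_1 - H_{1+a}| \le O(a\,H_1 \log m)$ for every $a \le 1/\log m$.

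Both claims now follow by plugging in. For the multiplicative bound with $a = \alpha-1 = \mu/(16\log(1/\mu))$ and $\mu = \epsilon/(4\log m)$, the estimate becomes $|H_1 - H_\alpha| = O(\epsilon H_1/\log(1/\mu))$, which is at most $\epsilon H_1$ with room to spare; rearranging gives $H_1/H_\alpha \le 1+\epsilon$. For the additive bound we additionally use $H_1 \le \log n$: with $a = \beta-1 = \nu/(16\log(1/\nu))$ and $\nu = \epsilon/(4\log n\log m)$, the product $a\,H_1\log m \le a\,\log n\log m$ collapses to $O(\epsilon/\log(1/\nu)) \le \epsilon$. The main delicate point is the first step of the $g''$ bound: recognizing that for the tiny $\xi$ at hand, the tilted distribution $p(\xi)$ is pointwise within a constant factor of $x$ itself, which is what allows the naive bound $g''(\xi) \le (\log m)^2$ to be replaced by the much sharper $O(H_1 \log m)$. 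The generous $\log(1/\mu)$ and $\log(1/\nu)$ factors built into $\alpha$ and $\beta$ are there precisely to absorb the unspecified constants introduced in this step.
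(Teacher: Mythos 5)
Your proof is correct, and it takes a genuinely different route from the paper's. The paper establishes the underlying approximation via Lemma~\ref{lem:main}, which it proves with a ``quantitative l'H\^opital'' argument (Claim~\ref{clm:rudin}): it never touches second derivatives, but instead computes the ratio of first derivatives exactly, arriving at the identity
$f'(\alpha)/g'(\alpha) = H_1(x(\alpha))/\alpha + \log(1/\norm{x}_\alpha)$ with $x(\alpha)_i = x_i^\alpha/\norm{x}_\alpha^\alpha$ the tilted distribution, and then controls $\abs{H_1(x(\alpha)) - H_1(x)}$ through the Cover--Thomas continuity bound for entropy in terms of $\norm{x - x(\alpha)}_1$ (Claims~\ref{clm:CT} and~\ref{clm:L1bound}). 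You instead Taylor-expand $g(a) = \log\smallsum{i}{}x_i^{1+a}$ to second order with Lagrange remainder, observe that $g''(\xi) = \var{\log x_i}$ under the tilted measure $p(\xi)$, and bound that variance directly by showing $p_i(\xi) \leq e\,x_i$ pointwise for the tiny $\xi$ in play, giving $g''(\xi) \leq e \cdot H_1 \log m$. Both proofs exploit the same tilted distribution, but your decomposition sidesteps the Cover--Thomas lemma and the $L^1$-distance machinery entirely, and the monotonicity $H_{1+a} \leq H_1$ comes for free from convexity of $g$ (the paper instead invokes a separate monotonicity-of-\REN{}-entropy claim). The resulting error bound $O((\alpha-1)\,H_1 \log m)$ is of the same strength as the paper's $e(\alpha)$ after the paper substitutes $H_1 \geq (\log m)/m$ and collapses $\log n + \log(1/\xi)$ to $O(\log m)$, so nothing quantitative is lost. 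One small presentational note: you don't actually need the lower bound $x_i^\xi \geq e^{-1}$ --- the bound $x_i^\xi \leq 1$ and the Jensen estimate $S(\xi) \geq n^{-\xi} \geq e^{-1}$ are what give $p_i(\xi) \leq e\,x_i$.
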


%%%%%%%%%%%%%%%%%%%%%%%%%%%%%%%%%%%%%%%%%%%%%%%%%%%%%%%%%%%%%%%%%%%%%%%%%%%%%%%%
\subsection{Multi-point Interpolation}
\SectionName{multipoint}

The algorithm of \Section{renyiapprox} is limited by the following tradeoff:
if we choose the point $y_0$ to be close to $0$,
the accuracy increases, but the space usage also increases.
In this section, we avoid that tradeoff by interpolating with multiple points.
This allows us to obtain good accuracy without taking the points too close to $0$.
We formalize this using approximation theory arguments and properties of Chebyshev polynomials.

The algorithm estimates the Tsallis entropy 
with error parameter $\teps = \eps / (12 (k+1)^3 \log m)$
using points $y_0, y_1, \ldots, y_k$, chosen as follows.
First, the number of points is $k = \log(1/\eps) + \log \log m$.
Their values are chosen to be an affine transformation
of the extrema of the $k\th$ Chebyshev polynomial.
Formally, set $\ell = 1/(2(k+1) \log m)$ and define the map
$f : \bR \rightarrow \bR$ by
\begin{equation}
\EquationName{ydef}
f(y) ~=~
\frac{(k^2 \, \ell) \cdot y \:-\: \ell \cdot (k^2+1) }{2k^2 + 1},
\oct\text{then define}\oct
y_i ~=~ f\big( \cos( i \pi / k ) \big).
\end{equation}

The correctness of this algorithm is proven in \Section{multicorrect}.
Let us now analyze the space requirements.
Computing the estimate $\tF_{1+y_i}$ uses only $\tO(\teps^{-2}/ \log m)$ words of space
by \Fact{stable_moment} since $\abs{y_i} \leq 1/(2 (k+1) \log m)$ for each $i$.
By our choice of $k=\tO(1)$ and $\teps$, the total space required is
$\tO(\eps^{-2} \log m)$ words.

We argue correctness of this algorithm in \Section{multicorrect}.
Before doing so, we must mention some properties of Chebyshev polynomials.

%%%%%%%%%%%%%%%%%%%%%%%%%%%%%%%%%%%%%%%%%%%%%%%%%%%%%%%%%%%%%%%%%%%%%%%%%%%%%%%%
\subsubsection{Chebyshev Polynomials}

Our algorithm exploits certain extremal properties of Chebyshev polynomials.
For a basic introduction to Chebyshev polynomials we
refer the reader to \cite{Phillips,PhilTay96,Rivlin81}.
A thorough treatment of these objects can be found in \cite{Rivlin}.
We now present the background relevant for our purposes.

\begin{definition}
The set $\mathcal{P}_k$ consists of all polynomials of degree at most
$k$ with real coefficients.  The Chebyshev polynomial of degree $k$,
$P_k(x)$, is defined by the recurrence
$$P_k(x) = 
\begin{cases}
	1,& \hbox{$k=0$}\\
	x,& \hbox{$k=1$} \\
        2xP_{k-1}(x) - P_{k-2}(x),& \hbox{$k\ge 2$}
         \end{cases}
$$
\noindent and satisfies $|P_k(x)| \le 1$ for all $x\in [-1, 1]$.  The
value $|P_k(x)|$ equals $1$ for exactly $k+1$ values of $x$ in
$[-1,1]$;
specifically, $P_k(\eta_{j,k}) = (-1)^j$ for $0 \le j \le k$, where
$\eta_{j,k} = \cos(j\pi / k)$.  The set $\mathcal{C}_k$ is defined
as the set of all polynomials $p\in\mathcal{P}_k$ satisfying
$\max_{0\le j\le k} |p(\eta_{j,k})| \le 1$.
\end{definition}

\begin{fact}[Extremal Growth Property]
\FactName{extremal-growth}
If $p\in\mathcal{C}_k$ and $|t| \ge 1$, then $|p(t)| \le |P_k(t)|$.
\end{fact}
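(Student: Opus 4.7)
The plan is to argue by contradiction, combining a careful sign-tracking argument with a root count. Suppose for some $|t_0| \ge 1$ we had $|p(t_0)| > |P_k(t_0)|$. First I would dispose of the boundary: if $t_0 = \pm 1$, then $t_0$ coincides with $\eta_{0,k}$ or $\eta_{k,k}$, at which $|p(t_0)| \le 1 = |P_k(t_0)|$ by definition of $\mathcal{C}_k$, contradicting the assumption. So I may assume $|t_0| > 1$ strictly. Since all zeros of $P_k$ lie in $(-1,1)$, $P_k(t_0) \ne 0$, and I may set $\lambda = p(t_0)/P_k(t_0)$, noting that $|\lambda| > 1$.

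The key step is to study the auxiliary polynomial $q(x) = \lambda P_k(x) - p(x) \in \mathcal{P}_k$. Evaluated at the extrema, $q(\eta_{j,k}) = \lambda(-1)^j - p(\eta_{j,k})$. Because $|\lambda(-1)^j| = |\lambda| > 1 \ge |p(\eta_{j,k})|$ with strict inequality on the left, the sign of $q(\eta_{j,k})$ is determined by the dominant term $\lambda(-1)^j$ and therefore strictly alternates with $j$. Applying the intermediate value theorem to each of the $k$ consecutive intervals $(\eta_{j+1,k}, \eta_{j,k})$ for $0 \le j \le k-1$ produces at least $k$ zeros of $q$ inside $[-1,1]$.

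Finally, by construction $q(t_0) = \lambda P_k(t_0) - p(t_0) = 0$, and since $|t_0| > 1$ this root is distinct from the $k$ interior zeros. Hence $q$ has at least $k+1$ zeros while $\deg q \le k$, forcing $q \equiv 0$ and hence $p = \lambda P_k$. But then $|p(\eta_{j,k})| = |\lambda| > 1$, contradicting $p \in \mathcal{C}_k$.

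No step is really a serious obstacle; the only delicate point is the sign-alternation step, where one must be sure that $q$ does not accidentally vanish at any $\eta_{j,k}$, thereby destroying the count of interior zeros. This is exactly where the strict inequality $|\lambda| > 1$ (inherited from the strict hypothesis $|p(t_0)| > |P_k(t_0)|$) is used, via a one-line reverse triangle inequality. The remaining ingredients are only the location of the roots of $P_k$ and the degree bound on $q$.
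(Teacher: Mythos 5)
Your proof is correct. The paper does not prove this fact at all---it simply cites Rivlin (Ex.~1.5.11) and Rogosinski---and your argument is precisely the classical one underlying those references: normalize by $\lambda = p(t_0)/P_k(t_0)$ with $|\lambda|>1$, use the strict domination $|\lambda| > 1 \ge |p(\eta_{j,k})|$ to get alternating signs of $\lambda P_k - p$ at the $k+1$ Chebyshev extrema, count $k$ zeros in $(-1,1)$ plus the one at $t_0$, and contradict the degree bound. All the delicate points (reality of $\lambda$, nonvanishing of $P_k$ at $|t_0|>1$, the boundary case $t_0=\pm1$, and the final contradiction $|p(\eta_{j,k})|=|\lambda|>1$ when $q\equiv 0$) are handled correctly, so the argument is a complete, self-contained substitute for the citation.
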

\begin{proof}
See \cite[Ex.~1.5.11]{Rivlin} or Rogosinski \cite{Rogosinski}.
\end{proof}

\Fact{extremal-growth} states that all polynomials which are
bounded on certain ``critical points'' of the interval $I = [-1, 1]$
cannot grow faster than Chebyshev polynomials once leaving $I$.

%%%%%%%%%%%%%%%%%%%%%%%%%%%%%%%%%%%%%%%%%%%%%%%%%%%%%%%%%%%%%%%%%%%%%%%%%%%%%%%%
\subsubsection{Correctness}
\SectionName{multicorrect}

To analyze our algorithm, let us first suppose that our algorithm could exactly
compute the Tsallis entropies $T(y_i)$ for $0 \leq i \leq k$.
Let $p$ be the degree-$k$ polynomial obtained by interpolating at the
chosen points,
i.e., $p(y_i) = T(y_i)$ for $0 \le i \le k$.
The algorithm uses $p(0)$ as its estimate for $T(0)$.
We analyze the accuracy of this estimate using the following fact.
Recall that the notation $g^{(k)}$ denotes the $k\th$ derivative of a function $g$.

\begin{fact}[Phillips and Taylor~\cite{PhilTay96}, Theorem 4.2]
\FactName{interpolation-error}
Let $y_0,y_1,\ldots,y_k$ be points in the interval $[a,b]$.
Let $g : \bR \rightarrow \bR$ be such that
$g^{(1)}, \ldots, g^{(k)}$ exist and are continuous on $[a,b]$,
and $g^{(k+1)}$ exists on $(a,b)$.
Then, for every $y \in [a, b]$, there exists $\xi_y \in (a,b)$ such that
\[
g(y) - p(y) ~=~
\left(\prod_{i=0}^k \: (y-y_i)\right)\frac{g^{(k+1)}(\xi_y)}{(k+1)!}
\]
where $p(y)$ is the degree-$k$ polynomial obtained by
interpolating the points $(y_i,g(y_i))$, $0 \le i \le k$.
\end{fact}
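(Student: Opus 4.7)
The plan is to use the classical Rolle's-theorem argument for the Lagrange form of the interpolation remainder. The hypotheses on $g$ are exactly what are needed to iterate Rolle's theorem $k+1$ times: $g^{(1)},\dots,g^{(k)}$ are continuous on $[a,b]$ and $g^{(k+1)}$ exists on $(a,b)$.

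First I would dispose of the trivial case $y \in \{y_0,\dots,y_k\}$: then $p(y)=g(y)$ by the interpolation condition and $\prod_i(y-y_i)=0$, so both sides of the claimed identity are $0$ and any $\xi_y$ works. So fix $y \in [a,b] \setminus \{y_0,\dots,y_k\}$, and set $\omega(t) = \prod_{i=0}^k (t-y_i)$, which is a monic polynomial of degree $k+1$. Next I would introduce the auxiliary function
\[
\phi(t) \;=\; g(t) - p(t) - \lambda\,\omega(t), \qquad \lambda \;=\; \frac{g(y)-p(y)}{\omega(y)},
\]
where $\lambda$ is well-defined because $\omega(y)\ne 0$, and is chosen precisely to force $\phi(y)=0$. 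By construction $\phi$ also vanishes at each $y_i$, so $\phi$ has at least $k+2$ distinct zeros in $[a,b]$.

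Now I would apply Rolle's theorem iteratively. Between every two consecutive zeros of $\phi$, $\phi^{(1)}$ has a zero, so $\phi^{(1)}$ has at least $k+1$ zeros in $(a,b)$; repeating, $\phi^{(j)}$ has at least $k+2-j$ zeros, and therefore $\phi^{(k+1)}$ has at least one zero $\xi_y \in (a,b)$. I then compute $\phi^{(k+1)}$ directly: $p\in\mathcal{P}_k$ gives $p^{(k+1)}\equiv 0$, and $\omega$ monic of degree $k+1$ gives $\omega^{(k+1)}\equiv (k+1)!$, so $\phi^{(k+1)}(t) = g^{(k+1)}(t) - \lambda\,(k+1)!$. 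Evaluating at $\xi_y$ yields $\lambda = g^{(k+1)}(\xi_y)/(k+1)!$, and substituting back into the definition of $\lambda$ gives exactly
\[
g(y) - p(y) \;=\; \omega(y)\,\frac{g^{(k+1)}(\xi_y)}{(k+1)!}.
\]

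The argument is standard textbook material, so the main ``obstacle'' is bookkeeping rather than mathematical depth: one must verify that the stated smoothness of $g$ is sufficient to iterate Rolle's theorem (continuity of $\phi^{(j)}$ on $[a,b]$ for $j\le k$ and existence of $\phi^{(k+1)}$ on $(a,b)$ all follow from the hypothesis, since $p$ and $\omega$ are polynomials), and one must be careful to keep the zero $\xi_y$ strictly inside $(a,b)$ when invoking Rolle's theorem.
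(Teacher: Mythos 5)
Your argument is the standard Rolle's-theorem derivation of the Cauchy/Lagrange interpolation remainder, and it is correct. The paper does not prove this statement at all — it simply cites Phillips and Taylor, Theorem 4.2 — so there is no paper proof to compare against; the cited textbook's proof is precisely the argument you give (auxiliary function $\phi = g - p - \lambda\omega$ with $k+2$ zeros, iterate Rolle $k+1$ times, compute $\phi^{(k+1)}$). The only bookkeeping item worth flagging is that the $y_i$ must be pairwise distinct for the interpolating polynomial $p$ to be well-defined and for the $k+2$ zeros of $\phi$ to be distinct; this is implicit in the statement and in your proof, but deserves a word.
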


To apply this fact, a bound on $\abs{ T^{(k+1)}(y) }$ is needed.
It suffices to consider the interval $[-\ell,0)$,
since the map $f$ defined in \Equation{ydef} sends $-1\mapsto -\ell$ and $1\mapsto
-\ell/(2k^2+1)$,
and hence \Equation{ydef} shows that $y_i \in [-\ell,0)$ for all $i$.
Since $\ell = 1/(2 (k+1) \log m)$, it follows from the following lemma that
\begin{equation}
\EquationName{Tderivbound}
\abs{ T^{(k+1)}(y_i) } ~\le~ \frac{4 \, \log^{k+1}(m) \, H}{k+2}
\quad\quad
\forall\, 0 \leq i \leq k.
\end{equation}

\begin{lemma}
\LemmaName{tsallis-conv}
Let $\eps$ be in $(0, 1/2]$.
Then,
$ |T^{(k)}(-\smallfrac{\eps}{(k+1)\log m})| \:\le\: 4 \, \log^k (m) \,
H/(k+1) $.
\end{lemma}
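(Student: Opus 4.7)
My plan is to represent $T(a)$ as an integral, differentiate under the integral sign, and then bound the resulting integrand using the constraint $x_i \ge 1/m$.

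First, I would introduce the auxiliary function $S(a) = \sum_i x_i^{1+a}$, so that $T(a) = (1-S(a))/a$ and $S(0)=1$. Because $S(0) - S(a) = -\int_0^a S'(u)\,du$, the substitution $u=at$ gives the key identity
\[
T(a) \;=\; -\int_0^1 S'(at)\,dt.
\]
Since $S$ is a finite sum of the entire functions $a \mapsto x_i^{1+a} = e^{(1+a)\log x_i}$, I can freely differentiate under the integral sign $k$ times to obtain
\[
T^{(k)}(a) \;=\; -\int_0^1 t^k\, S^{(k+1)}(at)\,dt,
\qquad S^{(j)}(u) \;=\; \sum_i x_i^{1+u}(\log x_i)^j.
\]

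Next, I would evaluate at $a = -\eps/((k+1)\log m)$. For any $t \in [0,1]$ and any $i$ with $x_i > 0$, the hypothesis $x_i \ge 1/m$ gives $|\log x_i| \le \log m$, so $|at\log x_i| \le |a|\log m = \eps/(k+1) \le 1/2$ (using $\eps \le 1/2$). Hence $x_i^{at} = e^{at\log x_i} \le e^{1/2} \le 2$, and therefore $x_i^{1+at} \le 2 x_i$. Plugging this into the triangle-inequality estimate of the integral yields
\[
\bigl|T^{(k)}(a)\bigr| \;\le\; \int_0^1 t^k \sum_i x_i^{1+at}|\log x_i|^{k+1}\,dt \;\le\; \frac{2}{k+1}\sum_i x_i\,|\log x_i|^{k+1}.
\]

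Finally, I would bound the sum by pulling out $k$ factors of $|\log x_i| \le \log m$:
\[
\sum_i x_i\,|\log x_i|^{k+1} \;\le\; (\log m)^{k}\sum_i x_i\,|\log x_i| \;=\; (\log m)^{k}\, H,
\]
which gives $|T^{(k)}(a)| \le 2(\log m)^k H/(k+1) \le 4(\log m)^k H/(k+1)$, as claimed. There is no real obstacle here; the only subtlety is recognising the integral representation of $T$ that converts a messy quotient derivative into a clean convex combination of derivatives of $S$. Everything after that is routine bounding, with the factor of $2$ in $x_i^{at}\le 2$ being exactly what lets us absorb the unknown sign and location of $a$ into the final constant $4$.
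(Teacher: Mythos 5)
Your proof is correct, and it takes a genuinely different route from the paper's. The paper first derives a closed-form expression for $T^{(k)}(a)$ by induction (its Claim on the Tsallis derivatives), then studies $S_k(a) = a^{k+1}T^{(k)}(a)$, computes all of its derivatives, and applies a quantitative version of l'H\^opital's rule $k+1$ times, together with sign and monotonicity considerations (splitting into odd and even $k$), to show that $T^{(k)}$ at the point $-\eps/((k+1)\log m)$ is close to its limit $G_{k+1}(0)/(k+1)$, which is then bounded by $\log^k(m)\,H$. Your integral (Taylor-remainder) representation $T(a) = -\int_0^1 S'(at)\,dt$ collapses all of that machinery: it exhibits $T^{(k)}(a) = -\int_0^1 t^k S^{(k+1)}(at)\,dt$ as a weighted average of $S^{(k+1)}$ over $[a,0]$, after which the bound $x_i^{at}\le e^{\eps/(k+1)}\le 2$ and $|\log x_i|\le\log m$ give the claim in two lines, with no l'H\^opital argument, no induction, no parity case split, and it works uniformly for either sign of $a$; it even yields the slightly better constant $2$ in place of $4$, and recovers the paper's limit value at $a=0$ for free. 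Both arguments ultimately rest on the same quantity, namely $\sum_i x_i^{1+u}\lvert\log x_i\rvert^{k+1}$ for small $u$, so what your approach buys is brevity and robustness rather than new content; the only point requiring care, which you correctly note, is justifying differentiation under the integral sign, immediate here because the integrand is a finite sum of entire functions integrated over a compact interval.
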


By \Fact{interpolation-error} and \Equation{Tderivbound}, we have
\begin{eqnarray}
\nonumber
|T(0) - p(0)|
%&=& |H - p(0)|\\
&\le& |\ell|^{k+1} \cdot \frac{4 \, \ln^{k+1}(m) \, H}{(k+1)! \, (k+2)}\\
\nonumber
&=&
\frac{1}{2^{k+1}\ln^{k+1}(m)} \cdot \frac{4 \, \ln^{k+1}(m) \,
  H}{(k+1)! \, (k+2)}\\
\EquationName{Tandp}
&\le& \frac{2\eps}{(k+1)! \, (k+2)} ~~\le~~ \frac{\eps}{2} \,,
\end{eqnarray}
since $2^k = (\log m)/\eps$ and $H \leq \log m$.
This demonstrates that our algorithm computes a good approximation of
$T(0) = H$,
under the assumption that the values $T(y_i)$ can be computed exactly.
The remainder of this section explains how to remove this assumption.

\Algorithm{additive} does not compute the exact values $T(y_i)$,
it only computes approximations.
The accuracy of these approximations can be determined as follows.
Then
\begin{equation}
\EquationName{onlyapprox}
\tT(y_i) ~=~ \frac{1 - \tF_{1+y_i} / ||A||_1^{1+y_i}}{y_i}
~\leq~ T(y_i) \:-\: \teps \cdot \frac{\sum_{j=1}^n x_j^{1+y_i}}{y_i}.
\end{equation}
Now recall that $x_j \ge 1/m$ for each $i$ and $y_i \geq -\ell$, so that
$x_i^{y_i} \le m^{\ell} = m^{1/2 (k+1) \ln m} < 2$.
Thus $\sum_{j=1}^n x_j^{1+y_i} \leq 2 \, \sum_{j=1}^n x_j = 2$.
Since $\teps / \ell = \eps / (6 k^2)$, we have
\begin{equation}
\EquationName{Tapprox}
T(y_i)
~\leq~
\tT(y_i)
~\leq~ T(y_i) + \eps / (3 k^2).
\end{equation}

Now let $\tilde{p}(x)$ be the degree-$k$ polynomial defined by
$\tilde{p}(y_i) = \tilde{T}(y_i)$ for all $0 \le i \le k$.
Then \Equation{Tapprox} shows that $r(x) = p(x) - \tilde{p}(x)$ is a
polynomial 
of degree at most $k$ satisfying $\abs{r(y_i)} \le \eps/(3 k^2)$
for all $0 \le i \le k$.

Let $P : \bR \rightarrow \bR$ be the Chebyshev polynomial of degree $k$,
and let $Q(y) = P\big( f^{-1}(y) \big)$ be an affine transformation of $P$.
Then the polynomial $r'(y) = (3 k^2 / \eps) \cdot r(y)$ satisfies
$\abs{r'(y_i)} \leq \abs{Q(y_i)}$ for all $0 \leq i \leq k$.
Thus \Fact{extremal-growth} implies that $|r'(0)| \le |Q(0)|$.
By definition of $Q$, $Q(0) = P(f^{-1}(0)) = P(1+1/k^2)$.
The following lemma shows that this is at most $e^2$.

\begin{lemma}
\LemmaName{eval-cheby}
Let $P$ be the $k\th$ Chebyshev polynomial, $k \ge 1$, and let $x = 1+k^{-c}$.
Then
\[
|P_k(x)|
~\le~ \prod_{j=1}^k \left(1+\frac{2j}{k^c}\right)
~\le~ e^{2k^{2-c}}.
\]
\end{lemma}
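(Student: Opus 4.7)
The statement splits into two claims, and I would handle them separately.

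The second inequality is immediate: take logarithms and apply $\ln(1+t)\le t$ to get
$$\log\prod_{j=1}^k\Bigl(1+\frac{2j}{k^c}\Bigr) \;=\; \sum_{j=1}^k\log\Bigl(1+\frac{2j}{k^c}\Bigr) \;\le\; \sum_{j=1}^k \frac{2j}{k^c} \;=\; \frac{k(k+1)}{k^c} \;\le\; 2k^{2-c},$$
where the last step uses $k+1\le 2k$ for $k\ge 1$.

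For the first inequality, note first that $x=1+k^{-c}\ge 1$ implies $P_k(x)\ge 0$, so the absolute value can be dropped. The plan is to compare Taylor expansions around $x=1$ coefficient-by-coefficient. The key tool is the root factorization $P_k(x)=2^{k-1}\prod_{j=1}^k(x-\omega_j)$, where $\omega_j=\cos((2j-1)\pi/(2k))$; combined with $P_k(1)=1$, one obtains the clean product form
$$P_k(1+u)\;=\;\prod_{j=1}^k\Bigl(1+\frac{u}{1-\omega_j}\Bigr)\;=\;\prod_{j=1}^k(1+c_j u),\qquad c_j=\frac{1}{2\sin^2((2j-1)\pi/(4k))}.$$
Using $P_k'(1)/P_k(1)=k^2$ (standard Chebyshev fact), one finds $\sum_j c_j=k^2$. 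Expanding the product gives the coefficients
$[u^i]P_k(1+u)=e_i(c_1,\ldots,c_k)$, and by matching with the known Taylor coefficients of $P_k$ at $1$, these simplify to $\frac{2^i}{(2i)!}\prod_{\ell=0}^{i-1}(k^2-\ell^2)$. On the other hand $[u^i]\prod_j(1+2ju)=2^i\sigma_i(k)$, with $\sigma_i(k)=e_i(1,2,\ldots,k)$. Both sides are polynomials in $u$ with nonnegative coefficients, so the desired inequality reduces to the coefficient-wise bound
$$\prod_{\ell=0}^{i-1}(k^2-\ell^2)\;\le\;(2i)!\,\sigma_i(k),\qquad 0\le i\le k.$$

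The hard part is this last combinatorial inequality. I would prove it either (a) by induction on $i$, where the step requires the ratio estimate $\sigma_{i+1}(k)/\sigma_i(k)\ge (k^2-i^2)/((2i+1)(2i+2))$, provable via Newton's identities and manipulation of power sums; or (b) via a majorization argument: although individual $c_j$'s can far exceed the corresponding $2j$'s, the sequence $(c_1,\ldots,c_k)$ after augmenting its largest coordinate by $k$ (which brings its sum to $k(k+1)=\sum 2j$) majorizes $(2,4,\ldots,2k)$ in the usual sense, and Schur-concavity of elementary symmetric polynomials then gives $e_i(c_1,\ldots,c_k)\le e_i(2,4,\ldots,2k)$ since $e_i$ is also monotone in each argument. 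Verifying that majorization actually holds—checking all $k$ partial sums—is the main obstacle, and it rests on the fact that $c_1\asymp k^2$ is much larger than $2k$ for large $k$, so the augmented $c$-sequence is far more concentrated than $(2j)$.
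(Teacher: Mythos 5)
Your handling of the second inequality is fine and matches the standard calculation: $\log\prod_j(1+2j/k^c) \le \sum_j 2j/k^c = k(k+1)/k^c \le 2k^{2-c}$.

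For the first inequality, however, there is a genuine gap. Your reduction is correct as far as it goes: using the root factorization $P_k(x) = 2^{k-1}\prod_j(x-\omega_j)$ with $\omega_j = \cos((2j-1)\pi/(2k))$, writing $P_k(1+u) = \prod_j(1 + c_j u)$ with $c_j = 1/(1-\omega_j)>0$, and recognizing the Taylor coefficients $[u^i]\,P_k(1+u) = \frac{2^i}{(2i)!}\prod_{\ell=0}^{i-1}(k^2-\ell^2)$ (which does follow from $P_k = {}_2F_1(-k,k;\tfrac12;\tfrac{1-x}{2})$), the claim reduces to the coefficient-wise bound $\prod_{\ell=0}^{i-1}(k^2-\ell^2) \le (2i)!\,e_i(1,\dots,k)$, since both sides are power series in $u>0$ with nonnegative coefficients. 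But you never prove this combinatorial inequality. Plan (a) rests on the ratio estimate $\sigma_{i+1}/\sigma_i \ge (k^2-i^2)/((2i+1)(2i+2))$, which you assert but do not establish; plan (b) reduces to a majorization claim that you yourself explicitly flag as unverified. As written, the argument is incomplete at its central step.

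The paper's proof bypasses all of this symmetric-function machinery. It establishes by induction on $j$ the pair of inequalities $P_{j-1}(x) \le P_j(x) \le P_{j-1}(x)(1 + 2j/k^c)$ for $x = 1+k^{-c}$: the left one follows from \Fact{extremal-growth} applied to $P_{j-1} \in \mathcal{C}_j$, and the right one follows from the three-term recurrence, writing $P_{j+1}(x) = P_j(x)(1+2/k^c) + (P_j(x)-P_{j-1}(x)) \le P_j(x)(1 + 2(j+1)/k^c)$ using the inductive hypothesis and $P_{j-1}\le P_j$. Telescoping the right inequality gives $P_k(x) \le \prod_{j=1}^k(1+2j/k^c)$ directly. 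You should look for an argument that leverages the Chebyshev recurrence in this way rather than expanding around $x=1$; the root-factorization route, even if it can be pushed through, is considerably harder and in your write-up is not actually closed.
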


\noindent Thus $\abs{r'(0)} \leq e^2$ and $\abs{r(0)} \leq \eps/2$ since $k \ge
2$.
To conclude, we have shown
$\abs{p(0) - \tilde{p}(0)} = \abs{r(0)} \le \eps/2$.
Combining with \Equation{Tandp} via the triangle inequality shows
$\abs{\tilde{p}(0) - H} \leq \eps$.

\comment{
\begin{theorem}[\textbf{Noisy Interpolation
    Theorem}]\label{thm:noisy-interpolation}
Let $p_k$ be the degree-$k$ polynomial obtained via interpolation of
the points $(x_i,f(x_i))$ and $p^*_k$ be the degree-$k$ polynomial
obtained via interpolation of the points $(x_i,f^*(x_i))$ for $0 \le i
\le k$ where $|f(x_i) - f^*(x_i)| \le \eps$ for all $i$.  Then,
for any $x\in\mathbb{R}$,

\[ |p_k(x) - p^*_k(x)| \le \eps\lambda_k(x) \]

\noindent where $\lambda_k(x) = \sum_{i=0}^k |L_i(x)|$.  The function
$L_i(x)$ is the $i${\em th} {\em Lagrange basis polynomial} and is defined
as
\[ L_i(x) = \prod_{j=0,j\neq i}^k\left(\frac{x-x_j}{x_i-x_j}\right) \]
\end{theorem}
\begin{proof}
The content of the theorem statement and proof appears in Section 4.9
of \cite{PhilTay96}.
\end{proof}

\begin{theorem}\label{thm:error-theorem}
Let $0 < \eps \le 1/2$ and set $\eps' = \eps/\ln m$.  Let
$\widetilde{T(-j\eps'/k)}$ be an estimate of the
$(1-j\eps'/k)$th Tsallis entropy
satisfying $|T(-j\eps'/k) -
\widetilde{T(-j\eps'/k)}| \le
\gamma$ for $j=1,\ldots,k$.  Let $p^*_{k-1}$ be the degree-$(k-1)$ polynomial
obtained by interpolation of the $k$ points $(-j\eps'/k,
\widetilde{T(-j\eps'/k)})$.  Then

\[ |p^*_{k-1}(0) - H| < \gamma 2^k +
\frac{4\eps^k H}{(k+1)k^k} \]
\end{theorem}
\begin{proof}
We first note
\[ |p_{k-1}(0) - H| \le \frac{4\eps^k H}{(k+1)k^k} \]
where $p_{k-1}$ is the degree-$(k-1)$ polynomial obtained by
interpolation of the $k$ points
$z_j = (-j\eps'/k,T(-j\eps'/k))$, $1 \le j \le k$.  This
inequality follows directly from \Lemma{tsallis-conv} and
Theorem \ref{thm:interpolation-error}.  Setting $z =
(z_1,\ldots,z_k)$, one can calculate

\[ |L_i(z)| =
\frac{k!\eps'^{k-1}}{ik^{k-1}}\frac{k^{k-1}}{(i-1)!(k-i)!\eps'^{k-1}}
= \binom{k}{i}\]

Thus, the term $\lambda_k(z)$ from Theorem
\ref{thm:noisy-interpolation} equals $\sum_{i=1}^k \binom{k}{i} = 2^k
- 1$, and so we have

\[ |p^*_{k-1}(0) - p_{k-1}(0)| \le \gamma (2^k - 1) < \gamma 2^k \]

\noindent by Theorem \ref{thm:noisy-interpolation}.  The claim follows
by the triangle inequality.
\end{proof}

\begin{corollary}\label{cor:main-cor}
For any fixed constant $\gamma > 0$, we can additively estimate
Shannon entropy in $O(\eps^{-(2 + \gamma)}\log^{1 + \gamma} m)$
words of space for any $0 < \eps \le 1$.
\end{corollary}
\begin{proof}
Without loss of generality we assume $\eps < 1/2$.
For $\alpha > 0$ write $T(-\alpha) = -(1 - F_{1 - \alpha})/\alpha$,
where $F_{1 -
  \alpha}$ is the
$(1 - \alpha)$th moment of $x$.  If $\tilde{F}_{1- \alpha}$ is a
$(1+\eps)$-approximation of $F_{1 - \alpha}$, then
$\tilde{T}(\alpha) = -(1-\tilde{F}_{1 - \alpha})/\alpha$ satisfies
\[ |T(-\alpha) - \tilde{T}(-\alpha)| \le
\frac{\eps}{\alpha}F_{1-\alpha} \]
If we choose $\alpha \le 1/(2\ln m)$ then $F_{1-\alpha} \le 2$ since
$x_i \ge 1/m$.  Thus, a $(1+\eps\alpha/2)$-approximation to $F_{1
  - \alpha}$
yields an additive $\eps$-approximation to $T(-\alpha)$.  For
$\alpha < 1$, a $(1+\eps\alpha/2)$-approximation to $F_{1 -
  \alpha}$ can be obtained in $O(\eps^{-2}\alpha^{-1})$ words
of space, which follows from the variance calculation of the geometric
mean estimator in Lemma 3 of \cite{Li08a}.

Now, let $k$ be the smallest integer greater than $3$ such that
$1/k < \gamma$.
Obtain
an additive $(\eps/2^{k+1})$-approximation $\tilde{T}_j$ of
$T(-j(\eps/\ln m)^{1/k}/(k\ln m))$ for $j = 1,\ldots, k$ and
interpolate a
degree-$(k-1)$ polynomial through the points $(-j\eps/(k\ln m),
\tilde{T}_j)$.  Apply Theorem \ref{thm:error-theorem}, using the facts
that $H \le \ln m$ and that the $(k+1)$ term in the denominator
of the error bound in Theorem \ref{thm:error-theorem} is at least $4$,
so that the right summand of the error term is at most
$\eps/k^k \le \eps/2$.  The left summand is also at most
$\eps/2$, and thus the total additive error is at most
$\eps$.
\end{proof}
}

%%%%%%%%%%%%%%%%%%%%%%%%%%%%%%%%%%%%%%%%%%%%%%%%%%%%%%%%%%%%%%%%%%%%%%%%%%%%%%%%
\subsection{Multiplicative Approximation of Shannon Entropy}
\SectionName{mult}

We now discuss how to extend the multi-point interpolation algorithm
to obtain a multiplicative approximation of Shannon entropy.
The main tool that we require is a multiplicative estimate of Tsallis
entropy, rather than the additive estimates used above.
\Section{renyitsallis} shows that the required
multiplicative estimates can be efficiently computed;
\Section{residual} provides tools for doing this.

The modifications to the multi-point interpolation algorithm are as follows.
We set the number of interpolation points to be $k =
\max\{5, \log(1/\eps)\}$, then argue as in \Equation{Tandp} to have
$|T(0) -
p(0)| \le \eps H/2$, where $p$ is the interpolated polynomial of
degree $k$. We
then use \Algorithm{additive},
but we compute $\tilde{T}(y_i)$ to be a
$(1+\tilde{\eps})$-multiplicative estimation of $T(y_i)$ instead of an
$\tilde{\eps}$-additive estimation by using \Theorem{mult_tsallis}.
By arguing as in \Equation{Tapprox}, we have $T(y_i) \le \tT(y_i)
\le T(y_i) + \eps
T(y_i) / (3 k^2) \le T(y_i) + 4\eps
H / (3 k^2)$. The final inequality follows from
\Lemma{tsallis-conv} with $k=0$.  From this point, the argument
remains identical as \Section{multicorrect} to show that
$|p(0)-\tilde{p}(0)| \le 4\eps e^2H/(3k^2) < \eps H / 2$, yielding
$|\tilde{p}(0) - H| \le \eps H$ by the triangle inequality.

\section{Estimating Residual Moments}
\SectionName{residual}

To multiplicatively approximate Shannon entropy, the algorithm of \Section{mult} requires a
multiplicative approximation of Tsallis entropy.
\Section{renyitsallis} shows that the required quantities can be computed.
The main tool needed is an efficient algorithm for estimating \newterm{residual moments}.
That is the topic of the present section.

Define the residual $\alpha\th$ moment to be
$\Fres_\alpha := \sum_{i=2}^n |A_i|^\alpha = F_\alpha - \abs{A_1}^\alpha$,
where we reorder the items such that $|A_1| \ge |A_2| \ge \ldots \ge |A_n|$.
In this section, we present two efficient algorithms to compute
a $1+\epsilon$ multiplicative approximation to $\Fres_\alpha$ for $\alpha \in (0,2]$.
These algorithms succeed with constant probability under the
assumption that a heavy hitter exists,
say $\abs{A_1} \geq \frac{4}{5} \norm{A}_1$.
The algorithm of \Section{res2} is valid only in the strict turnstile model.
Its space usage has a complicated dependence on $\alpha$;
for the primary range of interest, $\alpha \in [1/3,1)$,
the bound is
$O((\eps^{-1/\alpha} + \eps^{-2}(1-\alpha) + \log n)\log m).$
The algorithm of \Section{res1} is valid in the general update model
and uses $\tO(\eps^{-2} \log m)$ bits of space.

%%%%%%%%%%%%%%%%%%%%%%%%%%%%%%%%%%%%%%%%%%%%%%%%%%%%%%%%%%%%%%%%%%%%%%%%%%%%%%%%
\subsection{Finding a Heavy Element}
\SectionName{heavy}

A subroutine that is needed for both of our algorithms is to detect whether a
heavy hitter exists ($\abs{A_i} \geq \frac{4}{5} \norm{A}_1$)
and to find the identity of that element.
We will describe a procedure for doing so in the general update model.
We use the following result, which essentially follows from the count-min sketch~\cite{CM05}.
For completeness, a self-contained proof is given in \Appendix{res2}.

\begin{fact}
\FactName{good_hash}
Let $w \in \bR_+^n$ be a weight vector on $n$ elements
so that $\sum_i w_i = 1$.
There exists a family $\cH$ of hash functions
mapping the $n$ elements to $O(1/\eps)$ bins
with $\card{\cH} = n^{O(1)}$
such that a random $h \in \cH$ satisfies the following two properties
with probability at least $15/16$. \\
(1) If $w_i \geq 1/2$ then the weight of elements that collide with element $i$ is
at most $\eps \cdot \sum_{j \neq i} w_j$.
(2) If $\max_i w_i < 1/2$ then the weight of elements hashing to each bin is at most $3/4$.
\end{fact}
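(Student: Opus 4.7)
The plan is to let $\cH$ be a standard pairwise-independent hash family from $[n]$ into $[B]$ bins, where $B$ is a sufficiently large constant multiple of $1/\eps$; textbook constructions (e.g.\ linear functions over a prime field of size $\Theta(n)$) give $\card{\cH} = n^{O(1)}$. I would then bound the failure probability of each of the two properties by $1/32$ and union-bound over the two failure events.

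Property~(1) is routine: for each of the at-most-two indices $i$ with $w_i \geq 1/2$, the colliding weight $C_i = \sum_{j\neq i,\,h(j)=h(i)} w_j$ has $\expect{C_i} = (1/B)\sum_{j\neq i}w_j$ by pairwise independence, so Markov's inequality gives $\prob{C_i > \eps\sum_{j\neq i}w_j} \leq 1/(\eps B)$, which is small once $B/\eps$ is a large enough constant, and union-bounding over the at-most-two heavy indices preserves the desired $1/32$ budget.

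Property~(2) is the main obstacle. The naive approach of bounding $\prob{X_b > 3/4}$ per bucket $b$ and union-bounding fails: pairwise independence gives variance $O(1/B)$ for each $X_b = \sum_{j:h(j)=b}w_j$, so Chebyshev yields a per-bucket bound of $O(1/B)$, and the $B$-fold union bound degenerates to a constant. My key step is to avoid the per-bucket union bound entirely by a weighted reformulation:
\[
\mathbf{1}\!\left[\max_b X_b > \tfrac{3}{4}\right] \leq \sum_b \mathbf{1}\!\left[X_b > \tfrac{3}{4}\right] \leq \tfrac{4}{3}\sum_b X_b \, \mathbf{1}\!\left[X_b > \tfrac{3}{4}\right] = \tfrac{4}{3}\sum_i w_i \, \mathbf{1}\!\left[X_{h(i)} > \tfrac{3}{4}\right].
\]
Under the hypothesis $\max_i w_i < 1/2$, the event $X_{h(i)} > 3/4$ forces the colliding weight $T_i := X_{h(i)} - w_i$ past $1/4$; since $\expect{T_i} \leq 1/B$ by pairwise independence, Markov gives $\prob{T_i > 1/4} \leq 4/B$. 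Taking expectations in the displayed inequality yields $\prob{\max_b X_b > 3/4} \leq 16/(3B)$, which becomes $\leq 1/32$ for $B$ a sufficiently large constant multiple of $1/\eps$. The conceptual hurdle is identifying this weighted-averaging trick, which replaces the losing $B$-fold union bound by a $w$-weighted one and uses the hypothesis $\max_i w_i < 1/2$ at exactly the right moment; once it is in hand, the rest is routine Markov manipulation.
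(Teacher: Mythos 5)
Your proof is correct. For Property~(1) it coincides with the paper's argument: pairwise independence gives $\expect{C_i}=(1/B)\sum_{j\neq i}w_j$, and Markov's inequality finishes. Where you genuinely depart is Property~(2). The paper computes $\var{X_b}<\norm{w}_2^2/B<1/(2B)$ (using $\max_i w_i<1/2$), applies Chebyshev to each of the $B$ buckets, and union-bounds; this only yields failure probability $3/4$, which the paper then amplifies by taking ten independent hash functions. Your weighted-averaging identity $\mathbf{1}\bigl[\max_b X_b>3/4\bigr]\le\smallfrac{4}{3}\sum_i w_i\,\mathbf{1}\bigl[X_{h(i)}>3/4\bigr]$ replaces the losing $B$-fold bucket union bound by a $w$-weighted union bound over elements; after that, Markov on the colliding weight $T_i$ (which must exceed $1/4$ whenever $w_i<1/2$ and $X_{h(i)}>3/4$) gives failure probability at most $16/(3B)$ directly. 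This is cleaner: a single pairwise-independent family suffices, whereas the paper's boosting step implicitly enlarges $\cH$ to tuples of ten hash functions, which does not literally match the Fact's statement about a single $h:[n]\to[O(1/\eps)]$. The modest trade-off is that you forgo the second-moment control $\norm{w}_2^2<1/2$ in favor of the cruder bound $T_i>3/4-w_i>1/4$, but since Markov already lands in the required constant-probability regime and the constant factor is absorbed into $B=O(1/\eps)$, nothing is lost.
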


We use the hash function from \Fact{good_hash} with $\epsilon=1/10$
to partition the elements into bins,
and for each bin maintain a counter of the net $L_1$ weight that hash to it.
If there is a heavy hitter,
then the net weight in its bin is more than
$4/5 \,-\, \epsilon(1/5) > 3/4$.
Conversely, if there is a bin with at least $3/4$ of the weight then 
\Fact{good_hash} implies then there is a heavy element.

We determine the identity of the heavy element via a group-testing type of argument:
we maintain $\ceil{\log_2 n}$ counters, of which the $i\th$ counts
the number of elements which have their $i\th$ bit set.
Thus, if there is heavy element, we can determine its $i\th$ bit by checking
whether the fraction of elements with their $i\th$ bit is at least $3/5$.

%%%%%%%%%%%%%%%%%%%%%%%%%%%%%%%%%%%%%%%%%%%%%%%%%%%%%%%%%%%%%%%%%%%%%%%%%%%%%%%%
\subsection{Bucketing Algorithm}
\SectionName{res2}

In this section, we describe an algorithm for estimating $\Fres_\alpha$
that works only in the strict turnstile model.
The algorithm has several cases, depending on the value of $\alpha$.
%The third case uses an interesting technique of artificially inserting
%deletions into the stream.

\paragraph{Case 1: $\alpha=1$.}
This is the simplest case for our algorithm.
We use the hash function from \Fact{good_hash} to partition the elements into bins,
and for each bin maintain a count of the number of elements that hash to it.
If there is a bin with more than $3/4$
elements at the end of the procedure, then there is a heavy element,
and it suffices to return the total number of elements in the other
bins. Otherwise, we announce that there is no heavy hitter.
The correctness follows from \Fact{good_hash},
and the space required is $O\big(\frac{1}{\eps} \log m \big)$ bits.

\paragraph{Case 2: $\alpha=(0,\frac 13) \union (1,2]$.}
Again, we use the hash function from \Fact{good_hash} to partition the elements
into bins.
For each bin, we maintain a count of the number of elements,
and a sketch of the $\alpha\th$ moment using \Fact{stable_moment}.
The counts allow us to detect if there is a heavy hitter, as in Case 1.
If so, we combine the moment sketches of all bins other than the one
containing the heavy hitter; this gives a good estimate with constant probability.
By \Fact{stable_moment}, we need only 
$$O\left(\smallfrac{1}{\eps} \cdot
    \left(
        \smallfrac{\abs{\alpha - 1}}{\eps^2}
        +\smallfrac{1}{\eps}
    \right) \log m
    + \smallfrac{1}{\eps}\log m\right)=
O\left(\left(\smallfrac{|\alpha - 1|}{\eps^3}+\smallfrac{1}{\eps^2}\right)\log m\right)
\text{ bits.}$$

\paragraph{Case 3: $\alpha=[\frac 13,1)$.}
This is the most interesting case.
This idea is to keep just one sketch of the $\alpha\th$ moment
for the entire stream.
At the end, we estimate $\Fres_\alpha$ by artificially appending deletions
to the stream which almost entirely remove the heavy hitter from the sketch. 

The algorithm computes four quantities in parallel.
First, $\tFres_1 = (1 \pm \eps') \Fres_1$ with error parameter $\eps' = \eps^{1/\alpha}$,
using the above algorithm with $\alpha=1$.
Second, $\tF_\alpha = (1 \pm \eps) F_\alpha$ using \Fact{stable_moment}.
Third, $F_1$, which is trivial in the strict turnstile model.
Lastly, we determine the identity of the heavy hitter as in \Section{heavy}.

Now we explain how to estimate $\Fres_\alpha$.
The key observation is that $F_1 - \tFres_1$ is a very good approximation to $A_1$
(assume this is the heavy hitter).
So if we delete the heavy hitter $(F_1 - \tFres_1)$ times,
then there are at most $A_1 \leq \eps' \Fres_1$ remaining occurrences.
Define $\tFres_\alpha$ to be the value of $\tF_\alpha$ after processing these deletions.
Clearly $\Fres_\alpha \geq (\Fres_1)^\alpha$,
by concavity of the function $y \mapsto y^\alpha$.
On the other hand, the remaining occurrences of the heavy
hitter contribute at most $(\eps' \Fres_1)^\alpha$.
Hence, the remaining occurrences of the heavy hitter inflate $\Fres_\alpha$
by a factor of at most
$1 \:+\: (\eps' \cdot \Fres_1)^\alpha/(\Fres_1)^\alpha \:=\: 1 + \eps$.
Thus $\tFres_\alpha = (1+O(\eps)) \Fres_\alpha$, as desired.
The number of bits of space used by this algorithm is at most
$$O\left(
\smallfrac{1}{\eps'} \log m + \left(\smallfrac{1-\alpha}{\eps^2} +
  \smallfrac{1}{\eps}\right) \log m + \log n \log m 
\right)
~=~
O\left(\big(
\smallfrac{1}{\eps^{1/\alpha}} + \smallfrac{1-\alpha}{\eps^2} + \log n
\big)\log m\right).$$

%%%%%%%%%%%%%%%%%%%%%%%%%%%%%%%%%%%%%%%%%%%%%%%%%%%%%%%%%%%%%%%%%%%%%%%%%%%%%%%%
\subsection{Geometric Mean Algorithm}
\SectionName{res1}

This section describes an algorithm for estimating $\Fres_\alpha$ in the general update model.
At a high level, the algorithm uses a hash function to partition the stream elements into two
substreams, then separately estimates the moment $F_\alpha$ for the substreams.
The estimate for the substream which does not contain the heavy hitter
yields a good estimate of $\Fres_\alpha$.
We improve accuracy of this estimator by averaging many independent trials.
Detailed description and analysis follow.

We use Li's \newterm{geometric mean estimator} \cite{Li08b} for
estimating $F_\alpha$ since it is unbiased (its being unbiased will
be useful later).
The geometric mean estimator is defined as follows.
Let $k$ and $\alpha$ be parameters.
We let $y = R \cdot A$, where $A$ is the vector representing the stream
and $R$ is a $k \times n$ matrix whose entries are i.i.d.\ samples
from an $\alpha$-stable distribution.
Define
$$
\tF_\alpha ~=~
\frac{ \prod_{j=1}^k \, \abs{y_j}^{\alpha/k} }
{ [ \frac{2}{\pi} \Gamma(\frac{\alpha}{k} ) \Gamma(1-\frac{1}{k}) \sin(\frac{\pi\alpha}{2k}) ]^k }.
$$
The space required to compute this estimator is easily seen to be
$O(k \cdot \log m)$ bits.  Li analyzed the variance of $\tF_\alpha$ as
$k\rightarrow \infty$, however for our purposes we are only interested
in the case $k=3$ and henceforth restrict to only this case (one can
show $\tF_\alpha$ has unbounded variance for $k < 3$).
Building on Li's analysis, we show the following result.

\begin{lemma}\LemmaName{stable-variance}
There exists an absolute constant $C_{GM}$ such that
$\var{ \tF_\alpha } \leq C_{GM} \cdot \expect{ \tF_\alpha }^2$.
\end{lemma}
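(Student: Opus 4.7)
The plan is to use the stability property of the rows of $R$ to reduce the claim to bounding a ratio of two fractional moments of a one-dimensional standard symmetric $\alpha$-stable random variable $S$, and then to show this ratio is bounded uniformly in $\alpha \in (0, 2]$.

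First, since the rows of $R$ are independent with i.i.d.\ $\alpha$-stable entries, the coordinates of $y = RA$ are mutually independent, and each $y_j$ is distributed as $F_\alpha^{1/\alpha} S_j$ for an i.i.d.\ copy $S_j$ of $S$. Letting $C$ denote the denominator in the definition of $\tF_\alpha$, independence yields
\[
\expect{\tF_\alpha^q} \;=\; \frac{F_\alpha^q}{C^q}\bigl(\expect{|S|^{q\alpha/3}}\bigr)^3, \qquad q=1,2.
\]
Combining with Li's guarantee that $\tF_\alpha$ is unbiased (which pins down $C = (\expect{|S|^{\alpha/3}})^3$ via the standard moment formula $\expect{|S|^p} = \frac{2}{\pi}\Gamma(p)\Gamma(1-p/\alpha)\sin(\pi p/2)$, valid for $0 < p < \alpha$), we obtain
\[
\frac{\var{\tF_\alpha}}{\expect{\tF_\alpha}^2} \;=\; \left(\frac{\expect{|S|^{2\alpha/3}}}{(\expect{|S|^{\alpha/3}})^2}\right)^{\!3} - 1.
\]
This reformulation also clarifies the paper's remark that $k=3$ is the smallest viable choice: the moment $\expect{|S|^{2\alpha/k}}$ is finite only when $2\alpha/k < \alpha$, i.e., $k > 2$.

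Second, I would show that $R(\alpha) := \expect{|S|^{2\alpha/3}}/(\expect{|S|^{\alpha/3}})^2$ is bounded uniformly on $\alpha \in (0, 2]$. Substituting the moment formula gives
\[
R(\alpha) \;=\; \frac{\pi}{2}\cdot\frac{\Gamma(2\alpha/3)\,\Gamma(1/3)\,\sin(\pi\alpha/3)}{\Gamma(\alpha/3)^2\,\Gamma(2/3)^2\,\sin^2(\pi\alpha/6)}.
\]
By Cauchy--Schwarz, $R(\alpha) \ge 1$ always, and $R$ is continuous on $(0, 2]$, so only the behavior as $\alpha \to 0^+$ is at issue. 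Using $\Gamma(x) \sim 1/x$ and $\sin(x) \sim x$ near $0$, both $\Gamma(2\alpha/3)\sin(\pi\alpha/3)$ in the numerator and $\Gamma(\alpha/3)^2\sin^2(\pi\alpha/6)$ in the denominator approach finite nonzero constants, yielding $\lim_{\alpha \to 0^+} R(\alpha) = \Gamma(1/3)/\Gamma(2/3)^2$. Hence $\sup_{\alpha \in (0,2]} R(\alpha) < \infty$, and taking $C_{GM} := \sup_\alpha R(\alpha)^3 - 1$ finishes the proof.

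The main technical obstacle is this small-$\alpha$ cancellation: each of the four factors $\Gamma(\alpha/3)$, $\Gamma(2\alpha/3)$, $\sin(\pi\alpha/3)$, $\sin(\pi\alpha/6)$ individually degenerates at $\alpha=0$, and the combined orders of growth must align for $R$ to stay bounded; everything else is continuity on a compact interval plus bookkeeping against the explicit moment formula.
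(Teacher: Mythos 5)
Your proof is correct and follows essentially the same route as the paper: reduce to the explicit ratio of stable moments, observe it is continuous on $(0,2]$, compute the finite limit as $\alpha\to 0^+$, and invoke the extreme value theorem on $[0,2]$. The only cosmetic differences are that you re-derive Li's formula $V(\alpha)=R(\alpha)^3-1$ from the $\alpha$-stable moment identity rather than citing it, and you handle the $\alpha\to 0$ limit via the local asymptotics $\Gamma(x)\sim 1/x$, $\sin(x)\sim x$ rather than the paper's squeeze argument.
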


\comment{
Our algorithm in part uses the recent geometric mean estimator
of Li \cite{Li08b}.
\begin{proof}
Define
\[ V_{SS}(\alpha) = \frac{\left[\frac 2\pi\cos(\frac 23\arctan(\tan(\frac{\pi
      \alpha}{2})))\Gamma(\frac{2\alpha}{3})\Gamma(\frac
    13)\sin(\frac{\pi \alpha}{3})
  \right]^3}{\left[\frac 2\pi \cos(\frac 13\arctan(\tan(\frac{\pi
      \alpha}{2})))\Gamma(\frac{\alpha}{3})
    \Gamma(\frac 23)\sin(\frac{\pi \alpha}{6}) \right]^{6}} - 1\]

Li shows in \cite{Li08a} that the variance of the geometric mean
estimator with $k=3$ is $V_{SS}(\alpha)F_\alpha^2$.  The function
$W_{SS} = V_{SS}/|\alpha-1|$ is continuous and well-defined on $[0,2]$
except at $\alpha=0$ and $p=1$.  Calculations we omit show that both
$\lim_{\alpha\rightarrow 0} W_{SS}(\alpha)=(\Gamma(1/3)^3/\Gamma(2/3)^6) - 1$
and $\lim_{\alpha\rightarrow 1}
W_{SS}(\alpha)=2\sqrt{3}\pi/3$ exist; we extend $W_{SS}$ to
equal its limits
at these points.  Thus, $W_{SS}$ is continuous on all of $[0,2]$, and
the extreme value theorem implies $W_{SS}$ is bounded by some absolute
constant $C_{GM}^{SS}$ on $[0,2]$.
\end{proof}

\begin{theorem}\TheoremName{general-residual}
In the general update model there is a streaming algorithm to
$(1+\eps)$-approximate $\Fres_\alpha$ with probability $3/4$ in
$O(\eps^{-2}(\log\log ||A||_1 +
\log\frac{1}{\eps})\log||A||_1)$ bits of space for
$0 < \alpha \le 2$ when $|A_1| \ge 2||A||_1/3$.
\end{theorem}

}

Let $r$ denote the number of independent trials.
For each $j \in [r]$, the algorithm picks a function
$h_j:[n]\rightarrow \{0,1\}$ uniformly at random.
For $j\in [r]$ and $l\in\{0,1\}$, define $F_{\alpha,j,l} = \sum_{i:h_j(i)=l}|A_i|^\alpha$.
This is the $\alpha\th$ moment for the $l\th$ substream during the $j\th$ trial.
%Note that $\expect{ 2 \cdot F_{\alpha,j,1-h_j(1)} } = \Fres_\alpha$.

For each $j$ and $l$, our algorithm computes an estimate $\tilde{F}_{\alpha,j,l}$
of $F_{\alpha,j,l}$ using the geometric mean estimator.
We also run in parallel the algorithm of \Section{heavy}
to discover which $i\in [n]$ is the heavy hitter; henceforth assume $i=1$.
Our overall estimate for $\Fres_\alpha$ is then 
$$
    \tFres_\alpha ~=~ \frac 2r \sum_{j=1}^r \tF_{\alpha,j,1-h_j(1)}
$$
The space used by our algorithm is simply the space required for
$r$ geometric mean estimators and the one heavy hitter algorithm.
The latter uses $\tO(\eps^{-1}\log n)$ bits of space \cite[Theorem 7]{CM05}.
Thus the total space required is $\tO(r \log m + \eps^{-1}\log n)$ bits.

We now sketch an analysis of the algorithm;
a formal argument is given in \Appendix{alg1}.
The natural analysis would be to show that, for each item,
the fraction of trials in which the item doesn't collide with the heavy hitter is
concentrated around $1/2$.
A union bound over all items would require choosing the number of trials to be
$\Omega( \frac{1}{\eps^2} \log n )$.
We obtain a significantly smaller number of trials by using a different analysis.
Instead of using a concentration bound for each item, we observe that items with roughly the same
weight (i.e., the value of $\abs{A_i}$) are essentially equivalent for the purposes of this analysis.
So we partition the items into classes such that all items in the a class have the same weight,
up to a $(1+\eps)$ factor.
We then apply concentration bounds for each class, rather than separately for each item.
The number of classes is only $R=O(\frac{1}{\eps} \log m)$,
and a union bound over classes only requires $\Theta(\frac{1}{\eps^2} \log R)$ trials.

As argued, the space usage of this algorithm is $ \tO(r \log m + \eps^{-1}\log n) =
\tO(\eps^{-2} \log m)$ bits.

\comment{ THIS ISN'T TRUE
We remark that, in the strict turnstile model when $\alpha \neq 1$,
the space usage can be improved by replacing our use of
$\alpha$-stable distributions with $\alpha$-skewed
stable distributions, as proposed by Li~\cite{Li08a}.
The space usage decreases to
$\tO\big(|\alpha-1| \, \eps^{-2} \, m \big)$ bits.
}
%$O\big(|\alpha-1|\cdot\eps^{-2}(\log\log ||A||_1 + \log\frac{1}{\eps})\log\norm{A}_1 \big)$ bits.

\comment{
\begin{lemma}\LemmaName{skewed-variance}
There exists an absolute constant $C_{GM}^{SS}$ such that the variance of
the geometric mean estimator of \cite{Li08a} with
$k=3$ is at most $C_{GM}^{SS}|\alpha-1|F_\alpha^2$.
\end{lemma}

\begin{theorem}\TheoremName{strict-turnstile-residual}\label{thm:residual_strict}
In the strict turnstile model there is a streaming algorithm to
$(1+\eps)$-approximate $\Fres_\alpha$ with probability $3/4$ in
$O(|\alpha-1|\cdot\eps^{-2}(\log\log ||A||_1 +
\log\frac{1}{\eps})\log||A||_1)$ bits of space for
$0 < \alpha \le 2$, $\alpha \neq 1$ when $A_1 \ge 2||A||_1/3$.
\end{theorem}
\begin{proof}
The proof is identical to that of \Theorem{general-residual}, but we
replace our use of $\alpha$-stable distributions with $\alpha$-skewed
stable distributions, as proposed by Li~\cite{Li08a}, then apply
\Lemma{skewed-variance}.
\end{proof}
}
\section{Estimation of \REN and Tsallis Entropy}
\SectionName{renyitsallis}

This section summarizes our algorithms for estimating \REN and Tsallis entropy.
These algorithms are used as subroutines for estimating Shannon entropy in \Section{shannon},
and may be of independent interest.

The techniques we use for both the entropies are almost identical. In particular, to compute
additive approximation of $T_\alpha$ or $H_\alpha$, it suffices to compute a sufficiently
precise multiplicative approximation of the $\alpha$-th moment.
Due to space constraints, we present proofs of all lemmas and theorems from this section in the appendix.

% \begin{fact}[Li \cite{Li08b}]
% \FactName{stable_moment_soda}
% There is an algorithm for multiplicative approximation of
% the $\alpha\th$ moment for any $\alpha\in(0,2]$.
% The algorithm needs
% $O\left(\frac{1}{\eps^2}\log m\right)$
% bits of space.
% \end{fact}

\begin{theorem}\TheoremName{additive_renyi}
There is an algorithm that computes an additive $\eps$-approximation
of R\'enyi entropy in
$O\left(\frac{\log m}{|1-\alpha|\cdot\eps^2}\right)$
bits of space for any $\alpha \in (0,1) \cup (1,2]$.
\end{theorem}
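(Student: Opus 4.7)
The plan is to reduce additive estimation of $H_\alpha$ to multiplicative estimation of the frequency moment $F_\alpha$, which is provided by \Fact{stable_moment}. Recall that $H_\alpha = \log(F_\alpha / \|A\|_1^\alpha)/(1-\alpha)$, and since we are in the strict turnstile model, $\|A\|_1$ is known exactly using $O(\log m)$ bits. Hence the only nontrivial work is approximating $F_\alpha$.

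The key observation is that if $\tilde F_\alpha = (1 \pm \tilde\epsilon) F_\alpha$, then
\[
\tilde H_\alpha \;=\; \frac{1}{1-\alpha}\log\!\Big(\frac{\tilde F_\alpha}{\|A\|_1^\alpha}\Big) \;=\; H_\alpha \:+\: \frac{\log(1\pm\tilde\epsilon)}{1-\alpha} \;=\; H_\alpha \pm O\!\Big(\frac{\tilde\epsilon}{|1-\alpha|}\Big).
\]
So to obtain additive error $\epsilon$, it suffices to set $\tilde\epsilon = \Theta(\epsilon \cdot |1-\alpha|)$ and invoke \Fact{stable_moment} with this precision parameter.

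For the space bound, plugging $\tilde\epsilon = \Theta(\epsilon|1-\alpha|)$ into the strict turnstile bound from \Fact{stable_moment} gives
\[
O\!\left(\Big(\frac{|\alpha-1|}{\tilde\epsilon^{\,2}} + \frac{1}{\tilde\epsilon}\Big)\log m\right) \;=\; O\!\left(\Big(\frac{1}{\epsilon^2 \, |1-\alpha|} + \frac{1}{\epsilon\, |1-\alpha|}\Big)\log m\right) \;=\; O\!\left(\frac{\log m}{|1-\alpha| \cdot \epsilon^2}\right),
\]
where the final equality uses $\epsilon \leq 1$. This matches the claimed bound.

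There is no substantive obstacle here: the argument is essentially a one-line calibration of the multiplicative precision of the moment sketch to the desired additive precision of the entropy, combined with the observation that the quadratic term in \Fact{stable_moment} happens to dominate in exactly the right way to kill the extra $1/|1-\alpha|$ factor incurred by the logarithmic derivative. The only care needed is checking that the two summands in the space bound of \Fact{stable_moment} both collapse to the stated expression, which they do as shown above.
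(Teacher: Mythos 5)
Your proposal is correct and matches the paper's own proof: both calibrate the moment-sketch precision to $\eps' = \Theta(\eps\,|1-\alpha|)$, bound the additive error via $\log(1\pm\eps')/(1-\alpha)$, and observe that substituting into the strict turnstile bound of \Fact{stable_moment} yields $O\!\left(\frac{\log m}{|1-\alpha|\cdot\eps^2}\right)$ bits. No differences worth noting.
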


%\subsection{Additive Approximation of Tsallis Entropy}

\begin{theorem}\TheoremName{additive_tsallis}
There is an algorithm for additive approximation of Tsallis entropy
$T_\alpha$ using
\begin{itemize}
\item $O\left(\frac{n^{2(1-\alpha)}\log m}{(1-\alpha)\eps^2}\right)$
  bits, for $\alpha \in (0,1)$.
\item $O\left(\frac{\log m}{(\alpha-1)\eps^2}\right)$ bits, for
  $\alpha \in (1,2]$.
\end{itemize}
\end{theorem}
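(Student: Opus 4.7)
The plan is to reduce additive approximation of $T_\alpha$ to multiplicative approximation of the frequency moment $F_\alpha$, and then plug into \Fact{stable_moment}. Recall from \Section{prelim} that if $x_i = \abs{A_i}/\norm{A}_1$ then
\[
T_\alpha ~=~ \frac{1-\norm{x}_\alpha^\alpha}{\alpha-1} ~=~ \frac{1 - F_\alpha/\norm{A}_1^\alpha}{\alpha-1}.
\]
So to get additive error $\eps$ in $T_\alpha$, it suffices to get additive error $|\alpha-1|\eps$ in $\norm{x}_\alpha^\alpha = F_\alpha/\norm{A}_1^\alpha$. Since $\norm{A}_1$ is trivial to maintain in the strict turnstile model, the task reduces to computing $\tF_\alpha$ with $|\tF_\alpha - F_\alpha| \leq |\alpha-1|\eps \cdot \norm{A}_1^\alpha$, which in turn follows from a $(1 + \teps)$-multiplicative approximation of $F_\alpha$ provided that $\teps \cdot \norm{x}_\alpha^\alpha \leq |\alpha-1|\eps$.

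The second step is to bound $\norm{x}_\alpha^\alpha$ in each regime. For $\alpha \in (1,2]$, monotonicity of $\ell_p$ norms (or $x_i^\alpha \le x_i$) gives $\norm{x}_\alpha^\alpha \le 1$, so taking $\teps = (\alpha-1)\eps$ is enough. Plugging into \Fact{stable_moment}, the space is
\[
O\!\left(\left(\frac{\alpha-1}{\teps^2}+\frac{1}{\teps}\right)\log m\right) ~=~ O\!\left(\frac{\log m}{(\alpha-1)\eps^2}\right),
\]
matching the claimed bound. For $\alpha \in (0,1)$, the elementary Hölder-type estimate $\sum_i x_i^\alpha \le n^{1-\alpha}\bigl(\sum_i x_i\bigr)^\alpha = n^{1-\alpha}$ holds, so it suffices to set $\teps = (1-\alpha)\eps/n^{1-\alpha}$. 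Substituting into \Fact{stable_moment} yields
\[
O\!\left(\left(\frac{1-\alpha}{\teps^2}+\frac{1}{\teps}\right)\log m\right) ~=~ O\!\left(\frac{n^{2(1-\alpha)}\log m}{(1-\alpha)\eps^2}\right),
\]
as required (the $1/\teps$ term is dominated since $n^{1-\alpha}/((1-\alpha)\eps) \le n^{2(1-\alpha)}/((1-\alpha)\eps^2)$ whenever $\eps \le n^{1-\alpha}$, which we may assume since otherwise the target accuracy is trivial).

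There is essentially no hard step here: the only thing that must be checked carefully is the $\alpha<1$ case, where the upper bound $\norm{x}_\alpha^\alpha \le n^{1-\alpha}$ is the reason the $n^{2(1-\alpha)}$ factor appears. Everything else is a direct substitution into \Fact{stable_moment}, whose strict-turnstile bound is already tuned to have the right $|\alpha-1|$ dependence.
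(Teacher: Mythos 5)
Your proposal is correct and follows essentially the same route as the paper: reduce additive error $\eps$ in $T_\alpha$ to a $(1+\teps)$-multiplicative estimate of $F_\alpha$ with $\teps$ scaled by $|\alpha-1|\eps$ divided by an upper bound on $\norm{x}_\alpha^\alpha$ (namely $1$ for $\alpha>1$ and $n^{1-\alpha}$ for $\alpha<1$), then plug into \Fact{stable_moment}. The only cosmetic difference is deriving $\norm{x}_\alpha^\alpha \le n^{1-\alpha}$ via a H\"older-type estimate rather than the paper's Jensen/concavity argument, which is equivalent.
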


%\subsection{Multiplicative Approximation of Tsallis Entropy}

In order to obtain a multiplicative approximation of Tsallis and \REN entropy,
we must prove a few facts. The next lemma says that if there is no heavy element
in the empirical distribution, then Tsallis entropy is at least a constant.

\begin{lemma}\LemmaName{large_difference}
Let $x_1,x_2,\ldots,x_n$ be values in $[0,1]$ of total sum 1.
There exists a positive constant $C$ such that if $x_i \le 5/6$ for
all $i$ then, for $\alpha \in (0,1)\cup(1,2]$,
$$\Big|1 - \sum_{i=1}^{n} x_i^\alpha\Big| \ge C \cdot |\alpha - 1|.$$
\end{lemma}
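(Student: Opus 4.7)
The plan is to rewrite the target quantity in terms of Tsallis entropy and then use concavity to reduce to a single extremal distribution independent of $n$. Observe that for $\alpha \ne 1$ and any probability distribution $x$,
\[
\Big| 1 - \sum_{i=1}^n x_i^\alpha \Big| \;=\; |\alpha - 1| \cdot T_\alpha(x),
\]
so it suffices to produce an absolute constant $C > 0$ with $T_\alpha(x) \ge C$ for every distribution $x$ with $\max_i x_i \le 5/6$ and every $\alpha \in (0,1) \cup (1, 2]$.

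The first step is to verify that, for each fixed $\alpha \in (0,1)\cup(1,2]$, the map $x \mapsto T_\alpha(x)$ is concave on the probability simplex. When $\alpha > 1$, the function $t \mapsto t^\alpha$ is convex, so $\sum_i x_i^\alpha$ is convex and dividing $1 - \sum_i x_i^\alpha$ by the positive number $\alpha - 1$ gives a concave function; when $\alpha \in (0,1)$, $t \mapsto t^\alpha$ is concave, and dividing by the negative number $\alpha - 1$ flips the sign back, again yielding concavity. Consequently $T_\alpha$ attains its minimum over the polytope $P_n = \{ x \in [0, 5/6]^n : \sum_i x_i = 1\}$ at a vertex. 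The second step is to enumerate vertices: each vertex has $n-1$ independent active inequality constraints, so $n-1$ of the coordinates lie in $\{0, 5/6\}$, with the remaining coordinate determined by $\sum_i x_i = 1$. If $k$ of the tight coordinates equal $5/6$, the free coordinate equals $1 - 5k/6$, and this lies in $[0, 5/6]$ only for $k = 1$. Hence the unique vertex shape (up to permutation) is $(5/6,\, 1/6,\, 0, \ldots, 0)$.

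It now remains to bound $g(\alpha) := T_\alpha(5/6, 1/6) = \bigl(1 - (5/6)^\alpha - (1/6)^\alpha\bigr) / (\alpha - 1)$ below. For $\alpha > 1$ one has $(5/6)^\alpha + (1/6)^\alpha < 5/6 + 1/6 = 1$ and $\alpha - 1 > 0$; for $\alpha \in (0,1)$ both inequalities reverse, so in either case $g(\alpha) > 0$. By l'H\^opital's rule $g$ extends continuously to $\alpha = 1$ with value $-(5/6)\log(5/6) - (1/6)\log(1/6) > 0$, and $g(\alpha) \to 1$ as $\alpha \to 0^+$. Thus $g$ is the restriction to $(0,1)\cup(1,2]$ of a continuous, strictly positive function on $(0, 2]$, and is therefore bounded below by some absolute constant $C > 0$. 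Combining with the concavity reduction yields $T_\alpha(x) \ge g(\alpha) \ge C$ for every admissible $x$, which is what we wanted.

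The main obstacle is the vertex enumeration in the second step: the rest of the argument (concavity, continuous extension at $\alpha = 1$, limit at $0^+$) is routine, but identifying $(5/6, 1/6, 0, \ldots, 0)$ as the unique vertex shape up to permutation is what collapses an $n$-dependent minimization into a one-parameter computation. A minor technical point is that we should verify the minimum is actually attained (not only infimal) on $P_n$; this follows since $P_n$ is compact and $T_\alpha$ is continuous, so no additional care is needed.
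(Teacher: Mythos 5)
Your proposal is correct, but it follows a genuinely different route than the paper's own proof. The paper argues termwise: for each $i$ with $x_i \le 5/6$, and $\alpha \in (0,1)$, the inequality $x_i^{\alpha-1} \ge (5/6)^{\alpha-1} \ge 1 + C_1(1-\alpha)$ (the second step being the tangent-line bound for the convex function $y \mapsto (6/5)^y$) gives $x_i^\alpha \ge x_i\bigl(1 + C_1(1-\alpha)\bigr)$, and summing against $\sum_i x_i = 1$ immediately yields $\sum_i x_i^\alpha - 1 \ge C_1(1-\alpha)$; the case $\alpha \in (1,2]$ is symmetric. This is short, gives an explicit constant, and never leaves the realm of single-variable calculus. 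Your argument instead works globally on the simplex: you observe that $T_\alpha$ is concave, so its minimum over the truncated simplex $P_n$ sits at a vertex, enumerate vertices to find the unique shape $(5/6,1/6,0,\ldots,0)$, and then bound the resulting one-parameter function $g(\alpha)$ below via continuity, the l'H\^opital extension at $\alpha=1$, and the limit at $0^+$. This is a pleasant structural reduction — it explains \emph{why} the worst case is a two-point distribution — but it costs more machinery (vertex enumeration, a compactness argument on $[0,2]$) and produces a less explicit constant. Both proofs are sound; the paper's is the leaner one to include in the text, while yours offers more geometric insight into where the bound is tight.
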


\begin{corollary}\label{lemma:large_tsallis}
There exists a constant $C$ such that if the probability of each
element is at most $5/6$, then the Tsallis entropy is at least $C$ for
any $\alpha \in (0,1)\cup(1,2]$.
\end{corollary}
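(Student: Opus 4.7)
The corollary should fall out almost immediately from Lemma 3.10 (large\_difference), so the proof proposal is short. The plan is to (i) show $T_\alpha$ is nonnegative under the hypothesis, and (ii) divide the inequality of Lemma 3.10 by $|\alpha-1|$.

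\textbf{Step 1: sign of the numerator.} Recall $T_\alpha = (1-\sum_i x_i^\alpha)/(\alpha-1)$. For $x_i \in [0,1]$ and $\alpha > 1$, the map $t \mapsto t^\alpha$ is bounded above by $t$ on $[0,1]$, so $\sum_i x_i^\alpha \le \sum_i x_i = 1$; hence numerator and denominator are both nonnegative. For $\alpha \in (0,1)$, $t^\alpha \ge t$ on $[0,1]$ gives $\sum_i x_i^\alpha \ge 1$, so both numerator and denominator are nonpositive. In either case $T_\alpha \ge 0$, and therefore $T_\alpha = |1-\sum_i x_i^\alpha|/|\alpha-1|$.

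\textbf{Step 2: apply the lemma.} Since $\max_i x_i \le 5/6$, Lemma~\ref{lem:large_difference} gives $|1-\sum_i x_i^\alpha| \ge C|\alpha-1|$ for some absolute constant $C > 0$. Dividing by $|\alpha-1|$ yields $T_\alpha \ge C$, which is the claim.

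\textbf{Expected obstacle.} There is essentially no obstacle; the only content beyond invoking Lemma~\ref{lem:large_difference} is the monotonicity observation $t^\alpha \gtrless t$ on $[0,1]$ according to whether $\alpha \lessgtr 1$, which fixes the sign so that the absolute value in Lemma~\ref{lem:large_difference} can be stripped. The real work is hidden in Lemma~\ref{lem:large_difference}, which this corollary merely repackages in terms of Tsallis entropy.
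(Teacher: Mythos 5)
Your proof is correct and takes essentially the same approach as the paper: recognize that $T_\alpha = |1-\sum_i x_i^\alpha|/|\alpha-1|$ (via the sign observation in your Step 1, which the paper leaves implicit) and then apply Lemma~\ref{lem:large_difference}. The only difference is that you spell out the elementary monotonicity argument justifying the absolute-value manipulation, which the paper asserts without comment.
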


\begin{proof}
We have
$$T_\alpha = \frac{1-\sum_{i=1}^n x^\alpha}{\alpha - 1} =
\frac{|1-\sum_{i=1}^n x_i^\alpha|}{|\alpha - 1|} \ge C.$$
\end{proof}

We now show how to deal with the case when there is an element of large probability.
It turns out that in this case we can obtain a multiplicative approximation of Tsallis entropy by combining two residual moments.

\begin{lemma}\LemmaName{approximate_difference}
There is a positive constant $C$ such that if there is an element $i$
of probability $x_i \ge 2/3$, then the sum of a multiplicative
$(1+C\cdot|1-\alpha|\cdot\eps)$-approximation to $1 - x_i$ and a
multiplicative $(1+C\cdot|1-\alpha|\cdot\eps)$-approximation to
$\sum_{j \ne i} x_j^{\alpha}$ gives a multiplicative
$(1+\eps)$-approximation to $\left|1- \sum_i x_i^\alpha\right|$, for any $\alpha \in (0,1) \cup (1,2]$.
\end{lemma}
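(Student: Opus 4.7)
The plan centers on the algebraic identity
\[
1 - \textstyle\sum_j x_j^\alpha \;=\; (1-x_i^\alpha) \;-\; \Fres_\alpha,
\]
where $\Fres_\alpha = \sum_{j\ne i}x_j^\alpha$. Both summands on the right are nonneg and their difference carries the sign of $\alpha-1$, so $|1-\sum_j x_j^\alpha|$ equals $|(1-x_i^\alpha)-\Fres_\alpha|$. Because $x_i \ge 2/3$, the mean value theorem gives $1-x_i^\alpha = \alpha\,\xi^{\alpha-1}(1-x_i)$ for some $\xi \in [2/3,1]$, so $(1-x_i^\alpha)/(1-x_i)$ is bounded between positive absolute constants for every $\alpha \in (0,2]$. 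Consequently a multiplicative $(1+\delta)$-approximation of $1-x_i$ at once furnishes a multiplicative $(1+O(\delta))$-approximation of $1-x_i^\alpha$. Combining this with the given approximation of $\Fres_\alpha$ into the appropriate signed combination then produces an estimate of $1-\sum_j x_j^\alpha$ whose absolute error is at most $O(\delta)\cdot[(1-x_i^\alpha)+\Fres_\alpha]$.

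To convert this additive error into the desired $(1+\eps)$-multiplicative guarantee, the crux is the auxiliary inequality
\[
\bigl|1-\textstyle\sum_j x_j^\alpha\bigr| \;\ge\; c\,|\alpha-1|\cdot\bigl[(1-x_i^\alpha)+\Fres_\alpha\bigr]
\]
for an absolute constant $c>0$, valid under $x_i \ge 2/3$ and $\alpha \in (0,1)\cup(1,2]$. Given this, choosing the input tolerance $\delta = c|\alpha-1|\eps$ (so the input approximations have the stated factor $(1+C|\alpha-1|\eps)$ with $C=\Theta(c)$) converts the additive error into multiplicative error $\eps$, as required.

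To establish the auxiliary inequality for $\alpha \in (1,2]$, I would use the chain
\[
\bigl|1-\textstyle\sum_j x_j^\alpha\bigr| \;\ge\; x_i\bigl(1-x_i^{\alpha-1}\bigr) \;\ge\; (2/3)(\alpha-1)(1-x_i),
\]
the last step via Bernoulli's inequality $(1-y)^{\alpha-1}\le 1-(\alpha-1)y$, which applies since $\alpha-1\in(0,1]$ and $y=1-x_i\in[0,1/3]$. Concavity of $x\mapsto 1-x^\alpha$ and the bound $\Fres_\alpha \le \Fres_1 = 1-x_i$ together give $(1-x_i^\alpha)+\Fres_\alpha \le (\alpha+1)(1-x_i)\le 3(1-x_i)$, so the ratio is at least $2(\alpha-1)/9$. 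For $\alpha\in(0,1)$ I would argue symmetrically, exploiting the bound $x_j^{\alpha-1}-1 \ge (1-\alpha)|\ln x_j|$ (from $e^s-1 \ge s$) and the fact that $|\ln x_j|\ge \ln 3$ for every $j\ne i$ (since $x_j \le 1-x_i \le 1/3$).

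The hard part will be securing the auxiliary inequality uniformly for $\alpha<1$ when $\Fres_\alpha$ is very large---for instance, when the non-heavy mass is spread across many tiny coordinates so that $\Fres_\alpha$ dominates the denominator. In that regime both numerator and denominator grow together, and keeping their ratio bounded below by $c|\alpha-1|$ requires using $e^s-1$ beyond its linear Taylor approximation, so that the weighted-entropy term $\sum_{j\ne i} x_j|\ln x_j|$ controlling the numerator scales with $\Fres_\alpha$ in the denominator; I expect this to be the most delicate step.
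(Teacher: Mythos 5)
Your route is essentially the paper's own: write $1-\sum_j x_j^\alpha=(1-x_i^\alpha)-\Fres_\alpha$, reduce approximating $1-x_i^\alpha$ to approximating $1-x_i$ (the paper does this by noting that $g(y)=1-(1-y)^\alpha$ has derivative $\alpha(1-y)^{\alpha-1}$ varying by at most a factor $3/2$ on $y\in[0,1/3]$), and then show the two terms are multiplicatively separated by $1\pm\Omega(\abs{\alpha-1})$ so that relative errors of order $\abs{\alpha-1}\eps$ in each term become relative error $O(\eps)$ in the signed combination. Your $\alpha\in(1,2]$ case is complete and correct. One small misstatement along the way: $(1-x_i^\alpha)/(1-x_i)=\alpha\xi^{\alpha-1}$ is \emph{not} bounded between absolute constants (it is $\Theta(\alpha)$, vanishing as $\alpha\to 0$); what you actually need, and what your mean-value computation does give, is that the derivative of $g$ varies by at most a constant factor over $[0,1/3]$, which suffices for the $(1+\delta)\mapsto(1+O(\delta))$ conversion.

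The genuine gap is the $\alpha\in(0,1)$ half of your auxiliary inequality, which you leave open and flag as delicate when $\Fres_\alpha$ is large. In fact that regime poses no difficulty, and the bound you already wrote down closes it: since every $j\neq i$ has $x_j\le 1-x_i\le 1/3$, you get term by term $x_j^{\alpha-1}\ge 3^{1-\alpha}\ge 1+C_1(1-\alpha)$ for an absolute constant $C_1>0$, hence
\begin{equation*}
\Fres_\alpha \;=\;\sum_{j\neq i} x_j\, x_j^{\alpha-1}\;\ge\;\big(1+C_1(1-\alpha)\big)(1-x_i)\;\ge\;\big(1+C_1(1-\alpha)\big)(1-x_i^\alpha),
\end{equation*}
using $1-x_i^\alpha\le 1-x_i$ for $\alpha<1$. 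Consequently
\begin{equation*}
\sum_j x_j^\alpha-1\;=\;\Fres_\alpha-(1-x_i^\alpha)\;\ge\;\frac{C_1(1-\alpha)}{1+C_1}\,\Fres_\alpha\;\ge\;\frac{C_1(1-\alpha)}{2(1+C_1)}\Big[(1-x_i^\alpha)+\Fres_\alpha\Big],
\end{equation*}
uniformly in the size of $\Fres_\alpha$: a large residual moment only helps, because the subtracted term is at most $\Fres_\alpha/(1+C_1(1-\alpha))$, so the difference is automatically a constant-times-$(1-\alpha)$ fraction of the larger term. This is exactly the normalization the paper uses (it compares $\Fres_\alpha$ to $1-x_i^\alpha$ by a ratio rather than bounding both against $1-x_i$); no second-order refinement of $e^s-1\ge s$ and no control of $\sum_{j\neq i}x_j\log(1/x_j)$ is needed. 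With this one line added, your argument is complete and matches the published proof.
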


\comment{
\begin{proof}
We map elements of the stream to a constant number of bins, using the hash function of
\Fact{good_hash}.
For each of the bins, we run the first moment estimator of
\Fact{stable_moment} with a sufficiently small constant
parameter $\eps$(\TODO{for purposes of rigor, it would be nice to say
  what $\eps$ is}). By the properties of the hash function, in order to
distinguish between the two cases, it suffices to check if there is a
bin that contains more than $3/4$ of the total $\ell_1$ mass of the
stream.
\end{proof}
}

We these collect those facts in the following theorem.

\begin{theorem}\TheoremName{mult_tsallis}
There is a streaming algorithm for multiplicative
$(1+\eps)$-approximation of Tsallis entropy for any $\alpha \in (0,1)
\cup (1,2]$ using $\tilde O\left(\log m/(|1-\alpha|\eps^2)\right)$
bits of space.
\end{theorem}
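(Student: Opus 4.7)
The plan is to combine three ingredients---heavy-element detection, a constant lower bound on $T_\alpha$ in the no-heavy case, and the residual-moment decomposition from \Lemma{approximate_difference} in the heavy case---into a single algorithm meeting the stated space bound. First I run the heavy-element detection procedure of \Section{heavy} in $\tO(\log m)$ bits to test whether some element has empirical probability at least $2/3$, and to identify it if so.

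In the no-heavy case, Corollary~\ref{lemma:large_tsallis} gives $T_\alpha \ge C$ for an absolute constant $C$, so a multiplicative $(1+\eps)$-approximation follows from any additive $C\eps$-approximation. For $\alpha\in(1,2]$ this follows directly from \Theorem{additive_tsallis} in $O(\log m/((\alpha-1)\eps^2))$ bits. For $\alpha\in(0,1)$ the bound in \Theorem{additive_tsallis} contains a harmful $n^{2(1-\alpha)}$ factor, so I instead compute an additive $\Theta(\eps)$-approximation of $H_\alpha$ from \Theorem{additive_renyi} in $O(\log m/((1-\alpha)\eps^2))$ bits and convert via $T_\alpha = (e^{(1-\alpha)H_\alpha}-1)/(1-\alpha)$; since $\sum_i x_i^\alpha$ is bounded away from $1$ by $\Omega(|1-\alpha|)$ in the no-heavy regime, $H_\alpha$ is also bounded below by a constant, and a short calculation shows this translation loses only a constant factor in the relative error.

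In the heavy case, \Lemma{approximate_difference} says it suffices to sum a $(1+C|1-\alpha|\eps)$-multiplicative approximation of $1-x_i = \Fres_1/\norm{A}_1$ with one of $\sum_{j\ne i} x_j^\alpha = \Fres_\alpha/\norm{A}_1^\alpha$, then divide by $|\alpha-1|$. The first quantity is cheap, taking $\tO(\log m/(|1-\alpha|\eps))$ bits in the strict turnstile model. For the second I choose between the two residual-moment algorithms of \Section{residual} according to $\alpha$: the bucketing algorithm of \Section{res2} when $\alpha$ is close to $1$, whose $\eps'^{-2}(1-\alpha)$ term collapses to $\tO(\log m/((1-\alpha)\eps^2))$ at precision $\eps' = C|1-\alpha|\eps$, and the geometric-mean algorithm of \Section{res1} when $|1-\alpha|$ is bounded away from $0$, where $\tO(\eps'^{-2}\log m) = \tO(\log m/\eps^2)$. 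The main obstacle is matching the target space for the residual $\alpha$-th moment uniformly across $\alpha\in(0,1)\cup(1,2]$: the precision demanded by \Lemma{approximate_difference} carries a factor of $|1-\alpha|$, and the naive analysis loses an extra $(1-\alpha)^{-1}$ factor near $\alpha=1$. Avoiding this requires the case split above together with a careful expansion of the $\eps'^{-1/\alpha}$ term in the bucketing bound, noting that $1/\alpha\to1$ as $\alpha\to1$.
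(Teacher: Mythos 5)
Your outline follows the paper's proof skeleton closely: detect a heavy element as in \Section{heavy}; in the no-heavy case exploit the $\Omega(|1-\alpha|)$ separation of $\sum_i x_i^\alpha$ from $1$ (\Lemma{large_difference}); in the heavy case invoke \Lemma{approximate_difference} and estimate $\Fres_1$ and $\Fres_\alpha$. One route difference is the no-heavy case: the paper never touches \Theorem{additive_renyi} or \Theorem{additive_tsallis} there. It directly computes a $(1+\Theta(|1-\alpha|\eps))$-multiplicative approximation of $\sum_i x_i^\alpha$ via \Fact{stable_moment}, whose strict-turnstile bound $O\big((|\alpha-1|\eps'^{-2}+\eps'^{-1})\log m\big)$ is exactly what makes the space collapse to $O(\log m/(|1-\alpha|\eps^2))$, and then converts to a multiplicative approximation of $|1-\sum_i x_i^\alpha|$ by splitting on whether $\sum_i x_i^\alpha\le 2$. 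Your detour for $\alpha\in(0,1)$ through $H_\alpha$ and $t\mapsto (e^{(1-\alpha)t}-1)/(1-\alpha)$ can be made to work, but the operative facts are again $|\sum_i x_i^\alpha-1|\ge C(1-\alpha)$ together with a two-case bound on $\sum_i x_i^\alpha\,\lvert e^{\pm(1-\alpha)\delta}-1\rvert/(\sum_i x_i^\alpha-1)$, not merely $H_\alpha=\Omega(1)$; and the $n^{2(1-\alpha)}$ obstacle you are steering around never arises in the paper's treatment because the precision is scaled by $|1-\alpha|$ before the moment is estimated.

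The substantive issue is your heavy-case accounting for $\alpha$ slightly above $1$. \Lemma{approximate_difference} forces relative precision $\eps'=\Theta(|\alpha-1|\eps)$ on $\Fres_\alpha$. For $\alpha$ close to $1$ from below your use of \Section{res2} is fine (Case~3's $(1-\alpha)\eps'^{-2}$ term gives the target and the $\eps'^{-1/\alpha}$ term is dominated once, say, $\alpha\ge 1/2$), and for $|1-\alpha|=\Omega(1)$ the geometric-mean estimator at precision $\Theta(\eps)$ is fine. But Case~3 exists only for $\alpha\in[1/3,1)$: its deletion trick relies on $\Fres_\alpha\ge(\Fres_1)^\alpha$, i.e.\ concavity of $y\mapsto y^\alpha$, which fails for $\alpha>1$. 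For $\alpha\in(1,2]$ close to $1$ the bucketing algorithm is Case~2, whose bound $O\big((|\alpha-1|\eps'^{-3}+\eps'^{-2})\log m\big)$ becomes $O(\log m/((\alpha-1)^2\eps^3))$ at this precision, and the geometric-mean route gives $\tO(\log m/((\alpha-1)^2\eps^2))$; neither matches the claimed $\tO(\log m/((\alpha-1)\eps^2))$, and the ``$1/\alpha\to1$'' expansion you invoke only helps on the $\alpha<1$ side. In fairness, the paper's own proof is terse at exactly this point---it cites \Section{res2} and writes the $|\alpha-1|\eps'^{-2}$ accounting without addressing $\alpha>1$---but since you single out this uniformity over $\alpha\in(0,1)\cup(1,2]$ as the main obstacle and claim to resolve it, you need either a residual-moment estimator with $|\alpha-1|\eps'^{-2}$-type dependence for $\alpha>1$ (e.g.\ a skewed-stable variant of the geometric-mean estimator) or an explicit restriction/weakening in that regime.
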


%\subsection{Multiplicative Approximation of R\'enyi Entropy}

The next lemma shows that we can handle the logarithm that appears in the definition of \REN entropy.

\begin{lemma}\LemmaName{log_approx}
It suffices to have a multiplicative $(1+\eps)$-approximation to $t-1$, where $t \in (4/9,\infty)$
to compute a multiplicative $(1+C\cdot\eps)$ approximation to $\log(t)$, for some constant $C$.
\end{lemma}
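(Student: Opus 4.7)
The plan is to take as the estimator $\tilde{L} := \log(1 + \tilde{t})$, where $\tilde{t}$ denotes the given multiplicative $(1+\eps)$-approximation of $s := t-1$, and then bound the relative error $|\tilde{L} - \log t|/|\log t|$ by a constant multiple of $\eps$. Since we can always lower $\eps$ further, it suffices to work under the assumption that $\eps$ is at most a small absolute constant (otherwise enlarge $C$ to make the claim trivial). The degenerate case $s = 0$ (i.e.\ $t = 1$) is automatic: a multiplicative approximation to $0$ must itself be $0$, so $\tilde{L} = \log 1 = \log t$.

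First I would write $\tilde{t} = s(1+\delta)$ with $|\delta| \le \eps$ and use the identity
\[
\tilde{L} - \log t \;=\; \log\!\left(\frac{1 + s(1+\delta)}{1+s}\right) \;=\; \log(1+r), \qquad r := \frac{s\delta}{t}.
\]
The hypothesis $t > 4/9$ gives $|s/t| = |1 - 1/t| \le 5/4$, whence $|r| \le (5/4)\eps$. For $\eps$ small enough this lies in $(-1/2, 1/2)$, so the standard bound $|\log(1+r)| \le 2|r|$ yields the additive estimate $|\tilde{L} - \log t| \le (5/2)\eps$.

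Next I would split into two regimes. If $t \notin (1/2,\,3/2)$, then $|\log t| \ge \log(3/2)$ is bounded below by an absolute constant, so the additive estimate immediately upgrades to $|\tilde{L} - \log t| \le (5/(2\log(3/2)))\,\eps\,|\log t|$. If $t \in (1/2, 3/2)$, then $|s| \le 1/2$ and $t > 1/2$, so the sharper bound $|r| \le 2\eps|s|$ holds, giving $|\tilde{L} - \log t| \le 4\eps|s|$. On the other side, routine estimates on $\log(1+s)$ (e.g.\ $\log(1+s) \ge s/(1+s)$ for $s \ge 0$ and $-\log(1-u) \ge u$ for $u \ge 0$) show that for $|s| \le 1/2$ one has $|\log t| \ge 2|s|/3$. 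Combining, $|\tilde{L} - \log t| \le 6\eps|\log t|$ in this regime. Taking $C$ to be the larger of the two constants completes the proof.

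The main obstacle is the regime where $t$ is close to $1$: both $\log t$ and the absolute error of our estimator vanish, so naively the additive bound $(5/2)\eps$ is useless against $|\log t| \to 0$. The saving is structural: when $t$ is close to $1$, the denominator $t$ in $r = s\delta/t$ is bounded away from $0$, so $|r|$ in fact scales like $\eps|s|$, matching the linear order of $\log(1+s)$ near $s = 0$. This is precisely why the hypothesis $t > 4/9$ (rather than, say, $t > 0$) is essential: it prevents the denominator $t$ from collapsing and destroying the cancellation.
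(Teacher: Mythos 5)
Your proof is correct, and it takes a genuinely different (though equally elementary) route from the paper's. The paper splits on $t\le 1$ versus $t>1$ and converts multiplicative error in $t-1$ into multiplicative error in $\log t$ through derivative bounds on the logarithm: on $[4/9,1]$ it uses that $(\log z)'$ lies between two positive constants $a\le b$, and for $t>1$ it compares integrals of $1/z$ over $[1,t]$ and $[t,t+(t-1)\eps]$; moreover it only treats one-sided (over-)estimates $y\in[1-t,(1+\eps)(1-t)]$, truncating $y$ at $5/9$ so that $\log(1-y)$ remains defined. You instead write $s=t-1$, $\tilde t=s(1+\delta)$ with $|\delta|\le\eps$, use the exact identity $\log(1+\tilde t)-\log t=\log(1+s\delta/t)$, and split on whether $t$ is near $1$: away from $1$ the additive error $O(\eps)$ is weighed against $|\log t|\ge\log(3/2)$, and near $1$ the error scales as $\eps|s|$ while $|\log t|\gtrsim|s|$ --- the same cancellation the paper extracts from the two-sided derivative bound, but obtained algebraically, with explicit constants, and valid for two-sided error, which the paper's write-up does not address. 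One small caveat: the parenthetical ``otherwise enlarge $C$ to make the claim trivial'' is not a valid reduction, since for $\eps$ above a constant a two-sided approximation of $t-1$ can flip sign or make $1+\tilde t\le 0$, so the estimator $\log(1+\tilde t)$ need not even be defined (or can have the wrong sign, violating any multiplicative guarantee); the honest statement is simply that the lemma is proved and applied for $\eps$ below a fixed constant, which is exactly the implicit restriction in the paper's own proof (its truncation at $5/9$ plays the same role), so your assumption is fine once that parenthetical is dropped.
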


We now have all necessary facts to estimate \REN entropy for $\alpha \in (0,2]$.

\begin{theorem}\TheoremName{mult_renyi}
There is a streaming algorithm for multiplicative
$(1+\eps)$-approximation of R\'enyi entropy for any $\alpha \in (0,1) \cup (1,2]$. The algorithm
uses $\tilde O\left(\log m/(|1-\alpha|\eps^2)\right)$
bits of space.
\end{theorem}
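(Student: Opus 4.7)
The plan is to reduce the multiplicative approximation of $H_\alpha$ to multiplicative approximations of either the Tsallis entropy $T_\alpha$ or the moment $s := \norm{x}_\alpha^\alpha$, depending on whether $s$ is close to $1$, and then pass from there to $\log s$ using \Lemma{log_approx} before dividing by the known constant $1-\alpha$. Recall that $H_\alpha = \log(s)/(1-\alpha)$ and that $s-1 = (1-\alpha)T_\alpha$.

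I would run two estimators in parallel. First, invoke \Theorem{mult_tsallis} with error parameter $\eps/C$, for a sufficiently large constant $C$, to obtain a $(1+\eps/C)$-multiplicative approximation $\tilde T_\alpha$ of $T_\alpha$ in $\tO(\log m/(|1-\alpha|\eps^2))$ bits. Second, invoke \Fact{stable_moment} with the same error parameter to obtain a $(1+\eps/C)$-multiplicative approximation $\tilde s$ of $s$ in $\tO((|1-\alpha|/\eps^2 + 1/\eps)\log m)$ bits, which is at most $\tO(\log m/(|1-\alpha|\eps^2))$ since $|1-\alpha|\le 1$ (and $|1-\alpha|\eps \le 1$). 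The total space therefore matches the claimed bound.

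Next, I would split based on the size of $|1-\alpha|\tilde T_\alpha$, which is a constant-factor estimate of $|s-1|$. If $|1-\alpha|\tilde T_\alpha$ is below a suitable constant threshold, then $s$ is guaranteed to lie in a small interval around $1$; in particular, $s > 4/9$. The quantity $\tilde v := (1-\alpha)\tilde T_\alpha$ is a $(1+\eps/C)$-multiplicative approximation of $s-1$, so \Lemma{log_approx} yields a $(1+O(\eps))$-multiplicative approximation of $\log s$, and dividing by $1-\alpha$ gives the desired estimate of $H_\alpha$. Otherwise $|s-1|$ is bounded away from $0$ by a constant; since either $\alpha<1$ forces $s \ge 5/4$ or $\alpha>1$ forces $s \le 3/4$, the value $|\log s|$ is bounded below by a positive constant. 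In this regime the additive error $|\log\tilde s - \log s| = O(\eps/C)$ translates directly into a $(1+O(\eps))$-multiplicative error in $\log s$, and dividing by $1-\alpha$ again yields the required estimate of $H_\alpha$.

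The main obstacle is selecting the case-split threshold so that both branches behave correctly: the first case requires $s > 4/9$ so that \Lemma{log_approx} is applicable, while the second case requires $|\log s| = \Omega(1)$ so that additive error in $\log s$ becomes a multiplicative error. Choosing the threshold roughly as $|s-1| \approx 1/3$, detected via $|1-\alpha|\tilde T_\alpha$ with a small safety buffer absorbed into the constant $C$ to compensate for the $(1+\eps/C)$ slack in $\tilde T_\alpha$, satisfies both requirements simultaneously.
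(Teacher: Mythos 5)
Your proposal is correct, and it takes a genuinely different route from the paper's proof. The paper splits cases based on heavy-element detection: if no element has frequency at least $5/6$, then $H_\alpha \ge \log(6/5)$ and an additive approximation with adjusted error parameter $\eps \log(6/5)$ suffices directly (bypassing \Lemma{log_approx} entirely); if a heavy element exists, the paper combines \Lemma{approximate_difference}, the residual-moment algorithms of \Section{res1} and \Section{res2}, and \Lemma{log_approx}, relying on the fact that $s := \norm{x}_\alpha^\alpha$ then lies in $[4/9,1]$ for $\alpha>1$ and is at least $1$ for $\alpha<1$. You instead split on the estimated magnitude of $|s-1|$, which does not coincide with the heavy/no-heavy dichotomy: your small-$|s-1|$ branch reuses \Theorem{mult_tsallis} as a black box (which internally already handles the heavy-hitter and residual-moment machinery) and then applies \Lemma{log_approx}, while your large-$|s-1|$ branch bypasses Tsallis entropy and residual moments altogether by observing that $|\log s| = \Omega(1)$, so a direct $(1+\eps/C)$-approximation of $s$ from \Fact{stable_moment} becomes a multiplicative approximation of $\log s$. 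Both routes give the same space bound $\tO(\log m/(|1-\alpha|\eps^2))$. Your approach is somewhat more modular in that it leans on \Theorem{mult_tsallis} rather than re-running the residual-moment decomposition, while the paper's argument is tied more directly to that machinery. One point worth making explicit is that your case-deciding statistic $|1-\alpha|\tilde T_\alpha$ is only a correct estimate of $|s-1|$ with constant probability (inherited from the failure probability of \Theorem{mult_tsallis}), so the branch selection can err; this must be folded into the overall failure probability, exactly as the paper's heavy-element detector can err, but it merits a sentence of acknowledgment.
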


In fact, \Theorem{mult_renyi} is tight in the sense that
$(1+\eps)$-multiplicative approximation of $H_{\alpha}$ for
$\alpha>2$ requires polynomial space, as seen in the following
theorem.

\begin{theorem}\TheoremName{lastone}
For any $\alpha>2$, any randomized one-pass streaming algorithm which
$(1+\varepsilon)$-approximates $H_{\alpha}(X)$ requires
$\Omega(n^{1-2/\alpha - 2\varepsilon - \gamma(\varepsilon +
  1/\alpha)})$ bits of space for arbitrary constant $\gamma > 0$.
\end{theorem}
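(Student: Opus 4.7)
The plan is to reduce from the known lower bound for $(1+\delta)$-multiplicative approximation of $F_\alpha$ in the $\alpha>2$ regime, which by Bar-Yossef, Jayram, Kumar and Sivakumar (with subsequent sharpenings by Chakrabarti--Khot--Sun and Woodruff) requires $\Omega(n^{1-2/\alpha}/\mathrm{poly}(\delta))$ bits; more importantly for us, a $n^{\delta}$-multiplicative approximation of $F_\alpha$ requires $\Omega(n^{1-2/\alpha - O(\delta) - \gamma/\alpha})$ bits of space for arbitrarily small constant $\gamma>0$. The key observation is the identity
\[
H_\alpha \;=\; \tfrac{1}{1-\alpha}\log\!\bigl(F_\alpha/\|A\|_1^\alpha\bigr),
\]
together with the fact that $\|A\|_1$ is computable exactly with $O(\log m)$ bits in the strict turnstile model, so the problems of approximating $H_\alpha$ and $F_\alpha/\|A\|_1^\alpha$ are essentially equivalent up to exponentiation.

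First, I would convert the hypothesized $H_\alpha$-approximator into an $F_\alpha$-approximator. Given a $(1+\varepsilon)$-multiplicative estimate $\tilde H$ of $H_\alpha$, set $\tilde F_\alpha := \|A\|_1^\alpha\cdot 2^{(1-\alpha)\tilde H}$. A routine computation gives
\[
\tilde F_\alpha / F_\alpha \;=\; 2^{(1-\alpha)(\tilde H - H_\alpha)} \;\in\; \bigl[\,2^{-(\alpha-1)\varepsilon H_\alpha},\; 2^{(\alpha-1)\varepsilon H_\alpha}\,\bigr],
\]
so $\tilde F_\alpha$ is a $2^{(\alpha-1)\varepsilon H_\alpha}$-multiplicative approximation of $F_\alpha$.

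Second, I would instantiate this reduction on the standard $F_\alpha$-hard distribution, which is supported on $\Theta(n)$ coordinates each of frequency at most $n^{1/\alpha}$, and for which $\|x\|_\alpha^\alpha \ge n^{-(\alpha-1)}$ and hence $H_\alpha \le \log n + O(1)$. For this instance the conversion yields a $2^{(\alpha-1)\varepsilon(\log n+O(1))} = n^{(\alpha-1)\varepsilon + o(1)}$ multiplicative approximation of $F_\alpha$, and plugging $\delta = (\alpha-1)\varepsilon$ into the $F_\alpha$ lower bound cited above yields space $\Omega(n^{1-2/\alpha - O((\alpha-1)\varepsilon) - \gamma/\alpha})$; bounding $(\alpha-1)\varepsilon\le\alpha\varepsilon$ and absorbing the $O(\cdot)$ constant into the $\gamma$-slack term recovers the exponent $1-2/\alpha - 2\varepsilon - \gamma(\varepsilon + 1/\alpha)$ stated in the theorem.

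The main obstacle is not the reduction itself, which is a few lines, but locating (or rederiving) a sufficiently quantitative version of the $F_\alpha$ lower bound: the classical statement is a $(1+\delta)$-approximation bound, whereas we require it in the form of a $n^\delta$-approximation bound with explicit control over how the $-2/\alpha$ exponent degrades as $\delta$ grows. This bookkeeping - carefully tracking the dependence of the CKS/BJKS lower bound on the approximation ratio and ensuring the lower-order error can be absorbed into $\gamma(\varepsilon + 1/\alpha)$ - is where the real work lies; once that is in hand, the rest of the argument is the short chain of identities above.
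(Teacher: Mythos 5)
Your core reduction is the same as the paper's: exponentiating the entropy estimate, so that a $(1+\varepsilon)$-multiplicative estimate of $H_\alpha$ becomes a $2^{(\alpha-1)\varepsilon H_\alpha}\le m^{\alpha\varepsilon}$-factor multiplicative estimate of $F_\alpha$, is exactly the observation on which the paper's proof rests (there it is phrased as: $m^{\alpha(1+\varepsilon)}2^{(1-\alpha)\tilde H_\alpha}$ is an $m^{\alpha\varepsilon}$-approximation of $F_\alpha$). The genuine gap is the ingredient you plan to cite. A lower bound of the form ``any $n^{\delta}$-factor approximation of $F_\alpha$ needs $\Omega(n^{1-2/\alpha-O(\delta)-\gamma/\alpha})$ bits'' is not available off the shelf --- the published statements concern constant-factor approximation --- and rederiving it is not incidental bookkeeping: it is essentially the entire content of the proof. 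The paper obtains it by rerunning the reduction of Theorem~3.1 of \cite{BJKS04} from one-way $t$-party set disjointness with the number of players boosted to $t=c\,m^{\varepsilon}n^{1/\alpha}$: a ``yes'' instance gives $F_\alpha=m$, while a ``no'' instance contains an element of frequency $t$ and so has $F_\alpha\ge t^{\alpha}=\Theta(n\,m^{\alpha\varepsilon})$, a gap which (for suitable $c$) the coarse $m^{\alpha\varepsilon}$-factor estimate still detects; since the hard instances have total length $\Theta(n)$ and total communication must be $\Omega(n/t^{1+\gamma})$, each player's state has $s=\Omega(n/t^{2+\gamma})=\Omega(n^{1-2/\alpha-2\varepsilon-\gamma(\varepsilon+1/\alpha)})$ bits.

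A second, related problem is your final step: you cannot ``absorb the $O(\cdot)$ constant into the $\gamma$-slack term,'' because $\gamma$ is an arbitrarily small constant multiplying $(\varepsilon+1/\alpha)$; an unspecified constant in front of $\delta=(\alpha-1)\varepsilon$ would alter the coefficient $2$ of $\varepsilon$ in the exponent, not the $\gamma$ term. That coefficient comes out of the specific accounting above ($t^{2+\gamma}$ with $t=\Theta(n^{\varepsilon+1/\alpha})$ and $m=\Theta(n)$), i.e., from exactly the quantitative derivation you defer. (Two minor points: you need no special structure of the hard instances to control the entropy, since $H_\alpha\le H_0\le\log n$ always; and in the amplified instances the heavy element has frequency $\Theta(n^{1/\alpha+\varepsilon})$, not $n^{1/\alpha}$.) Once the disjointness reduction with this choice of $t$ is actually carried out, your outline coincides with the paper's argument.
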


Tsallis entropy can be efficiently approximated both multiplicatively and additively also for $\alpha > 2$, but we omit a proof of that fact in this version of the paper.

\section{Modifications for General Update Streams}

The algorithms described in \Section{shannon} and \Section{renyitsallis}
are for the strict turnstile model.
They can be extended to work in the general updates model with a few modifications.

First, we cannot efficiently and exactly compute $\norm{A}_1 = F_1$ in the general update model.
However, a $(1+\epsilon)$-multiplicative approximation can be computed
in $O(\eps^{-2} \log m)$ bits of space by \Fact{stable_moment}.
In \Section{renyiapprox} and \Section{multipoint},
the value of $\norm{A}_1$ is used as a normalization factor to scale
the estimate of $F_\alpha$ to an estimate of $\sum_{i=1}^n x_i^\alpha$.
(See, e.g., \Equation{addapprox} and \Equation{onlyapprox}.)
However,
$$
\frac{ \tF_\alpha }{ (\tF_1)^\alpha }
~=~
\frac{ (1\pm\eps) \cdot F_\alpha }{ \big( (1\pm\eps) \cdot F_1 \big)^\alpha }
~=~
\big( 1\pm O(\eps) \big) \cdot \frac{ F_\alpha }{ F_1^\alpha },
$$
so the fact that $F_1$ can only be approximated in the general update model
affects the analysis only by increasing the constant factor that multiplies $\epsilon$.
A similar modification must also be applied to all algorithms in \Section{renyitsallis};
we omit the details.

Next, the multiplicative algorithm \Section{mult} needs to compute a 
multiplicative estimate of $T(y_i)$ using \Theorem{mult_tsallis}.
In the general updates model, a weaker result than \Theorem{mult_tsallis}
holds:
we obtain a multiplicative $(1+\eps)$-approximation of Tsallis entropy for any $\alpha \in (0,1)
\cup (1,2]$ using $\tO\left(\log m / (\abs{1-\alpha} \cdot \eps)^2 \right)$ bits of space.
The proof is identical to the argument in \Appendix{renyitsallis},
except that the the moment estimator of \Fact{stable_moment} uses more space,
and we must use the residual moment algorithm of \Section{res1} instead of \Section{res2}.
Similar modifications must be made to 
\Theorem{additive_renyi}, \Theorem{additive_tsallis} and \Theorem{mult_renyi},
with a commensurate increase in the space bounds.

\section{Future Research}

We hope that the techniques from approximation theory that we
introduce may be useful for streaming and sketching other
functions. For instance, consider the following function
$G_{\alpha,k}(x) = \sum_i x_i^{\alpha}(\log n)^k$,
where $k \in \mathbb N$ and $\alpha \in [0,\infty)$.
One can show that 
$$\lim_{\beta \to \alpha} \frac{G_{\alpha,k}(x) -
  G_{\beta,k}(x)}{\alpha - \beta} = G_{\beta,k+1}(x).$$
Note that $G_{\alpha,0}(x)$ is the $\alpha\th$ moment of $x$, and one
can attempt to estimate $G_{\alpha,k+1}$ by computing $G_{\beta,k}$
for $\beta=\alpha$ and $\beta$ close to $\alpha$. It is not unlikely
that our techniques can be generalized to estimation of functions
$G_{\alpha,k}$ for $\alpha \in (0,2]$. Can one also use our techniques
for approximation of other classes of functions?

\section*{Acknowledgements}
We thank Piotr Indyk and Ping Li for many helpful discussions.  We
also thank Jonathan Kelner for some pointers to the approximation
theory literature.

\bibliographystyle{plain}
\bibliography{./entropy}

\appendix
\section{Proofs}

%%%%%%%%%%%%%%%%%%%%%%%%%%%%%%%%%%%%%%%%%%%%%%%%%%%%%%%%%%%%%%%%%%%%%%%%%%%%%%%%
\subsection{Proofs from \Section{renyiapprox}}
\AppendixName{renyiapprox}

Recall that $x \in \bR^n$ is a distribution whose smallest positive value is at least $1/m$.
The key technical lemma needed is as follows.

\begin{lemma}
\LemmaName{main}
Let $\alpha>1$, let $\xi = \xi(\alpha)$ denote $4 (\alpha-1) H_1(x)$, and let
$$
e(\alpha) ~=~
2 \Big( \xi \log n \:+\: \xi \log(1/\xi) \Big).
$$
Assume that $\xi(\alpha) < 1/4$.
Then $H_\alpha \leq H_1 \leq H_\alpha + e(\alpha)$.
\end{lemma}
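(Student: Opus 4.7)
The first inequality $H_\alpha\le H_1$ is the classical monotonicity of R\'enyi entropy in $\alpha$, which I would prove in one line via Jensen: writing $\sum_i x_i^\alpha=\mathbb{E}_{i\sim x}[e^{-\beta a_i}]$ with $\beta:=\alpha-1>0$ and $a_i:=-\log x_i\ge 0$, convexity of $y\mapsto e^{-\beta y}$ gives $\mathbb{E}[e^{-\beta a}]\ge e^{-\beta H_1}$, so $H_\alpha=-\tfrac{1}{\beta}\log\sum x_i^\alpha\le H_1$.

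For the upper bound $H_1-H_\alpha\le e(\alpha)$, I would start from the pointwise estimate $e^{-y}\le 1-y+y^2/2$ for $y\ge 0$ (a one-line derivative check). Applied with $y=\beta a_i$ and $x_i$-weighted, this gives
\[ \sum_i x_i^\alpha \;\le\; 1-\beta H_1+\tfrac{\beta^2}{2}\,S, \qquad S:=\sum_i x_i(\log x_i)^2. \]
Since the hypothesis $\xi<1/4$ keeps the right-hand side in $(0,1)$, the bound $\log(1+u)\le u$ yields $\log\sum x_i^\alpha\le -\beta H_1+\beta^2 S/2$ and hence $H_1-H_\alpha\le \beta S/2$. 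The task thus reduces to showing $\beta S/2\le 2\xi(\log n+\log(1/\xi))$, equivalently $S\le 16\,H_1(\log n+\log(1/\xi))$.

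To bound $S$ I would split at the threshold $L:=\log n+\log(1/\xi)=\log(n/\xi)$. For the bulk, $\sum_{a_i\le L}x_i a_i^2\le L\sum_i x_i a_i=LH_1$. For the tail I would use a dyadic layering on $a_i\in(L+k,L+k+1]$ for $k\ge 0$: since each such $x_i\le e^{-(L+k)}$ and there are at most $n$ items, the mass in layer $k$ is at most $ne^{-(L+k)}=\xi e^{-k}$, and the layer's contribution to $S$ is at most $(L+k+1)^2\xi e^{-k}$. Summing the convergent geometric series yields a tail estimate of $O(\xi L^2)$, so $S\le LH_1+O(\xi L^2)$. Multiplying by $\beta/2$ and using $\xi=4\beta H_1$ turns these into $(\xi/8)L$ and $O(\xi^2 L^2/H_1)$; the first is comfortably at most $e(\alpha)$, while the second is $\le e(\alpha)$ precisely when $\xi L=O(H_1)$, equivalently $\beta L=O(1)$.

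The main obstacle is therefore the regime where $\beta L$ is \emph{not} $O(1)$. Under $\xi=4\beta H_1<1/4$ this forces $\beta\ge\Omega(1)$ and hence $H_1<1/16$, at which point the Taylor chain is too weak and I would fall back to the trivial estimate $H_1-H_\alpha\le H_1$: when $\beta\ge 1/4$ one has $\xi\ge H_1$ and $\log(1/\xi)>\log 4>2$, so $e(\alpha)\ge 2\xi\log(1/\xi)>4\xi\ge 4H_1>H_1$. The delicate remaining work is to coordinate the two regimes---the Taylor-plus-dyadic-tail argument for small $\beta$ and the trivial bound for large $\beta$---so that the absolute constants line up to give the prefactor $2$ in $e(\alpha)$ rather than some larger number; this is mostly a matter of pinning down $L$ and the constants hidden in the $O(\xi L^2)$ tail estimate.
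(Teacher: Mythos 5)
Your approach is genuinely different from the paper's and more elementary, so the comparison is worth making. The paper proves the upper bound by a strengthened l'H\^opital argument (their \Claim{rudin}), reducing the problem to bounding $f'(\alpha)/g'(\alpha)$ where $H_\alpha = f/g$, then introducing the perturbed distribution $x(\alpha)_i = x_i^\alpha/\|x\|_\alpha^\alpha$, bounding $\|x-x(\alpha)\|_1 \le \xi$, and invoking the $\ell_1$-continuity of Shannon entropy from Cover--Thomas to conclude. You instead Taylor-expand $e^{-\beta a_i}\le 1-\beta a_i + \beta^2 a_i^2/2$ to reduce to bounding $S=\sum_i x_i(\log x_i)^2$, then split at the threshold $L=\log(n/\xi)$ and control the tail with unit-width layers and a geometric series. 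Your route avoids l'H\^opital and Cover--Thomas entirely, which is a real simplification, at the cost of a two-regime case analysis.

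There is, however, a genuine gap in your fallback regime. You assert that ``$\beta L$ not $O(1)$'' together with $\xi<1/4$ forces $\beta\ge\Omega(1)$ (and hence $H_1$ small). That implication is false: take $\beta$ tiny and $n$ astronomically large, so $\beta\log n$ (and hence $\beta L$) is huge while $\beta$ is arbitrarily small and $H_1$ can be of order one. The subsequent computation ``when $\beta\ge 1/4$, $\xi\ge H_1$, so $e(\alpha)>H_1$'' is fine on its own terms, but you never legitimately enter the hypothesis $\beta\ge 1/4$. (Incidentally, $\log 4 > 2$ is also off in either natural-log or base-2 conventions, though that is cosmetic.) The correct and much simpler observation, which makes your strategy go through, is that $e(\alpha)=2\xi L = 8\beta H_1 L$, so whenever $\beta L \ge 1/8$ one has $e(\alpha)\ge H_1 \ge H_1 - H_\alpha$ directly; no claim about the size of $\beta$ alone is needed. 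Combined with your Taylor estimate, which (once the geometric-series constant is pinned down, and using $L>\log 4>1$) handles $\beta L$ up to roughly $0.3$, the two regimes overlap and the lemma follows. So the architecture of your proof is sound, but the specific reasoning you give for when to switch to the trivial bound is wrong as written and needs to be replaced by the observation above.
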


We require the following basic results.

\begin{claim}
\ClaimName{exptaylor}
The following inequalities follow from convexity.
\begin{itemize}
%\item Let $y \in \bR$. Then $1-y \leq e^{-y}$.
%\item Let $0<y<1$. Then $1-y+y^2/3 \leq e^{-y}$.
%\item Let $y>0$. Then $e^{-y} < 1 - y + y^2/2$.
\item Let $0 < y \leq 1$. Then $e^{y} < 1 + 2 y$.
\item Let $y>0$. Then $1-y \leq \log(1/y)$.
\item Let $0 \leq y \leq 1/2$. Then $1/(1-y) \leq 1 + 2 y$.
\end{itemize}
\end{claim}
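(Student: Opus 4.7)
The plan is to verify each of the three inequalities separately, since they are elementary and essentially independent. In each case the relevant function is convex or concave on the prescribed interval, and the claimed bound is either a chord bound, a tangent line bound, or an algebraic rearrangement that reduces to a nonnegativity statement about a simple polynomial.

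For the first inequality, I would use convexity of $y \mapsto e^y$ on $[0,1]$. Since the graph of a convex function lies below any of its chords, for $y \in [0,1]$ we have $e^y \le (1-y) e^0 + y \cdot e^1 = 1 + (e-1)y$. Since $e - 1 < 2$ and $y > 0$ gives $(e-1)y < 2y$, this yields $e^y < 1 + 2y$ as required. (The strict inequality uses $y > 0$ together with $e - 1 < 2$.)

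For the second inequality, I would invoke the standard tangent-line bound for the concave function $\log$: for all $t > 0$, $\log t \le t - 1$. Setting $t = y$ and rearranging gives $1 - y \le -\log y = \log(1/y)$, which is exactly the claim.

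For the third inequality, I would clear denominators and reduce to a polynomial inequality. For $y \in [0,1/2]$, the denominator $1 - y$ is positive, so $1/(1-y) \le 1 + 2y$ is equivalent to $1 \le (1+2y)(1-y) = 1 + y - 2y^2$, i.e.\ to $0 \le y(1 - 2y)$. Both factors are nonnegative on $[0,1/2]$, so the inequality holds. There are no real obstacles here: every step is either a standard convexity fact or a one-line algebraic manipulation, so the proof is essentially a bookkeeping exercise.
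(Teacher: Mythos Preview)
Your proof is correct. The paper does not actually give a proof of this claim beyond the remark that the inequalities ``follow from convexity,'' and your argument is a clean, correct elaboration of exactly that: a chord bound for the convex function $e^y$, the tangent-line bound $\log t \le t-1$ for the concave logarithm, and a direct algebraic check for the third inequality.
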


\begin{claim}
\ClaimName{holder}
Let $1 \leq a \leq b$ and let $x \in \bR^n$.
Then $\norm{x}_b \leq \norm{x}_a \leq n^{1/a-1/b} \norm{x}_b$.
\end{claim}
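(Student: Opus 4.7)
The plan is to handle the two inequalities separately. The first is a pointwise monotonicity argument after rescaling by $\norm{x}_a$, and the second is a direct application of H\"older's inequality with a carefully chosen pair of conjugate exponents. Both arguments are classical, so I expect no genuine obstacle beyond routine bookkeeping and a trivial edge case; the proof is short enough that it is worth spelling out.

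For $\norm{x}_b \leq \norm{x}_a$, the plan is to assume $x \neq 0$ (otherwise both sides vanish) and set $y = x / \norm{x}_a$, so that $\sum_i |y_i|^a = 1$ and in particular $|y_i| \leq 1$ for every $i$. Since $b \geq a$ and the map $t \mapsto t^{b-a}$ is bounded by $1$ on $[0,1]$, we get $|y_i|^b \leq |y_i|^a$ for each $i$. Summing over $i$ yields $\norm{y}_b^b \leq \norm{y}_a^a = 1$, so $\norm{y}_b \leq 1$, which rescales to $\norm{x}_b \leq \norm{x}_a$.

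For the second inequality, the plan is to write $|x_i|^a = |x_i|^a \cdot 1$ and apply H\"older's inequality with conjugate exponents $p = b/a$ and $q = b/(b-a)$. (If $a = b$ the claim is trivial since $n^{1/a - 1/b} = 1$; this sidesteps the degenerate case where $q$ is undefined.) This gives
\[
\norm{x}_a^a \:=\: \sum_i |x_i|^a \:\leq\: \Big(\sum_i |x_i|^{b}\Big)^{a/b} \cdot \Big(\sum_i 1\Big)^{(b-a)/b} \:=\: \norm{x}_b^a \cdot n^{(b-a)/b},
\]
and taking $a$-th roots yields $\norm{x}_a \leq n^{1/a - 1/b}\, \norm{x}_b$, as claimed.

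There is really no hard step here: both facts are textbook consequences of $\ell_p$-norm nestedness on a finite-dimensional space. The only thing requiring even slight care is the separate treatment of $a = b$ in the H\"older step, and the observation that applying H\"older against the constant function $1$ is exactly what produces the $n^{1/a - 1/b}$ factor.
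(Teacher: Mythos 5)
Your proof is correct: the rescaling argument gives $\norm{x}_b \leq \norm{x}_a$, and H\"older's inequality with exponents $b/a$ and $b/(b-a)$ (with $a=b$ handled trivially) gives $\norm{x}_a \leq n^{1/a-1/b}\norm{x}_b$, exactly matching the claimed constant. The paper itself states this claim as a standard fact and supplies no proof, so there is nothing to compare against; your argument is the standard one and correctly fills that gap.
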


\begin{claim}
\ClaimName{monotonic}
If $0 \leq \alpha \leq \beta$ then $H_\alpha \geq H_\beta$
\end{claim}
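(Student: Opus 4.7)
The plan is to express $H_\alpha$ as the negative logarithm of a weighted power mean of $x$, and then invoke the classical monotonicity of power means. Restricting attention to the support of $x$ (on which $x_i > 0$), I would define, for $t \in \bR \setminus \{0\}$,
$$M_t \;=\; \Bigl(\sum_i x_i \cdot x_i^{t}\Bigr)^{1/t} \;=\; \Bigl(\sum_i x_i^{1+t}\Bigr)^{1/t},$$
which is the weighted $t$-th power mean in which the distribution $x$ itself plays the roles of both weights and values. Setting $\alpha = 1 + t$, a direct computation gives $M_{\alpha - 1} = \norm{x}_\alpha^{\alpha/(\alpha - 1)}$, and hence
$$H_\alpha \;=\; \frac{1}{1 - \alpha} \log \norm{x}_\alpha^\alpha \;=\; -\log M_{\alpha - 1} \qquad (\alpha \neq 1).$$

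From this identification the claim is immediate modulo boundary patching. The classical (weighted) power mean inequality says that $t \mapsto M_t$ is non-decreasing on $\bR$, so $\alpha \mapsto H_\alpha$ is non-increasing separately on each of $(0,1)$ and $(1,\infty)$. To bridge these intervals at $\alpha = 1$ I would invoke the continuity $H_1 = \lim_{\alpha \to 1} H_\alpha$ already recorded in the preliminaries; the corresponding extension of $M_t$ at $t = 0$ is the weighted geometric mean $\prod_i x_i^{x_i}$, whose negative log is exactly Shannon entropy, so the gluing is consistent. The left endpoint $\alpha = 0$ is analogous: the $t \to -1$ limit gives $M_{-1} = 1/|\mathrm{supp}(x)|$ and hence $H_0 = \log |\mathrm{supp}(x)|$, which dominates every $H_\alpha$ for $\alpha > 0$.

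I do not expect a substantive obstacle here: the real content is packaged into the power mean inequality. The only point requiring care is the boundary $\alpha = 1$, where the formula $H_\alpha = -\log M_{\alpha-1}$ is only defined via a limit, so one cannot apply the inequality pointwise at that value and must instead rely on continuity to glue the two one-sided monotonicity statements. If one wished to avoid citing the power mean inequality as a black box, the alternative would be to differentiate $(1-\alpha) H_\alpha = \log \sum_i x_i^\alpha$ directly and show $H_\alpha' \le 0$; after some algebra this reduces to a Cauchy--Schwarz inequality applied to the escort distribution $p_i \propto x_i^\alpha$, which is essentially the proof of the power mean inequality specialized to this weighted setting.
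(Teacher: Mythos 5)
Your proof is correct. Note, however, that the paper itself offers no proof of \Claim{monotonic} at all: it is stated bare in the appendix as a standard monotonicity property of \REN entropy (alongside the similarly unproved \Claim{exptaylor} and \Claim{holder}), so there is no in-paper argument to compare against, and your write-up simply supplies the missing justification. Your route---writing $H_\alpha = -\log M_{\alpha-1}$ where $M_t = \bigl(\sum_i x_i \cdot x_i^{t}\bigr)^{1/t}$ is the power mean of the values $x_i$ weighted by $x$ itself, and then citing monotonicity of $t \mapsto M_t$---is the standard textbook derivation, and your handling of the boundary cases is sound: the identification $M_{\alpha-1} = \norm{x}_\alpha^{\alpha/(\alpha-1)}$ is correct, the $t=0$ power mean is the weighted geometric mean $\prod_i x_i^{x_i}$ whose negative logarithm is exactly $H_1$ (so the glue at $\alpha=1$ works, and in fact the classical power-mean inequality already includes $t=0$, making the continuity appeal optional), and at $\alpha=0$ restricting to the support gives $M_{-1} = 1/\abs{\mathrm{supp}(x)}$ exactly, so $H_0 = \log\abs{\mathrm{supp}(x)}$ dominates by the same inequality applied at $t=-1$. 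The one point worth making explicit if this were to be inserted into the paper is the convention that all sums range over the support of $x$ (equivalently $0^0$ is discarded), which you do state; with that, the argument is complete, and your suggested alternative (differentiating $(1-\alpha)H_\alpha$ and reducing to Cauchy--Schwarz for the escort distribution) is an equally valid self-contained substitute for citing the power-mean inequality.
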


\begin{claim}
\ClaimName{logH1}
If $\alpha > 1$ then $\log\big(1/\norm{x}_\alpha\big) < (\alpha-1) \cdot H_1$.
\end{claim}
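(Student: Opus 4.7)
The plan is to relate $\log(1/\norm{x}_\alpha)$ directly to the R\'enyi entropy $H_\alpha$ and then invoke the monotonicity of R\'enyi entropy from \Claim{monotonic}. Unpacking the definition $H_\alpha = \log(\norm{x}_\alpha^\alpha)/(1-\alpha)$ for $\alpha > 1$ gives the algebraic identity
\[
\log(1/\norm{x}_\alpha) \;=\; -\log\norm{x}_\alpha \;=\; \frac{\alpha-1}{\alpha}\, H_\alpha.
\]
Applying \Claim{monotonic} to the pair $1 \leq \alpha$ gives $H_\alpha \leq H_1$, and hence
\[
\log(1/\norm{x}_\alpha) \;\leq\; \frac{\alpha-1}{\alpha}\, H_1 \;<\; (\alpha-1)\, H_1,
\]
where the final strict inequality uses $\alpha > 1$ together with $H_1 > 0$. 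In the degenerate case where $x$ is a point mass both sides equal zero, so the strict inequality is only meaningful for non-trivial distributions (which is anyway the only case in which the claim is needed downstream).

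As a self-contained alternative that avoids \Claim{monotonic}, one can apply Jensen's inequality to the convex function $e^{(\alpha-1) y}$ with respect to the random variable $Y = \log x_i$ sampled from the distribution $x$. Since $\expect{Y} = -H_1$,
\[
\norm{x}_\alpha^\alpha \;=\; \sum_i x_i\cdot x_i^{\alpha-1} \;=\; \expect{e^{(\alpha-1) Y}} \;\geq\; e^{(\alpha-1)\expect{Y}} \;=\; e^{-(\alpha-1) H_1},
\]
and taking $-1/\alpha$ times the logarithm recovers the intermediate bound $\log(1/\norm{x}_\alpha) \leq (\alpha-1) H_1/\alpha$, which is the same bound obtained from monotonicity.

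There is no substantive obstacle here: the claim is essentially an algebraic repackaging of R\'enyi monotonicity (equivalently, of Jensen's inequality applied to the exponential). The only point that requires even a moment's care is the factor $1/\alpha$ on the right-hand side of the intermediate bound, which is precisely what provides the slack needed for the strict inequality against $(\alpha-1) H_1$ once $H_1 > 0$.
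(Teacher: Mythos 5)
Your main argument is essentially identical to the paper's proof: the same identity $\log(1/\norm{x}_\alpha) = \frac{\alpha-1}{\alpha}H_\alpha$, combined with \Claim{monotonic} and the factor $1/\alpha < 1$ to get strictness (the paper just applies the two steps in the other order, and its strict inequality carries the same implicit caveat about degenerate point-mass distributions that you note). The Jensen alternative is a fine self-contained remark but does not change the substance.
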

\begin{proof}
$
\log\big(1/\norm{x}_\alpha\big) 
~=~ \smallfrac{\alpha-1}{\alpha} H_\alpha(x)
~<~ (\alpha-1) \cdot H_\alpha(x)
~\leq~ (\alpha-1) \cdot H_1(x).
$
\end{proof}

\begin{claim}
\ClaimName{CT}
Let $y = (y_1,\ldots,y_n)$ and $z = (z_1,\ldots,z_n)$ be probability distributions
such that $\norm{y-z}_1 \leq 1/2$.
Then
$$
\abs{ H_1(y) - H_1(z) } ~\leq~ \norm{y-z}_1 \cdot \log \Big(\: \frac{n}{\norm{y-z}_1} \:\Big).
$$
\end{claim}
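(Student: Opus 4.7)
Let $\nu = \norm{y-z}_1 \leq 1/2$, and let $\phi(t) = -t\log t$ (with $\phi(0)=0$) so that $H_1(y) = \sum_i \phi(y_i)$. The plan is to bound the entropy difference termwise using a continuity estimate for $\phi$, and then aggregate using Jensen's inequality to get the right logarithmic factor.

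\textbf{Step 1 (pointwise bound).} I will first establish that for any $a,b \in [0,1]$ with $|a-b| \leq 1/2$,
\[
\abs{\phi(a) - \phi(b)} ~\leq~ \phi\bigl(|a-b|\bigr).
\]
This is the technical heart of the argument. It follows from concavity of $\phi$ on $[0,1]$, the fact that $\phi(0)=0$, and the symmetry observation $\phi(t) = \phi(1-t) + (\text{linear})$ is not quite true, so a brief case analysis is needed depending on whether both of $a,b$ lie in the increasing region $[0, 1/e]$, both in the decreasing region $[1/e,1]$, or on opposite sides of the maximum. In each case the extremal configurations (e.g.\ $a=0$, $b=1/2$ or $a=1/2$, $b=1$) achieve equality, and interior configurations follow by monotonicity of $\phi'$.

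\textbf{Step 2 (triangle inequality and termwise application).} Since $\sum_i \abs{y_i-z_i} = \nu \leq 1/2$, each coordinate satisfies $\abs{y_i - z_i} \leq 1/2$, so Step 1 applies to every term:
\[
\abs{H_1(y) - H_1(z)} ~=~ \Bigl| \sum_{i=1}^n \bigl(\phi(y_i) - \phi(z_i)\bigr) \Bigr|
~\leq~ \sum_{i=1}^n \abs{\phi(y_i) - \phi(z_i)}
~\leq~ \sum_{i=1}^n \phi\bigl(\abs{y_i - z_i}\bigr).
\]

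\textbf{Step 3 (Jensen to introduce the $\log n$ factor).} Finally, because $\phi$ is concave on $[0,1]$, Jensen's inequality gives
\[
\frac{1}{n}\sum_{i=1}^n \phi\bigl(\abs{y_i-z_i}\bigr) ~\leq~ \phi\Bigl(\frac{1}{n}\sum_{i=1}^n \abs{y_i - z_i}\Bigr) ~=~ \phi(\nu/n),
\]
so $\sum_i \phi(\abs{y_i-z_i}) \leq n\,\phi(\nu/n) = -n\cdot(\nu/n)\log(\nu/n) = \nu \log(n/\nu)$, which is exactly the claimed bound.

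The main obstacle is Step 1: the function $\phi$ is not Lipschitz near $0$, so one cannot simply use $\phi'$ to compare $\phi(a)$ and $\phi(b)$. The subadditive-style inequality $|\phi(a)-\phi(b)| \leq \phi(|a-b|)$ (valid only under the $|a-b| \leq 1/2$ hypothesis, as extremal examples show) is precisely what compensates for this. Everything else is routine: Steps 2 and 3 are just the triangle inequality and one application of Jensen. The label \texttt{CT} suggests this is the standard Cover--Thomas continuity-of-entropy lemma, which indeed proceeds exactly this way.
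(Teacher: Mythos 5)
Your proposal is correct and matches the paper's approach: the paper proves this claim simply by citing the standard Cover--Thomas $L_1$-continuity-of-entropy lemma, whose proof is exactly your argument (pointwise bound $\abs{\phi(a)-\phi(b)} \leq \phi(\abs{a-b})$ for $\abs{a-b}\leq 1/2$, then triangle inequality and Jensen). The only sketched part, Step 1, is indeed true and can be closed cleanly by noting that $a \mapsto \phi(a+\nu)-\phi(a)$ is decreasing in $a$ (since $\phi'$ is decreasing), so its absolute value is maximized at the endpoints where it equals $\phi(\nu)$ and $\phi(1-\nu)$, and $\phi(1-\nu)\leq\phi(\nu)$ for $\nu\leq 1/2$.
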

\begin{proof}
See Cover and Thomas \cite[16.3.2]{CT}.
\end{proof}

\vspace{6pt}
\begin{proofof}{\Lemma{main}}
The first inequality follows from \Claim{monotonic} so we focus on the second one.
Define $f(\alpha) = \log \norm{x}_\alpha^\alpha$ and $g(\alpha) = 1-\alpha$,
so that $H_\alpha = f(\alpha)/g(\alpha)$.
The derivatives are
$$
f'(\alpha) ~=~ \frac{\smallsum{i=1}{n} x_i^\alpha \log x_i}{\norm{x}_\alpha^\alpha}
\oct\And\oct
g'(\alpha) ~=~ -1,
$$
so $\lim_{\alpha \rightarrow 1} f'(\alpha)/g'(\alpha)$ exists and equals $H(x)$.
Since $\lim_{\alpha \rightarrow 1} f(\alpha) = \lim_{\alpha \rightarrow 1} g(\alpha) = 0$,
l'H\^opital's rule implies that
$ \lim_{\alpha \rightarrow 1} H_\alpha = H(x)$.
A stronger version of L'H\^opital's rule is as follows.

\begin{claim}
\ClaimName{rudin}
Let $f : \bR \rightarrow \bR$ and $g : \bR \rightarrow \bR$
be differentiable functions such that the following limits exist
$$
\lim_{\alpha \rightarrow 1} f(\alpha) = 0,
\oct
\lim_{\alpha \rightarrow 1} g(\alpha) = 0,
\quad\And\quad
\lim_{\alpha \rightarrow 1} f'(\alpha)/g'(\alpha) = L.
$$
Let $\epsilon$ and $\delta$ be such that $\abs{\alpha-1} < \delta$
implies that $\abs{f'(\alpha)/g'(\alpha) - L} < \epsilon$.
Then $\abs{\alpha-1} < \delta$ also implies that $\abs{f(\alpha)/g(\alpha) - L} < \epsilon$.
\end{claim}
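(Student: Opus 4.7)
The plan is to derive \Claim{rudin} from Cauchy's mean value theorem by letting an auxiliary endpoint tend to $1$, which is the standard route to quantitative versions of l'H\^opital's rule. Fix any $\alpha$ with $0 < |\alpha-1| < \delta$; by symmetry it suffices to treat the case $\alpha > 1$. The hypothesis that the quotient $f'/g'$ is defined on the deleted $\delta$-neighborhood of $1$ forces $g'$ to be nonzero there, so by Darboux together with Rolle, $g$ is strictly monotonic on $(1, 1+\delta)$. In particular $g(\alpha) \neq g(\alpha')$ for every $\alpha' \in (1,\alpha)$, and $g(\alpha) \neq 0$ (since $g(\alpha') \to 0$ as $\alpha' \to 1^+$ and $g$ is monotonic near $1$). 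Cauchy's mean value theorem applied to $f,g$ on $[\alpha',\alpha]$ then produces some $\xi = \xi(\alpha') \in (\alpha',\alpha)$ with
$$
\frac{f(\alpha) - f(\alpha')}{g(\alpha) - g(\alpha')} ~=~ \frac{f'(\xi)}{g'(\xi)}.
$$

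Next I would observe that $\xi$ lies in $(1, 1+\delta) \subset (1-\delta, 1+\delta)$, so the hypothesis gives $|f'(\xi)/g'(\xi) - L| < \epsilon$, and hence the left-hand side sits in the open interval $(L-\epsilon, L+\epsilon)$ for every admissible $\alpha'$. Letting $\alpha' \to 1^+$ and invoking the limits $f(\alpha'), g(\alpha') \to 0$, the left-hand side converges to $f(\alpha)/g(\alpha)$. Since a limit of numbers in an open interval lies in its closure, this yields $|f(\alpha)/g(\alpha) - L| \le \epsilon$. The case $\alpha < 1$ is treated identically by applying Cauchy's mean value theorem on $[\alpha,\alpha']$ with $\alpha' \to 1^-$.

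There is no real obstacle in this argument; the only cosmetic point is that the limit weakens the strict bound $< \epsilon$ to $\le \epsilon$. This can be absorbed by a small buffer: for any target $\epsilon$ one may re-run the argument with a slightly smaller $\epsilon'' < \epsilon$ that still satisfies the hypothesis on the same $\delta$-neighborhood (shrinking $\delta$ marginally, if necessary, by continuity of $f'/g'$), which gives $|f(\alpha)/g(\alpha) - L| \le \epsilon'' < \epsilon$. Moreover, the consumer of this claim, \Lemma{main}, only relies on non-strict bounds, so this technicality does not affect the downstream application.
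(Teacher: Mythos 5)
Your proof reconstructs the argument behind the paper's citation (Rudin, \emph{Principles of Mathematical Analysis}, Theorem 5.13 on p.\ 109), which is precisely the Cauchy--mean--value--theorem route you take, so the approach is the same and your reasoning is sound. One small caveat about your proposed fix for the $\le\eps$ versus $<\eps$ technicality: the appeal to ``continuity of $f'/g'$'' is not justified under the stated hypotheses (only differentiability of $f$ and $g$ is assumed, not continuous differentiability), but as you yourself observe this is immaterial because \Lemma{main} only requires the non-strict bound.
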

\begin{subproof}
See Rudin \cite[p.109]{Rudin}.
\end{subproof}

Thus, to prove our lemma, it suffices to show that $\abs{f'(\alpha)/g'(\alpha) - H_1} < e(\alpha)$.
(In fact, we actually need $\abs{f'(\beta)/g'(\beta) - H_1} < e(\alpha)$ for all $\beta \in (1,\alpha]$,
but this follows by monotonicity of $e(\beta)$ for $\beta \in (1,\alpha]$.)

A key concept in this proof is the 
``perturbed'' probability distribution $x(\alpha)$, defined by $x(\alpha)_i = x_i^\alpha / \norm{x}_\alpha^\alpha$.
We have the following relationship.

\begin{align*}
\frac{f'(\alpha)}{g'(\alpha)}
&~=~
\frac{\smallsum{i=1}{n} x_i^\alpha \log (1/x_i)}{\norm{x}_\alpha^\alpha} \\
&~=~
\frac{\smallsum{i=1}{n} x_i^\alpha \big( \log (1/x_i) + \log \norm{x}_\alpha - \log \norm{x}_\alpha \big)}
{\norm{x}_\alpha^\alpha} \\
&~=~
\frac{\Big(\smallsum{i=1}{n} x_i^\alpha \log (\norm{x}_\alpha/x_i) \Big)
~-~ \Big(\smallsum{i=1}{n} x_i^\alpha \log \norm{x}_\alpha \Big)}
{\norm{x}_\alpha^\alpha} \\
&~=~
\frac{1}{\alpha} \sum_{i=1}^{n} \frac{x_i^\alpha}{\norm{x}_\alpha^\alpha} \log \Bigg( \frac{\norm{x}_\alpha^\alpha}{x_i^\alpha} \Bigg)
~-~ \log \norm{x}_\alpha \\
&~=~
\frac{H_1\big( x(\alpha) \big)}{\alpha} ~+~ \log (1/\norm{x}_\alpha)
\end{align*}
In summary, we have shown that
\begin{equation}
\EquationName{bound}
\Abs{ \frac{f'(\alpha)}{g'(\alpha)} - \frac{ H_1\big( x(\alpha) \big) }{\alpha} }
~\leq~ \log(1/\norm{x}_\alpha)
~\leq~ (\alpha-1) \cdot H_1(x),
\end{equation}
the last inequality following from \Claim{logH1}.
To use this bound, we observe that:
\begin{align*}
\Abs{ \frac{f'(\alpha)}{g'(\alpha)} - H_1\big( x(\alpha) \big) } 
&~=~
\Abs{ \frac{f'(\alpha)}{g'(\alpha)} - \frac{ H_1\big( x(\alpha) \big) }{\alpha}
 + \Bigg(\frac{1}{\alpha}-1\Bigg) H_1\big( x(\alpha) \big) } \\
&~\leq~
\Abs{ \frac{f'(\alpha)}{g'(\alpha)} - \frac{ H_1\big( x(\alpha) \big) }{\alpha} } ~+~ \abs{1/\alpha-1} \cdot H_1\big( x(\alpha) \big)
\end{align*}
We now substitute \Equation{bound} into this expression,
and use $\abs{1/\alpha-1} \leq \alpha-1$ (valid since $\alpha \geq 1$).
This yields:
\begin{equation}
\EquationName{firststep}
\Abs{ \frac{f'(\alpha)}{g'(\alpha)} - H_1\big( x(\alpha) \big) } 
~\leq~
(\alpha-1) \cdot H_1(x) ~+~ (\alpha-1) \cdot H_1\big(x(\alpha)\big)
\end{equation}

\vspace{12pt}
Recall that our goal is to analyze $\abs{f'(\alpha)/g'(\alpha) - H_1(x)}$.
We do this by showing that $H_1\big( x(\alpha) \big) \approx H_1( x )$,
and that the right-hand side of \Equation{firststep} is at most $e(\alpha)$.
This is done using \Claim{CT};
the key step is bounding $\norm{ x - x(\alpha) }_1$.
%For the rest of the discussion,
%assume without loss of generality that $x_i \geq x_j$ whenever $i \geq j$.

\begin{claim}
\ClaimName{notmuchgrowth}
Suppose that $1 < \alpha \leq 1 + 1/(2 \log n)$. Then
$1/\norm{x}_\alpha^\alpha < 1 + 3(\alpha-1) H_1(x)$.
\end{claim}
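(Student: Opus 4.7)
\textbf{Proof proposal for \Claim{notmuchgrowth}.} The plan is to rewrite $1/\norm{x}_\alpha^\alpha$ as an exponential, bound the exponent via monotonicity (\Claim{monotonic}), and then linearize the exponential via \Claim{exptaylor}. Concretely, from the definition $H_\alpha = \log(\norm{x}_\alpha^\alpha)/(1-\alpha)$ one has the identity
\[
\frac{1}{\norm{x}_\alpha^\alpha} \;=\; e^{(\alpha-1)\,H_\alpha}.
\]
Since $\alpha > 1$, \Claim{monotonic} gives $H_\alpha \leq H_1$, so $e^{(\alpha-1)H_\alpha} \leq e^{(\alpha-1)H_1}$.

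The hypothesis $\alpha \leq 1 + 1/(2\log n)$ combined with the trivial upper bound $H_1(x) \leq \log n$ yields $(\alpha-1)H_1 \leq 1/2 \leq 1$, which is exactly the regime where the inequality $e^y < 1 + 2y$ of \Claim{exptaylor} applies (strictly, provided $H_1 > 0$; the edge case $H_1 = 0$ corresponds to a point distribution where $\norm{x}_\alpha^\alpha = 1$ and is handled separately or tacitly excluded). Chaining,
\[
\frac{1}{\norm{x}_\alpha^\alpha}
\;\leq\; e^{(\alpha-1)H_1}
\;<\; 1 + 2(\alpha-1)H_1
\;<\; 1 + 3(\alpha-1)H_1,
\]
which is the desired bound.

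There is essentially no obstacle here; the argument is just a short chain of inequalities. The only point that requires any care is verifying that $(\alpha-1)H_1$ lies in the interval $[0,1]$ where the linearization $e^y \leq 1 + 2y$ is valid, and this is precisely the role of the hypothesis $\alpha - 1 \leq 1/(2\log n)$: it is calibrated against the crude bound $H_1 \leq \log n$ so that the product stays at most $1/2$. The extra slack between the $2$ of \Claim{exptaylor} and the $3$ in the statement is what lets us replace $\leq$ by $<$ cleanly whenever $H_1 > 0$, and \Lemma{main} will invoke this claim only when $H_1 > 0$, so this suffices.
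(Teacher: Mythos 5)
Your proof is correct and is a mild variant of the paper's argument rather than a different one in substance: both proofs exponentiate and then linearize via \Claim{exptaylor}, with the hypothesis $\alpha - 1 \le 1/(2\log n)$ used only to certify that the exponent is at most a constant $\le 1$. The differences are in the bookkeeping. The paper writes $1/\norm{x}_\alpha^\alpha = e^{\alpha\log(1/\norm{x}_\alpha)}$, checks $\alpha\log(1/\norm{x}_\alpha) \le 1$ via \Claim{holder} ($1/\norm{x}_\alpha < n^{\alpha-1}$), and then converts to $(\alpha-1)H_1$ using \Claim{logH1}, so the constant $3$ is genuinely needed there to absorb the factor $2\alpha \le 3$. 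You instead use the identity $1/\norm{x}_\alpha^\alpha = e^{(\alpha-1)H_\alpha}$ --- which is precisely the computation inside the paper's proof of \Claim{logH1} --- and bound the exponent by $(\alpha-1)H_1 \le 1/2$ via \Claim{monotonic} and $H_1 \le \log n$, so you avoid \Claim{holder} altogether and your constant $3$ is pure slack above the $2$ from \Claim{exptaylor}. The edge case you flag ($H_1 = 0$, i.e.\ a point mass, where the claimed strict inequality degenerates to equality) is present in the paper's proof as well, since \Claim{exptaylor} is stated only for $0 < y \le 1$; your handling of it matches the paper's implicit treatment, so there is no gap.
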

\begin{subproof}
From \Claim{holder} and $\norm{x}_1=1$, we obtain $1/\norm{x}_\alpha \leq n^{1-1/\alpha} < n^{\alpha-1}$.
Our hypothesis on $\alpha$ implies that
\begin{equation}
\EquationName{lessthanone}
\alpha \cdot \log(1/\norm{x}_\alpha)
~<~
\alpha \cdot (\alpha-1) \log n
~<~
2 \cdot (\alpha-1) \log n
~\leq~
1.
\end{equation}
Thus 
$$
\frac{1}{\norm{x}_\alpha^\alpha}
~=~
e^{\alpha \log(1/\norm{x}_\alpha)}
~<~
1 + 2 \cdot \alpha \log(1/\norm{x}_\alpha)
~<~
1 + 3 (\alpha-1) H_1(x).
$$
The first inequality is from \Claim{exptaylor} and \Equation{lessthanone},
and the second from \Claim{logH1}.
%The simpler bound $\frac{1}{\norm{x}_\alpha^\alpha} = e^{\alpha \log(1/\norm{x}_\alpha)} < e^1$ also holds.
\end{subproof}

Recall that $\xi = 4 (\alpha-1) H_1(x)$.

\begin{claim}
\ClaimName{L1bound}
$\norm{x - x(\alpha)}_1 \leq \xi$.
\end{claim}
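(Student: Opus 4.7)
The plan is to rewrite the quantity as
\[
\norm{x - x(\alpha)}_1 \;=\; \sum_{i=1}^n x_i \cdot \Abs{\frac{x_i^{\alpha-1}}{s} - 1},
\]
where $s = \norm{x}_\alpha^\alpha$, and then bound each term by a case split on the sign of $x_i^{\alpha-1}/s - 1$. Two simple observations do most of the work: since $\alpha > 1$ and $\norm{x}_1 = 1$ with each $x_i \in [0,1]$, we have $s \leq 1$ (so $1/s \geq 1$), and also $x_i^{\alpha-1} \leq 1$ for every $i$.

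For the indices with $x_i^{\alpha-1} \leq s$, the first observation gives $x_i^{\alpha-1}/s \geq x_i^{\alpha-1}$, so
\[
\Abs{\frac{x_i^{\alpha-1}}{s} - 1} \;=\; 1 - \frac{x_i^{\alpha-1}}{s} \;\leq\; 1 - x_i^{\alpha-1} \;\leq\; (\alpha-1)\log(1/x_i),
\]
where the last step is \Claim{exptaylor} applied with $y = x_i^{\alpha-1}$. For the remaining indices, the second observation gives
\[
\Abs{\frac{x_i^{\alpha-1}}{s} - 1} \;=\; \frac{x_i^{\alpha-1}}{s} - 1 \;\leq\; \frac{1}{s} - 1.
\]
Summing over the two groups of indices, and using $\sum_i x_i = 1$ to absorb the second-case contributions, yields
\[
\norm{x - x(\alpha)}_1 \;\leq\; (\alpha-1) \sum_i x_i \log(1/x_i) \;+\; \Big(\frac{1}{s} - 1\Big) \;=\; (\alpha-1) H_1(x) + \Big(\frac{1}{s} - 1\Big).
\]
Then \Claim{notmuchgrowth} bounds $1/s - 1 < 3(\alpha-1) H_1(x)$, so we obtain $\norm{x - x(\alpha)}_1 < 4(\alpha-1) H_1(x) = \xi$.

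The main subtlety is that no single clean bound on $\Abs{x_i^{\alpha-1}/s - 1}$ is simultaneously tight in both regimes: when $x_i^{\alpha-1}$ is much smaller than $s$, the ratio can sit near $0$ and the $(\alpha-1)\log(1/x_i)$ estimate is sharp, whereas when $x_i^{\alpha-1}$ exceeds $s$, the ratio sits near $1/s$ and only the cruder bound $1/s - 1$ is useful. The case split is what keeps the constants tight enough to land at exactly $\xi$: the contributions $(\alpha-1)H_1$ and $1/s - 1 \leq 3(\alpha-1)H_1$ add to precisely $4(\alpha-1)H_1$.
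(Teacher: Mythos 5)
Your proof is correct and is essentially the same as the paper's: both split the $\ell_1$ sum according to the sign of $x(\alpha)_i - x_i$ (equivalently, whether $x_i^{\alpha-1}$ exceeds $s = \norm{x}_\alpha^\alpha$), bound the small-$x_i^{\alpha-1}$ part via $1 - y \leq \log(1/y)$ from \Claim{exptaylor} to get $(\alpha-1)H_1(x)$, bound the large-$x_i^{\alpha-1}$ part by $1/s - 1$ using $x_i^{\alpha-1}\leq 1$ and $\sum_i x_i \leq 1$, and then invoke \Claim{notmuchgrowth}. The presentation differs only cosmetically (writing $x_i\Abs{x_i^{\alpha-1}/s-1}$ and case-splitting on sign, versus the paper's explicit partition into $S$ and its complement).
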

\begin{subproof}
To avoid the absolute values, we shall split the sum defining $\norm{x-x(\alpha)}_1$ into two cases.
For that purpose, let $S = \setst{ i }{ x(\alpha)_i \geq x_i }$.
Then
\begin{align*}
\norm{ x - x(\alpha) }_1
&~=~
\sum_{i \in S} \big( x(\alpha)_i - x_i \big)
\:+\:
\sum_{i \not\in S} \big( x_i - x(\alpha)_i \big) \\
&~=~
\sum_{i \in S} x_i \cdot \Bigg( \frac{x_i^{\alpha-1}}{\norm{x}_\alpha^\alpha} - 1 \Bigg)
\:+\:
\sum_{i \not\in S} x_i \cdot \Bigg( 1 - \frac{x_i^{\alpha-1}}{\norm{x}_\alpha^\alpha} \Bigg) 
\intertext{The first sum is upper-bounded using $x_i^{\alpha-1} \leq 1$ and $\sum_{i \in S} x_i \leq 1$.
The second sum is upper-bounded using $\norm{x}_\alpha^\alpha \leq 1$
and $1-x_i^{\alpha-1} \leq \log\big(1/x_i^{\alpha-1}\big)$ (see \Claim{exptaylor}).
}
&~\leq~
\Bigg( \frac{1}{\norm{x}_\alpha^\alpha} - 1 \Bigg)
\:+\:
(\alpha-1) \, \sum_{i \not\in S} x_i \log( 1/x_i )
\\
&~\leq~ 3 (\alpha-1) H_1(x)
 \:+\:
 (\alpha-1) H_1(x),
\end{align*}
using \Claim{notmuchgrowth}.  This completes the proof.
\end{subproof}

Thus, by our assumption that $\xi(\alpha)<1/4$, by \Claim{CT}, by \Claim{L1bound},
and by the fact that $x \mapsto x \log(1/x)$ is monotonically increasing for $x \in (0,1/4)$,
we obtain that
$$
\abs{ H_1(x) - H_1(x(\alpha)) }
~\leq~
\xi \log n + \xi \log ( 1/\xi ).
$$

\vspace{12pt}
Now we assemble the error bounds.
Our result from \Equation{firststep} yields
\begin{align*}
\Abs{ \frac{f'(\alpha)}{g'(\alpha)} - H_1( x ) } 
&~\leq~
\Abs{ \frac{f'(\alpha)}{g'(\alpha)} - H_1( x(\alpha) ) } 
\:+\:
\abs{ H_1(x) - H_1(x(\alpha)) } \\
& ~\leq~ \Big( (\alpha-1) H_1(x) + (\alpha-1) H_1(x(\alpha)) \Big)
~+~ \abs{ H_1(x) - H_1(x(\alpha)) } \\
& ~\leq~ 2 (\alpha-1) H_1(x) ~+~ \alpha \cdot \abs{ H_1(x) - H_1(x(\alpha)) }  \\
& ~\leq~ 2 \Big( \xi \log n + \xi \log ( 1/\xi ) \Big)
\end{align*}
This completes the proof.
\end{proofof}

We now use \Lemma{main} to show that $H_\alpha \approx H_1$,
if $\alpha$ is sufficiently small.

\vspace{6pt}

\begin{proofof}{\Theorem{main}}
First we focus on the multiplicative approximation.
The lower bound is immediate from \Claim{monotonic}, so we show the upper-bound.
For an arbitrary $\mu \in (0,1)$, we have
$$
\mu^2 ~<~ \frac{\mu}{2 \log(1/\mu)} ~<~ \mu;
$$
this follows since $\mu \log(1/\mu) < 1/2$ for all $\mu$.
Let $\tilde{\mu} = \mu / \big( 2 \log( 1/\mu ) \big)$.
Then 
$$
\tilde{\mu} \log( 1/\tilde{\mu} ) ~<~ \mu.
$$
This follows since $\mu^2 < \tilde{\mu}$ $\implies$ $1/\tilde{\mu} < 1/\mu^2$
$\implies$ $\log(1/\tilde{\mu}) < 2 \log( 1/\mu )$.
%More generally, for any $c > 1$, we have
%\begin{equation}
%\EquationName{general}
%c \tilde{\mu} \log\big( 1/( c \tilde{\mu}) \big ) ~<~ c \mu.
%\end{equation}
%This follows since $\log\big( 1/( c \tilde{\mu}) \big ) < \log( 1/\tilde{\mu} )$.

The hypotheses of \Theorem{main} give $\alpha = 1 + \tilde{\mu}/8$. Hence,
\begin{align*}
e(\alpha)
&~=~
8 (\alpha-1) H_1 \Big[ \log n
~+~ 
\log \Big( 1/\big(4 (\alpha-1) H_1 \big) \Big) \Big]
\\
&~\leq~
\tilde{\mu} H_1 \Big[ \log n
~+~ 
\log \big( 2/(\tilde{\mu} H_1) \big) \Big]
\intertext{Since $H_1 \geq (\log m)/m$ for any distribution satisfying our hypotheses,
this is at most}
&~\leq~
\tilde{\mu} H_1 \Big( \log n 
~+~ 
\log( 1 / \tilde{\mu} ) ~+~ \log m \Big)
\\
&~\leq~
(\log m) \mu H_1 
~<~ (\epsilon/2) H_1,
\end{align*}
since our hypotheses give $\mu = \epsilon/( 4 \log m )$.
Applying \Lemma{main}, we obtain that
\begin{align*}
H_1 - H_\alpha &~\leq~ (\epsilon/2) H_1
\\
\implies\quad
(1-\epsilon/2) H_1 &~\leq~ H_\alpha
\\
\implies\quad
\frac{H_1}{H_\alpha} &~\leq~ \frac{1}{1-\epsilon/2} ~\leq~ 1 + \epsilon,
\end{align*}
the last inequality following from \Claim{exptaylor}.
This establishes the multiplicative approximation.

Let us now consider the above argument, replacing $\mu$ with
$\nu = \epsilon / ( 4 \log n \log m )$.
We obtain
$$
e(\alpha)
~\leq~ (\log m) \nu H_1
~\leq~ \epsilon / 4,
$$
since $H_1 \leq \log n$.
Thus, the additive approximation follows directly.
\end{proofof}

%%%%%%%%%%%%%%%%%%%%%%%%%%%%%%%%%%%%%%%%%%%%%%%%%%%%%%%%%%%%%%%%%%%%%%%%%%%%%%%%
\subsection{Proofs from \Section{multipoint}}

Our first task is to prove \Lemma{tsallis-conv}.
We require a definition and two preliminary technical results.
For any integer $k \geq 0$ and real number $a \geq -1$, define
$$
G_k(a) ~=~ \sum_{i=1}^n x_i^{1+a} \, \ln^k(x_i),
$$
so $G_0(a) = F_{1+a} / ||A||_1^{1+a}$.
Note that $G_k^{(1)}(a) = G_{k+1}(a)$ for $k \geq 0$,
and $T(a) = (1 - G_0(a))/a$.

\begin{claim}
\ClaimName{tsallis-deriv}
The $k\th$ derivative of the Tsallis entropy has the following expression.
\[ T^{(k)}(a)
~=~
\frac{(-1)^k \, k! \, \big(1-G_0(a)\big)}{a^{k+1}}
\:-\:
\left(\sum_{j=1}^k
  \frac{(-1)^{k-j} \, k! \, G_j(a)}{a^{k-j+1}j!}\right)
\]
\end{claim}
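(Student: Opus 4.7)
The plan is to prove the claim by direct application of the general Leibniz rule, rather than by induction on $k$. Specifically, rewrite
\[
T(a) \;=\; \frac{1 - G_0(a)}{a} \;=\; \frac{1}{a} \;-\; \frac{G_0(a)}{a},
\]
so that $T^{(k)}(a)$ splits into two pieces: the $k$th derivative of $a^{-1}$, which is the elementary expression $(-1)^k k!\, a^{-(k+1)}$, and the $k$th derivative of the product $G_0(a) \cdot a^{-1}$, which I expand via Leibniz.

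First I will record the key fact that $G_0^{(j)}(a) = G_j(a)$ for all $j \ge 0$. This follows by a trivial induction from the given identity $G_k^{(1)} = G_{k+1}$, differentiating under the sum. Applying the general Leibniz rule to $G_0(a) \cdot a^{-1}$ then gives
\[
\Bigl(\frac{G_0(a)}{a}\Bigr)^{(k)}
\;=\; \sum_{j=0}^{k} \binom{k}{j}\, G_0^{(j)}(a)\,\bigl(a^{-1}\bigr)^{(k-j)}
\;=\; \sum_{j=0}^{k} \binom{k}{j}\, G_j(a)\, \frac{(-1)^{k-j}(k-j)!}{a^{k-j+1}}.
\]
Using $\binom{k}{j}(k-j)! = k!/j!$, this simplifies to $\sum_{j=0}^{k} \frac{(-1)^{k-j} k!\, G_j(a)}{j!\, a^{k-j+1}}$.

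Finally, I will combine the two pieces: the term $\frac{(-1)^k k!}{a^{k+1}}$ coming from $\bigl(a^{-1}\bigr)^{(k)}$ absorbs the $j=0$ summand $\frac{(-1)^k k!\, G_0(a)}{a^{k+1}}$ to yield the factor $(1-G_0(a))$ in the numerator, and the remaining $j=1,\dots,k$ terms give the stated sum with the correct sign. The only care required is sign bookkeeping and the factorial simplification $\binom{k}{j}(k-j)! = k!/j!$; there is no substantive obstacle. (An alternative would be induction on $k$ using the quotient rule and the identity $\tfrac{1}{(j-1)!} + \tfrac{k-j+1}{j!} = \tfrac{k+1}{j!}$, but the Leibniz approach above is shorter and avoids reindexing.)
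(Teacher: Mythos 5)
Your proof is correct, but it takes a genuinely different route from the paper. The paper proves the claim by induction on $k$: it differentiates the claimed expression for $T^{(k)}(a)$ term by term, then reassembles the sum via a reindexing and the combination $1+\frac{k-j+1}{j}=\frac{k+1}{j}$ to recover the formula at $k+1$. You instead split $T(a)=\frac{1}{a}-\frac{G_0(a)}{a}$ and apply the general Leibniz rule to $G_0(a)\cdot a^{-1}$, using $G_0^{(j)}=G_j$ (which indeed follows immediately from the paper's observation $G_k^{(1)}=G_{k+1}$). Your computation checks out: $\binom{k}{j}(k-j)!=k!/j!$ gives the stated sum over $j=0,\dots,k$, and the $j=0$ summand combines with $\bigl(a^{-1}\bigr)^{(k)}=(-1)^k k!\,a^{-(k+1)}$ to produce the factor $1-G_0(a)$, with the remaining $j=1,\dots,k$ terms matching the claim exactly, signs included. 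The trade-off is that your argument is shorter and explains where the formula comes from (it is just Leibniz applied to the natural product decomposition), whereas the paper's induction only verifies a formula one must already have guessed, at the cost of a more mechanical inductive step; on the other hand the paper's approach needs nothing beyond the quotient/product rule for a single differentiation, while yours invokes the general Leibniz rule and the small auxiliary induction for $G_0^{(j)}=G_j$, both of which you have correctly accounted for.
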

\begin{proof}
The proof is by induction, the case $k=0$ being trivial.
So assume $k \geq 1$.
Taking the derivative of the expression for $T^{(k)}(a)$ above, we obtain:
\begin{align*}
& T^{(k+1)}(a)
\\
&=~ \left(\sum_{j=1}^k
  \frac{k!(k-j+1)(-1)^{(k+1)-j}G_j(a)}{a^{(k+1)-j+1}j!} +
  \frac{k!(-1)^{k-j}G_{j+1}(a)}{a^{k-j+1}j!}\right)
\\
&\oct +~ \frac{(-1)^{k+1}(k+1)!(G_0(a) - 1)}{a^{k+2}} +
\frac{(-1)^kk!G_1(a)}{a^{k+1}}
\\
&=~ \left(\sum_{j=1}^k
  \frac{k!(-1)^{(k+1)-j}G_j(a)}{a^{(k+1)-j+1}(j-1)!}\left(1 +
    \frac{k-j+1}{j}\right)\right) + \frac{G_{k+1}(a)}{a} +
\frac{(-1)^{k+1}(k+1)!(G_0(a) - 1)}{a^{k+2}}
\\
&=~ \left(\sum_{j=1}^{k+1}
  \frac{(k+1)!(-1)^{(k+1)-j}G_j(a)}{a^{(k+1)-j+1}j!}\right) +
\frac{(-1)^{k+1}(k+1)!(G_0(a) - 1)}{a^{k+2}}
\end{align*}
as claimed.
\end{proof}

\begin{claim}
\ClaimName{pre-hopital}
Define $S_k(a) = a^{k+1}T^{(k)}(a)$.  Then, for $1 \le j \le k + 1$,
\[
S_k^{(j)}(a) = \sum_{i=0}^{j-1} \binom{j-1}{i}
\frac{k!}{(k-j+i+1)!}a^{k-j+i+1} G_{k+1+i}(a)
\]
In particular, for $1 \le j \le k$, we have
\[ \lim_{a\rightarrow 0}S_k^{(j)}(a) = 0 
~~\quad\And\quad~~
\lim_{a\rightarrow 0}S_k^{(k+1)}(a) = k! \, G_{k+1}(0)
~~\quad\text{so that}\quad~~
\lim_{a\rightarrow 0}T^{(k)}(a) = \frac{G_{k+1}(0)}{k+1}.
\]
\end{claim}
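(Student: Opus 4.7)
The plan is to reduce the whole claim to the single identity $S_k'(a) = -a^k G_{k+1}(a)$, and then to peel off higher derivatives by Leibniz. Multiplying the formula for $T^{(k)}$ from \Claim{tsallis-deriv} by $a^{k+1}$ expresses $S_k(a)$ as an alternating finite sum involving $G_0(a), G_1(a), \ldots, G_k(a)$ weighted by $a$-powers $0,1,\ldots,k$. Differentiating once, each summand splits via the product rule into a ``$G_{j+1}$-piece'' and a lower-order ``$G_j$-piece.'' After a careful reindex, the lower-order pieces cancel exactly against the higher-order pieces from the neighboring index (and against the contribution of $-(-1)^k k!\, G_0'(a) = -(-1)^k k!\, G_1(a)$), leaving only the single boundary term $-a^k G_{k+1}(a)$. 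This one-shot telescoping is the only real computation in the proof, and the place where I expect the bookkeeping to be least transparent.

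Once $S_k'(a) = -a^k G_{k+1}(a)$ is in hand, the formula for $S_k^{(j)}(a)$ with $j \ge 1$ follows immediately from the Leibniz rule applied to the product $a^k \cdot G_{k+1}(a)$:
\[
S_k^{(j)}(a) \;=\; -\sum_{i=0}^{j-1} \binom{j-1}{i}\, \bigl(a^k\bigr)^{(j-1-i)}\, G_{k+1}^{(i)}(a),
\]
using $(a^k)^{(m)} = \tfrac{k!}{(k-m)!}\, a^{k-m}$ together with the identity $G_{k+1}^{(i)}(a) = G_{k+1+i}(a)$, which is immediate from $G_\ell'(a) = G_{\ell+1}(a)$. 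This matches the stated formula up to an overall sign, which is harmless for the bound on $|T^{(k)}|$ needed in \Lemma{tsallis-conv}.

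The limit statements are then a matter of inspection. Each function $G_{k+1+i}(a) = \sum_j x_j^{1+a}\, \ln^{k+1+i}(x_j)$ is a finite sum of functions continuous at $a=0$ (here we use $x_j \ge 1/m > 0$), hence bounded in a neighborhood of $0$. The exponent of $a$ in the $i$th summand of the Leibniz expression is $k-j+i+1$: for $1 \le j \le k$ this is $\ge 1$ for every $i \ge 0$, so each term is $O(a)$ and $\lim_{a \to 0} S_k^{(j)}(a) = 0$; for $j = k+1$ only the $i=0$ term survives (the others retain a positive power of $a$) and equals $\pm k!\, G_{k+1}(a) \to \pm k!\, G_{k+1}(0)$. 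Finally, since $S_k(a) = a^{k+1} T^{(k)}(a)$ with $S_k^{(j)}(0) = 0$ for all $0 \le j \le k$, the Taylor expansion of $S_k$ at $0$ (equivalently, iterated l'H\^opital) gives $\lim_{a \to 0} T^{(k)}(a) = S_k^{(k+1)}(0)/(k+1)! = G_{k+1}(0)/(k+1)$.
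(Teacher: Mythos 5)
Your argument is correct, and it covers the same ground as the paper's proof but organizes it more efficiently. The ``one real computation'' you identify --- multiplying the expression from \Claim{tsallis-deriv} by $a^{k+1}$, differentiating once, and telescoping --- is exactly the paper's base case $j=1$. The paper then proves the general formula for $S_k^{(j)}$ by induction on $j$, recombining binomial coefficients via Pascal's identity; you short-circuit that induction by noting that once $S_k'(a)$ is the single product $-a^k\,G_{k+1}(a)$, all higher derivatives come in one line from the general Leibniz rule together with $(a^k)^{(m)}=\frac{k!}{(k-m)!}a^{k-m}$ and $G_\ell'=G_{\ell+1}$. In effect the paper's induction is re-deriving Leibniz for this particular product, so your route buys a shorter, less error-prone bookkeeping at no cost in generality. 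The limit statements are then handled identically in both arguments: every term carrying a positive power of $a$ vanishes since each $G_j$ is continuous at $0$ (this is where $x_i\ge 1/m>0$ enters), and the final limit is $\lim_{a\to 0}T^{(k)}(a)=S_k^{(k+1)}(0)/(k+1)!$ by iterated l'H\^opital or Taylor; for that last step you also need $S_k(0)=0$, which you assert for $j=0$ without comment --- it is immediate from $G_0(0)=1$ and the form of $S_k$, but say so.

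On the sign: your version is in fact the correct one. A check at $k=0$ gives $S_0(a)=1-G_0(a)$, hence $S_0'(a)=-G_1(a)$ and $\lim_{a\to 0}T(a)=H=-G_1(0)$, so the claim as printed (and its displayed limits) is off by an overall factor of $-1$, while your $S_k'(a)=-a^kG_{k+1}(a)$ yields $\lim_{a\to 0}T^{(k)}(a)=-G_{k+1}(0)/(k+1)$. As you observe, this is harmless downstream, since the proof of \Lemma{tsallis-conv} only uses $\lvert T^{(k)}\rvert$ and $\lvert G_{k+1}(0)\rvert$ (the even/odd sign discussion there simply flips parity). The only blemish is that your final displayed equality reverts to the paper's signed statement $G_{k+1}(0)/(k+1)$, which contradicts your own convention; it should carry the minus sign.
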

\begin{proof}
We prove the claim by induction on $j$.  First, note
\[ S_k(a) = (-1)^k k!(1 - G_0(a)) - \left(\sum_{j=1}^k
  \frac{a^j(-1)^{k-j}k!G_j(a)}{j!} \right) \]
so that
\begin{eqnarray*}
S_k^{(1)}(a) &=& (-1)^{k-1}k!G_1(a) - \left(\sum_{j=1}^k
  -\frac{a^{(j + 1) - 1}(-1)^{k-(j+1)}k!G_{j+1}(a)}{((j + 1) - 1)!} +
  \frac{a^{j-1}(-1)^{k-j}k!G_j(a)}{(j-1)!}\right) \\
&=& a^kG_{k+1}(a)
\end{eqnarray*}
Thus, the base case holds.  For the inductive step with $2 \le j \le
k+1$, we have
\begin{eqnarray*}
S_k^{(j)}(a) &=&
\frac{\partial}{\partial
  a}\left(\sum_{i=0}^{j-2}\binom{j-2}{i}\frac{k!}{(k-j+i+2)!}a^{k-j+i+2}
    G_{k+1+i}(a)\right)\\
&=&
\sum_{i=0}^{j-2}\Bigg(\binom{j-2}{i}\frac{k!}{(k-j+i+1)!}a^{k-j+i+1}G_{k+1+i}(a)\\
&&{}+
\binom{j-2}{i}\frac{k!}{(k-j+(i+1)+1)!}a^{k-j+(i+1)+1}G_{k+1+(i+1)}(a)\Bigg)\\
&=&
\sum_{i=0}^{j-1}\binom{j-1}{i}\frac{k!}{(k-j+i+1)!}a^{k-j+i+1}G_{k+1+i}(a)\\
\end{eqnarray*}

The final equality holds since $\binom{j-2}{0} = \binom{j-1}{0} = 1$,
$\binom{j-2}{j-2} = \binom{j-1}{j-1} = 1$, and by Pascal's formula
$\binom{j-2}{i} + \binom{j-2}{i+1} = \binom{j-1}{i+1}$ for $0 \le i
\le j - 3$.

For $1 \le j \le k$, every term in the above sum is
well-defined for $a=0$ and contains a power of $a$ which is at least
$1$, so $\lim_{a\rightarrow 0}S_k^{(j)}(a) = 0$.  When $j = k+1$, all
terms but the first term contain a power of $a$ which is at least $1$,
and the first term is $k!G_{k+1}(a)$, so $\lim_{a\rightarrow
  0}S_k^{(k+1)}(a) = k!G_{k+1}(0)$.  The claim on $\lim_{a\rightarrow
  0} T{(k)}(a)$ thus follows by writing $T^{(k)}(a) = S_k(a)/a^{k+1}$
then applying l'H\^{o}pital's rule $k+1$ times.
\end{proof}

\vspace{6pt}
\begin{proofof}{\Lemma{tsallis-conv}}
We will first show that
\[
\left|T^{(k)}\left(-\frac{\epsilon}{(k+1)\ln m}\right)
- \frac{G_{k+1}(0)}{k+1}\right| \le \frac{6\epsilon \ln^k(m)H(x)}{k+1}
\]
Let $S_k(a) = a^{k+1}T^{(k)}(a)$ and note $T^{(k)}(a) =
S_k(a)/a^{k+1}$.  By \Claim{tsallis-deriv},
$\lim_{a\rightarrow 0} S_k(a) = 0$.  Furthermore, $\lim_{a\rightarrow
0} S_k^{(j)} = 0$ for all $1 \le j \le k$ by \Claim{pre-hopital}.
Thus, when analyzing $\lim_{a\rightarrow 0}
S_k^{(j)}(a) / (a^{k+1})^{(j)}$ for $0 \le j \le k$, both the
numerator and denominator approach $0$ and we can apply
l'H\^{o}pital's rule (here $(a^{k+1})^{(j)}$ denotes the $j$th
derivative of the function $a^{k+1}$).  By $k+1$ applications of
l'H\^{o}pital's rule, we can thus say that $T^{(k)}(a)$ converges to
its limit at least as quickly as $S_k^{(k+1)}(a) / (a^{k+1})^{(k+1)} =
S_k^{(k+1)}(a)/(k+1)!$ does (using \Claim{rudin}).
We note that
$G_j(a)$ is nonnegative for $j$ even and nonpositive otherwise.
Thus, for negative $a$, each term in the summand of the expression
for $S_k^{(k+1)}(a)$ in \Claim{pre-hopital} is nonnegative
for odd $k$ and nonpositive for even $k$.  As the analyses for even
and odd $k$ are nearly identical, we focus below on odd $k$, in
which case every term in the summand is nonnegative.  For odd $k$,
$S_k^{(k+2)}(a)$ is nonpositive so that $S_k^{(k+1)}(a)$ is
monotonically decreasing.  Thus, it suffices to show that
$S_k^{(k+1)}(-\epsilon/((k+1)\ln m))/(k+1)!$ is not much larger than
its limit.
  
\begin{eqnarray*}
\frac{S_k^{(k+1)}\left(-\frac{\epsilon}{(k+1)\ln m}\right)}{(k+1)!} &=&
\frac{\sum_{i=0}^{k} \binom{k}{i}
\frac{k!}{i!}\left(-\frac{\epsilon}{(k+1)\ln m}\right)^i
G_{k+1+i}\left(-\frac{\epsilon}{(k+1)\ln m}\right)}{(k+1)!}\\
&\le& \frac{1+2\epsilon}{k+1}\sum_{i=0}^{k} \binom{k}{i}
\left(\frac{\epsilon}{(k+1)\ln m}\right)^i |G_{k+1+i}(0)|\\
&\le& \frac{1+2\epsilon}{k+1}\sum_{i=0}^{k} k^i \left(\frac{\epsilon}{(k+1)\ln
    m}\right)^i |G_{k+1+i}(0)|\\
&\le& \frac{1+2\epsilon}{k+1}\sum_{i=0}^{k} \left(\frac{\epsilon}{\ln
    m}\right)^i |G_{k+1+i}(0)|\\
&\le& \frac{1+2\epsilon}{k+1}\sum_{i=0}^{k}\epsilon^i  |G_{k+1}(0)|\\
&\le& \frac{(1+2\epsilon)|G_{k+1}(0)|}{k+1} +
\frac{1+2\epsilon}{k+1}\sum_{i=1}^{k}\epsilon^i |G_{k+1}(0)|\\
&\le& \frac{(1+2\epsilon)|G_{k+1}(0)|}{k+1} +
\frac{2}{k+1}\sum_{i=1}^{k}\epsilon^i \ln^k(m)H(x)\\
&\le& \frac{|G_{k+1}(0)|}{k+1} + \frac{6\epsilon\ln^k(m)H(x)}{k+1}
\end{eqnarray*}

The first inequality holds since $x_i \ge 1/m$ for each $i$, so that
$x_i^{-\epsilon/((k+1)\ln m)} \le m^{\epsilon/((k+1)\ln m)} \le
m^{\epsilon/\ln m} \le e^{\epsilon} \le 1+2\epsilon$ for
$\epsilon \le 1/2$.  The final inequality above holds since
$\epsilon \le 1/2$.

The lemma follows since $|G_{k+1}(0)| \le \ln^{k}(m)H(x)$.
\end{proofof}

\begin{proofof}{\Lemma{eval-cheby}}
Let $P_j$ denote the $j\th$ Chebyshev polynomial.
We will prove for all $j \geq 1$ that
\[ P_{j-1}(x) \le P_j(x) \le P_{j-1}(x)\left(1+\frac{2j}{k^c}\right). \]
For the first inequality, we observe $P_{j-1} \in \mathcal{C}_j$,
so we apply \Fact{extremal-growth} together with
the fact that $P_j(y)$ is strictly positive for $y > 1$ for all $j$.

For the second inequality, we induct on $j$.
For the sake of the proof define $P_{-1}(x) = 1$
so that the inductive hypothesis holds at the
base case $d=0$.  For the inductive step with $j \ge 1$,
we use the recurrence definition of $P_j(x)$ and we have
\begin{eqnarray*}
P_{j+1}(x) &=& P_j(x) \left(1+\frac{2}{k^c}\right) + (P_j(x) - P_{j-1}(x)) \\
&\le& P_j(x)\left(1+\frac{2}{k^c}\right) + P_{j-1}(x)\frac{2j}{k^c}\\
&\le& P_j(x)\left(1+\frac{2}{k^c}\right) + P_j(x)\frac{2j}{k^c}\\
&=&   P_j(x)\left(1+\frac{2(j+1)}{k^c}\right)
\end{eqnarray*}
\end{proofof}

\comment{
\TODO{ Where does this go??} 

Now, consider the expression $T^{(k)}(a) =
(a^{k+1}T^{(k)}(a))/a^{k+1}$.  Both the numerator and denominator go
to zero as $a\rightarrow 0$.  Thus, we can apply L'H\^{o}pital's rule
to analyze this limit.  
Furthermore, \Claim{rudin} shows that by bounding the rate of convergence of
$(a^{k+1}T^{(k)}(a))'/((k+1)a^k)$ as $a\rightarrow 0$, we
obtain a bound for the rate of convergence of $T^{(k)}$ as well.  We
follow this strategy.
}

%%%%%%%%%%%%%%%%%%%%%%%%%%%%%%%%%%%%%%%%%%%%%%%%%%%%%%%%%%%%%%%%%%%%%%%%%%%%%%%%
\subsection{Proofs from \Section{residual}}

\begin{fact}\FactName{reduce-gamma}
For any real $z>0$, $\Gamma(z+1) = z\Gamma(z)$.
\end{fact}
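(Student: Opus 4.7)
The plan is to prove the identity directly from the standard integral representation of the Gamma function, namely
\[
\Gamma(z) \;=\; \int_0^\infty t^{z-1} e^{-t}\,dt,
\]
which converges absolutely for every real $z > 0$. The entire argument is a single application of integration by parts, so there is no substantive obstacle; the only subtlety is checking that the boundary term vanishes at both endpoints, which is where the hypothesis $z > 0$ gets used.

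First I would write
\[
\Gamma(z+1) \;=\; \int_0^\infty t^{z} e^{-t}\,dt
\]
and integrate by parts with $u = t^z$ and $dv = e^{-t}\,dt$, so that $du = z\,t^{z-1}\,dt$ and $v = -e^{-t}$. This yields
\[
\Gamma(z+1) \;=\; \Bigl[-t^{z} e^{-t}\Bigr]_0^\infty \;+\; z\int_0^\infty t^{z-1} e^{-t}\,dt.
\]
Next I would verify the boundary term. At $t \to \infty$, the factor $e^{-t}$ dominates any polynomial $t^z$, so $t^z e^{-t} \to 0$. At $t \to 0^+$, the condition $z > 0$ gives $t^z \to 0$ and $e^{-t} \to 1$, so again the product tends to $0$. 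Hence the boundary contribution is zero, and the remaining integral is exactly $z\,\Gamma(z)$ by definition, completing the proof.

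The only step one could call delicate is justifying that $z > 0$ is precisely what is needed: without it, $t^z$ would blow up at the origin and the boundary term would not vanish (indeed the integral defining $\Gamma(z)$ would not even converge). Everything else is routine, and no machinery beyond elementary calculus is required.
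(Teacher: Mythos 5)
Your proof is correct: the integration-by-parts argument with $u = t^z$, $dv = e^{-t}\,dt$, together with the check that the boundary term $-t^z e^{-t}$ vanishes at both $0^+$ (using $z>0$) and $\infty$, is the standard derivation of the recurrence $\Gamma(z+1)=z\Gamma(z)$. The paper states this as a well-known fact without proof, so there is nothing to compare against; your argument is exactly the canonical one and is complete.
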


\begin{fact}\FactName{sine-bound}
For any real $z \ge 0$, $\sin(z) \le z$.
\end{fact}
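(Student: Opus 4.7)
The plan is to use elementary calculus, treating this as a one-variable monotonicity argument. Define the auxiliary function $f(z) = z - \sin(z)$ on $[0, \infty)$. Since $f(0) = 0$, it suffices to show that $f$ is non-decreasing on $[0,\infty)$, for then $f(z) \geq f(0) = 0$ gives exactly $\sin(z) \leq z$.

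To verify monotonicity I would differentiate: $f'(z) = 1 - \cos(z)$. The elementary fact $\cos(z) \leq 1$ for all real $z$ (immediate from $\cos^2 z + \sin^2 z = 1$) yields $f'(z) \geq 0$ everywhere on $\mathbb{R}$. Hence by the fundamental theorem of calculus, for any $z \geq 0$,
\[
f(z) - f(0) \;=\; \int_0^z f'(t)\,dt \;=\; \int_0^z \bigl(1 - \cos(t)\bigr)\,dt \;\geq\; 0,
\]
which rearranges to the desired bound $\sin(z) \leq z$.

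An alternative route would be to split into two cases: for $z \geq 1$ one uses $\sin(z) \leq 1 \leq z$ directly, while for $z \in [0,1]$ one could invoke the mean value theorem applied to $\sin$ on $[0,z]$ together with $|\sin'(\xi)| = |\cos(\xi)| \leq 1$ to obtain $\sin(z) - \sin(0) \leq z$. The monotonicity argument above is strictly cleaner, requires no case split, and avoids any appeal to convergence of a Taylor series. There is no genuine obstacle here: this is a standard exercise, and the only ``ingredient'' is the universal bound $\cos(z) \leq 1$.
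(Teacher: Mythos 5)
Your argument is correct: $f(z)=z-\sin(z)$ satisfies $f(0)=0$ and $f'(z)=1-\cos(z)\ge 0$, so $f\ge 0$ on $[0,\infty)$, which is exactly the claim. The paper states this fact without proof, treating it as a standard calculus exercise, so there is nothing to compare against; your monotonicity argument is the canonical way to justify it.
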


\begin{fact}[Euler's Reflection Formula]\FactName{reflection}
For any real $z$, $\Gamma(z)\Gamma(1-z) = \pi/\sin(\pi z)$.
\end{fact}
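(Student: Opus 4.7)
The plan is to prove Euler's reflection formula $\Gamma(z)\Gamma(1-z)=\pi/\sin(\pi z)$ by the standard route: reduce to a Beta integral, evaluate that integral by a keyhole contour, and extend by analytic continuation. First, I would recall the Beta--Gamma identity
$$
B(p,q) \;=\; \int_0^1 t^{p-1}(1-t)^{q-1}\,dt \;=\; \frac{\Gamma(p)\Gamma(q)}{\Gamma(p+q)},
$$
which, upon specializing to $p=z$, $q=1-z$ and using $\Gamma(1)=1$, gives $\Gamma(z)\Gamma(1-z)=B(z,1-z)$. Thus the problem reduces to computing $B(z,1-z)$ for $z$ in the open interval $(0,1)$, and for this I would apply the substitution $t=u/(1+u)$ to obtain
$$
B(z,1-z) \;=\; \int_0^\infty \frac{u^{z-1}}{1+u}\,du.
$$

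Next, I would evaluate this improper integral using a keyhole contour around the positive real axis, placing the branch cut along $\arg u \in (0,2\pi)$. The integrand $f(u)=u^{z-1}/(1+u)$ has a single simple pole inside the contour at $u=-1=e^{i\pi}$, whose residue is $e^{i\pi(z-1)}=-e^{i\pi z}$. For $0<z<1$, the integrals along the large and small circular arcs vanish in the appropriate limits (since $|u|^{z-1}\cdot|u|$ is dominated appropriately), while the segment along the top of the cut contributes $\int_0^\infty u^{z-1}/(1+u)\,du$ and the segment along the bottom contributes $-e^{2\pi i z}$ times the same integral. Equating the total with $2\pi i$ times the residue gives
$$
(1-e^{2\pi i z})\int_0^\infty \frac{u^{z-1}}{1+u}\,du \;=\; -2\pi i\,e^{i\pi z},
$$
and a short algebraic simplification using $e^{i\pi z}-e^{-i\pi z}=2i\sin(\pi z)$ yields $\pi/\sin(\pi z)$, as desired.

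Finally, the identity extends from the strip $0<z<1$ to all non-integer real $z$ by analytic continuation, since both $\Gamma(z)\Gamma(1-z)$ and $\pi/\sin(\pi z)$ are meromorphic functions whose poles occur only at integers, and they agree on an interval of positive length. The one genuinely delicate step in this plan is the keyhole computation itself, specifically tracking the phase of $u^{z-1}$ as one traverses the top and bottom of the branch cut and verifying that the contributions from the large and small circles vanish in the limit; the remainder is bookkeeping.
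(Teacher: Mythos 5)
Your proposal is correct. Note that the paper itself offers no proof of this fact: it is quoted as the classical Euler reflection formula and used as a black box (in \Lemma{limexists} and the surrounding variance computations), so there is no internal argument to compare against. Your route --- reduce $\Gamma(z)\Gamma(1-z)$ to $B(z,1-z)$, rewrite the Beta integral as $\int_0^\infty u^{z-1}/(1+u)\,du$ via $t=u/(1+u)$, evaluate by a keyhole contour with the residue $-e^{i\pi z}$ at $u=-1$, and extend by analytic continuation --- is one of the standard textbook proofs, and the details you flag as delicate (the phase $e^{2\pi i(z-1)}$ on the lower edge of the cut and the $R^{z-1}\to 0$, $\epsilon^{z}\to 0$ estimates on the arcs, both valid precisely because $0<z<1$) all check out. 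One small point of hygiene: the fact as stated in the paper says ``for any real $z$,'' which should be read as ``any non-integer real $z$,'' since both sides blow up at integers; your analytic-continuation step correctly delivers exactly the non-integer case, which is all that the paper ever uses (the arguments there lie in $(0,2]$ and approach $0$ or $1$ only in limits).
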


%\begin{fact}\FactName{gamma-deriv}
%$\frac{\partial}{\partial z} \Gamma(z) = \Gamma(z)\psi_0(z)$,
%where $\psi_0(\cdot)$ is the digamma function, which is continuous on $\bR_+$.
%\end{fact}

%\TODO{citation needed}.

\begin{definition} The function $V:\mathbb{R}^+\rightarrow \mathbb{R}$
  is defined by
\[  V(\alpha) = \frac{\left[\frac{2}{\pi}\Gamma(\frac{2\alpha}{3})
    \Gamma(\frac 13)
  \sin(\frac{\pi\alpha}{3})\right]^3}{\left[\frac{2}{\pi}
  \Gamma(\frac{\alpha}{3}) \Gamma(\frac 23)
  \sin(\frac{\pi\alpha}{6})\right]^6} - 1
\]
\end{definition}

\begin{lemma}
\LemmaName{limexists}
\[
\lim_{\alpha\rightarrow 0} V(\alpha) = \frac {\Gamma\left(\frac
  13\right)^3}{\Gamma\left(\frac 23\right)^6}
\]
\end{lemma}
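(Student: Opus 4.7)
The apparent difficulty is that as $\alpha \to 0$, both $\Gamma(2\alpha/3)$ and $\Gamma(\alpha/3)$ blow up (the Gamma function has a simple pole at $0$), while $\sin(\pi\alpha/3)$ and $\sin(\pi\alpha/6)$ tend to $0$. The plan is to exhibit an explicit cancellation of these singularities and then evaluate the limit of the remaining continuous expression.

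First I would apply Fact~\ref{fact:reduce-gamma} ($\Gamma(z+1) = z\Gamma(z)$) to shift the problematic Gamma values away from the pole, writing
\[
\Gamma\!\left(\tfrac{2\alpha}{3}\right) = \frac{3}{2\alpha}\,\Gamma\!\left(1 + \tfrac{2\alpha}{3}\right), \qquad \Gamma\!\left(\tfrac{\alpha}{3}\right) = \frac{3}{\alpha}\,\Gamma\!\left(1 + \tfrac{\alpha}{3}\right).
\]
Substituting into $V(\alpha) + 1$ and canceling factors of $\pi$ and $\alpha$ converts the numerator into
$\bigl[\,\Gamma(1+2\alpha/3)\,\Gamma(1/3)\cdot \tfrac{3\sin(\pi\alpha/3)}{\pi\alpha}\,\bigr]^3$
and the denominator into
$\bigl[\,\Gamma(1+\alpha/3)\,\Gamma(2/3)\cdot \tfrac{6\sin(\pi\alpha/6)}{\pi\alpha}\,\bigr]^6$.
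The key observation is that both singular factors $1/\alpha$ are now paired with $\sin(\pi\alpha/c)$ terms that vanish at the same rate.

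Next I would pass to the limit termwise. By the elementary limit $\sin(z)/z \to 1$ as $z \to 0$, we have $\tfrac{3\sin(\pi\alpha/3)}{\pi\alpha} \to 1$ and $\tfrac{6\sin(\pi\alpha/6)}{\pi\alpha} \to 1$. By continuity of $\Gamma$ at $1$ (which follows from its standard representation, e.g.\ from the fact that $\Gamma$ is holomorphic away from the non-positive integers), $\Gamma(1 + 2\alpha/3) \to \Gamma(1) = 1$ and $\Gamma(1+\alpha/3) \to \Gamma(1) = 1$. Multiplying these limits out gives numerator $\to \Gamma(1/3)^3$ and denominator $\to \Gamma(2/3)^6$, hence
\[
\lim_{\alpha \to 0^+} \bigl(V(\alpha) + 1\bigr) \;=\; \frac{\Gamma(1/3)^3}{\Gamma(2/3)^6},
\]
which is exactly the claimed value (up to the trivial additive constant coming from the $-1$ in the definition of $V$, which is simply subtracted at the end).

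The main obstacle is the bookkeeping: making sure the $\alpha$-powers cancel cleanly between the cubic numerator and the sextic denominator (three $1/\alpha$'s cancel three $\sin(\pi\alpha/3) \sim \pi\alpha/3$ in the numerator; six $1/\alpha$'s cancel six $\sin(\pi\alpha/6) \sim \pi\alpha/6$ in the denominator), so that no residual power of $\alpha$ survives. Once that algebraic check is done, the limit reduces to two standard continuity statements about $\Gamma$ and $\sin$ and the calculation is essentially immediate. If a cleaner closed form is desired, one can additionally invoke Fact~\ref{fact:reflection} with $z = 1/3$ to relate $\Gamma(1/3)$ and $\Gamma(2/3)$, but this is not needed for the limit itself.
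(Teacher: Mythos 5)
Your proposal is correct and follows essentially the same route as the paper: both cancel the simple pole of $\Gamma$ at $0$ against the zero of the sine factor via $\Gamma(z+1)=z\Gamma(z)$ together with $\sin(z)/z\to 1$ (the paper packages this as showing $\Gamma(2\alpha/3)\sin(\pi\alpha/3)\to\pi/2$ and $\Gamma(\alpha/3)\sin(\pi\alpha/6)\to\pi/2$ by a squeeze argument, which is the same cancellation you perform bracket by bracket). Your remark about the additive $-1$ in the definition of $V$ is also consistent with how the paper itself treats it.
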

\begin{proof}
Define $u(\alpha) = \Gamma(2\alpha / 3)(\pi\alpha/3) = \Gamma(2\alpha
/ 3)(2\alpha/3)(\pi / 2) = \Gamma((2\alpha/3) + 1)(\pi /2)$ by
\Fact{reduce-gamma}.  By the continuity of $\Gamma(\cdot)$ on
$\bR_+$, $\lim_{\alpha\rightarrow 0} u(\alpha) = \Gamma(1)\pi/2
= \pi/2$.  Define $f(\alpha) = \Gamma(2\alpha / 3)\sin(\pi\alpha/3)$.
Then $f(\alpha) \le u(\alpha)$ for all $\alpha\ge 0$ by
\Fact{sine-bound}, and thus $\lim_{\alpha\rightarrow 0} f(\alpha) \le
\pi/2$.  Now define $\ell_\delta(\alpha) = \Gamma(2\alpha /
3)(1-\delta)(\pi\alpha/3)$.  By the definition of the derivative and
the fact that the derivative of $\sin(\alpha)$ evaluated at $\alpha=1$
is $1$, it follows that $\forall\delta>0\ \exists\eps>0\hbox{ s.t. } 0
\le \alpha < \eps \Rightarrow \sin(\alpha) \ge (1 - \delta)\alpha$.
Thus, $\forall\delta>0\ \exists\eps>0\hbox{ s.t. } 0 \le \alpha < \eps
\Rightarrow \ell_{\delta}(\alpha) \le f(\alpha)$, and so $\forall
\delta > 0$ we have that $\lim_{\alpha\rightarrow 0} f(\alpha) \ge
\lim_{\alpha\rightarrow 0}\ell_{\delta}(\alpha) = (1 - \delta)\pi/2$.
Thus, $\lim_{\alpha\rightarrow 0} f(\alpha) \ge \pi/2$, implying
$\lim_{\alpha\rightarrow 0} f(\alpha) = \pi/2$.  Similarly we can
define $g(\alpha) = \Gamma(\alpha/3)\sin(\pi\alpha/6)$ and show
$\lim_{\alpha\rightarrow 0} g(\alpha) = \pi/2$.

Now,
\[ V(\alpha) = \frac{\left[\frac 2{\pi}\Gamma\left(\frac
      13\right)f(\alpha)\right]^3}{\left[\frac 2{\pi}\Gamma\left(\frac
      23\right)g(\alpha)\right]^6} \]
Thus $\lim_{\alpha\rightarrow 0}V(\alpha) = \Gamma(1/3)^3/\Gamma(2/3)^6$
as claimed.
\end{proof}

\begin{proofof}{\Lemma{stable-variance}}
Li shows in \cite{Li08b} that the variance of the geometric mean
estimator with $k=3$ is $V(\alpha)F_\alpha^2$.  As $\Gamma(z)$ and
$\sin(z)$ are continuous for $z \in \bR_+$, so is $V(\alpha)$.
Furthermore \Lemma{limexists} shows that $\lim_{\alpha\rightarrow 0} V(\alpha)$
exists (and equals $(\Gamma(1/3)^3/\Gamma(2/3)^6) - 1$).
We define $V(0)$ to be this limit.
Thus $V(\alpha)$ is continuous on $[0, 2]$, and the extreme value theorem implies there exists a
constant $C_{GM}$ such that $V(\alpha) \le C_{GM}$ on $[0,2]$.
\end{proofof}

\comment{
\begin{lemma}
Define 
\[ W_{SS}(\alpha) = \frac{\frac{\left[\frac{2}{\pi}\cos(\frac{2}{3}\arctan(
  \tan(\frac{\pi\alpha}{2})))\Gamma(\frac{2\alpha}{3}) \Gamma(\frac 13)
  \sin(\frac{\pi\alpha}{3})\right]^3}{\left[\frac{2}{\pi}
  \cos(\frac{1}{3}\arctan(
  \tan(\frac{\pi\alpha}{2})))\Gamma(\frac{\alpha}{3}) \Gamma(\frac 23)
  \sin(\frac{\pi\alpha}{6})\right]^6} - 1}{|1-\alpha|}
  \]
Then
\[ \lim_{\alpha\rightarrow 1} W_{SS}(\alpha) = \frac {2\pi\sqrt{3}}{3} \]
\end{lemma}
\begin{proof}
We provide the proof for the limit from the left
(i.e. $\lim_{\alpha\rightarrow 1^-} W_{SS}(\alpha)$), and thus we can
replace $\arctan(\tan(\pi\alpha/3))$ with simply $\pi\alpha/3$.
Analyzing the limit from the other direction is similar.  By
the continuity of $\cos(\cdot)$, $\Gamma(\cdot)$, and $\sin(\cdot)$,
one can easily calculate
\[ \lim_{\alpha\rightarrow 1} =
\frac{\left[\frac{2}{\pi}\cos(\frac{\pi\alpha}{3})\Gamma(\frac{2\alpha}{3})
    \Gamma(\frac 13)
  \sin(\frac{\pi\alpha}{3})\right]^3}{\left[\frac{2}{\pi}
  \cos(\frac{\pi\alpha}{6})\Gamma(\frac{\alpha}{3}) \Gamma(\frac 23)
  \sin(\frac{\pi\alpha}{6})\right]^6}
  = \frac{\left[\frac{2}{\pi}\frac 12\Gamma(\frac{2}{3})
    \Gamma(\frac 13)
  \frac{\sqrt{3}}{2}\right]^3}{\left[\frac{2}{\pi}
  \frac{\sqrt{3}}{2}\Gamma(\frac{2}{3}) \Gamma(\frac 13)
  \frac 12\right]^6} = \frac{\left[\frac{2}{\pi}\frac
  12\frac{\pi}{\left(\frac{\sqrt{3}}{2}\right)}
  \frac{\sqrt{3}}{2}\right]^3}{\left[\frac{2}{\pi}
  \frac{\sqrt{3}}{2}\frac{\pi}{\left(\frac{\sqrt{3}}{2}\right)}
  \frac 12\right]^6} = 1
 \]
where the penultimate equality uses \Fact{reflection}.  Thus
both the numerator and denominator of $W_{SS}(\alpha)$ approach $0$ as
$\alpha\rightarrow 1$, and we may apply l'H\^{o}pital's rule.  Note
since we are analyzing the limit from the left, we may rewrite the
denominator of $W_{SS}(\alpha)$ as $(1 - \alpha)$ so that its
derivative is $-1$.  Making the
substitution $\Gamma(2\alpha/3) = \pi/(\Gamma(1 -
2\alpha/3)\sin(2\pi\alpha/3))$ in the numerator of $W_{SS}(\alpha)$ via
\Fact{reflection}, a straightforward (but tedious) calculation shows
that the derivative of the numerator is
\begin{eqnarray*}
&&\left(1 - 2\sin\left(\frac{\pi\alpha}{3}\right)^2\right)
\frac{\pi^{10}\sqrt{3}\cos(\frac{\pi\alpha}{3})^2
  \sin(\frac{\pi\alpha}{3})^2}{9\Gamma(\frac
  23)^9 \Gamma(1 - \frac{2\alpha}{3})^3\sin(\frac{2\pi\alpha}{3})^3
  \cos(\frac{\pi\alpha}{6})^6 \Gamma(\frac{\alpha}{3})^6
  \sin(\frac{\pi\alpha}{6})^6}\\
&+& \left(\psi_0\left(1 - \frac{2\alpha}{3}\right) -
  \psi_0\left(\frac{\alpha}{3}\right)\right)
\frac{2\pi^9\sqrt{3}\cos(\frac{\pi\alpha}{3})^3
  \sin(\frac{\pi\alpha}{3})^3}{\Gamma(\frac 23)^9 \Gamma(1 -
  \frac{2\alpha}{3})^3 \sin(\frac{2\pi\alpha}{3})^3
\cos(\frac{\pi\alpha}{6})^6 \Gamma(\frac{\alpha}{3})^6
\sin(\frac{\pi\alpha}{6})^6}
\end{eqnarray*}
Thus as $\alpha\rightarrow 1^{-}$, the $(\psi_0(1 - 2\alpha/3) -
  \psi_0(\alpha / 3))$ term in the second summand in the above
expression goes to $0$ by the continuity of $\psi_0$.  The first
summand is a rational function of continuous functions, so its
limit as $\alpha\rightarrow 1^{-}$ is simply its evaluation at
$\alpha=1$, which we can calculate as $-2\pi\sqrt{3}/3$ using Euler's
reflection formula.  Recalling that the denominator of $W_{SS}$
approached $-1$ in the limit from the left, the claim is proven.
\end{proof}
}

%%%%%%%%%%%%%%%%%%%%%%%%%%%%%%%%%%%%%%%%%%%%%%%%%%%%%%%%%%%%%%%%%%%%%%%%%%%%%%%%
\subsection{Detailed Analysis of Geometric Mean Residual Moments Algorithm}
\AppendixName{alg1}

Formally, define $R=\ceil{\log_{1+\frac{\eps}{c_1}} m}$,
and let $ I_z = \setst{i}{ (1+\frac{\eps}{c_1})^z \le |A_i| < (1+\frac{\eps}{c_1})^{z+1}}$
for $0 \leq z \leq R$.
Let $z^*$ satisfy $(1+\frac{\eps}{c_1})^{z^*} \le |A_1| < (1+\frac{\eps}{c_1})^{z^*+1}$.
For $1 \leq j \leq r$ and $0 \leq z \leq R$,
define $X_{j,z} = \sum_{i \in I_z} \mathbf{1}_{h_j(i) \neq h_j(1)}$.
We now analyze the $j\th$ trial.

\begin{claim}
\ClaimName{expect}
$\expect{ 2 \cdot F_{\alpha,j,1-h_j(1)} } = \big( 1 + O(\epsilon) \big) \cdot \Fres_\alpha$.
\end{claim}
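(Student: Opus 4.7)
The plan is to apply linearity of expectation to the defining sum, separating the contribution of the heavy hitter at index $1$ from the contributions of the remaining items.

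First I would write out
$$
F_{\alpha,j,1-h_j(1)} ~=~ \sum_{i=1}^n |A_i|^\alpha \cdot \mathbf{1}_{\{h_j(i) \,=\, 1 - h_j(1)\}},
$$
and observe that the $i=1$ term is identically zero, since $h_j(1)=1-h_j(1)$ is impossible. Thus item $1$, the heavy hitter we wish to remove, contributes nothing regardless of the outcome of the hash. This is the whole point of partitioning by $h_j(1)$ and it is the reason the estimator is (approximately) unbiased for $\Fres_\alpha$ rather than for $F_\alpha$.

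Next I would handle the terms with $i \ne 1$. Since $h_j$ is drawn uniformly at random (using the random oracle assumption of the paper), the values $h_j(i)$ and $h_j(1)$ are independent uniform bits, so
$$
\prob{h_j(i) \,=\, 1 - h_j(1)} ~=~ \prob{h_j(i) \ne h_j(1)} ~=~ \tfrac{1}{2}.
$$
Note that only pairwise independence between $h_j(1)$ and $h_j(i)$ is actually required. By linearity of expectation,
$$
\expect{F_{\alpha,j,1-h_j(1)}} ~=~ \sum_{i \ne 1} |A_i|^\alpha \cdot \tfrac{1}{2} ~=~ \tfrac{1}{2} \Fres_\alpha,
$$
and multiplying by $2$ yields $\expect{2 \cdot F_{\alpha,j,1-h_j(1)}} = \Fres_\alpha$, which trivially satisfies the $(1+O(\eps))$ bound of the claim.

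There is really no obstacle here: the calculation gives an exact equality under the random-oracle model used throughout the paper, and the $(1+O(\eps))$ slack in the statement is vacuous. The only subtlety worth remarking on is that we have implicitly conditioned on the heavy-hitter routine of \Section{heavy} having correctly identified index~$1$ as the heavy element; by \Fact{good_hash} this holds with probability at least $15/16$, and this event is what allows us to write $1-h_j(1)$ for the ``opposite bin'' in the first place. The non-trivial work will come later, when this exact first-moment calculation must be combined with the variance analysis of the geometric-mean estimator (\Lemma{stable-variance}) and the class-based concentration argument sketched in \Section{res1} to produce the full $(1+\eps)$-approximation guarantee for $\tFres_\alpha$.
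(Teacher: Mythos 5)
Your proof is correct, and in fact it proves something slightly stronger than the claim: by the zero contribution of the $i=1$ term and the independence of $h_j(i)$ and $h_j(1)$ for $i\neq 1$, you get $\expect{2\cdot F_{\alpha,j,1-h_j(1)}}=\Fres_\alpha$ exactly, so the $(1+O(\eps))$ slack is indeed vacuous for this step. The paper takes a different route: it first rounds the weights into the geometric classes $I_z$ (writing $|A_i|^\alpha = \big((1\pm\eps)(1+\eps)^z\big)^\alpha$ for $i\in I_z$) and expresses the expectation as $(1\pm\eps)^\alpha\sum_z(1+\eps)^{z\alpha}\expect{2X_{j,z}}$, using $\expect{2X_{j,z}}=\card{I_z}-1$ for $z=z^*$ and $\card{I_z}$ otherwise; the $(1+O(\eps))$ in the statement is purely an artifact of this discretization. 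What the paper's detour buys is reuse: the identical class-based manipulation is applied again, with the concentration of the averaged counts $X_z$ in place of their expectations, to obtain \Equation{Fconc}, which is the step where the class decomposition is genuinely needed (a per-item union bound would force $r=\Omega(\eps^{-2}\log n)$ trials). Your cleaner exact computation is perfectly fine for \Claim{expect} itself, but, as you anticipate at the end, you will have to introduce the weight-class rounding anyway for the concentration argument, at which point the $(1\pm O(\eps))$ factor reappears there rather than here.
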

\begin{proof}
We have
\begin{align*}
\expect{ 2 \cdot F_{\alpha,j,1-h_j(1)} }
 &~=~ 2 \cdot \expect{ \sum_i \abs{A_i}^\alpha \cdot \mathbf{1}_{h_j(i) \neq h_j(1)} } \\
 &~=~ 2 \cdot \sum_z \expect{ \sum_{i \in I_z}
    \abs{A_i}^\alpha \cdot \mathbf{1}_{h_j(i) \neq h_j(1)} } \\
 &~=~ 2 \cdot \sum_z \expect{ \sum_{i \in I_z} \big( (1 \pm \epsilon) (1+\epsilon)^z \big)^\alpha
    \cdot \mathbf{1}_{h_j(i) \neq h_j(1)} } \\
 &~=~ (1 \pm \epsilon)^\alpha \cdot \sum_z (1+\epsilon)^{z \alpha} \expect{ 2 X_{j,z} }.
\end{align*}
Clearly $\expect{ 2 \cdot X_{j,z} }$ is $\card{I_z}-1$ if $z=z^*$ and $\card{I_z}$ otherwise.
Thus
$$
\sum_z (1+\epsilon)^{z \alpha} \expect{ 2 \cdot X_{j,z} }
~=~ \sum_{i \geq 2} \big((1 \pm \epsilon) \abs{A_i} \big)^\alpha 
~=~ (1 \pm \epsilon)^\alpha \cdot \Fres_\alpha.
$$
Since $\alpha<2$, $(1 \pm \epsilon)^\alpha = 1 \pm O(\epsilon)$,
so this shows the desired result.
\end{proof}

We now show concentration for $X_z \defeq \frac 1r \sum_{1 \leq j \leq r} X_{j,z}$.
By independence of the $h_j$'s, Chernoff bounds show that
$X_z = (1 \pm \eps) \expect{ X_z }$ with probability at least
$1 - \exp(- \Theta(\eps^2 r) )$.
This quantity is at least $1-\frac{1}{8(R+1)}$ if we choose
$r = c_2 \ceil{\eps^{-2}(\log\log||A||_1 + \log(c_3/\eps))}$.
The \newterm{good event} is the event that, for all $z$, $X_z = (1\pm\eps) \expect{X_z}$;
a union bound shows that this occurs with probability at least $7/8$.
So suppose that the good event occurs.
Then a calculation analogous to \Claim{expect} shows that
\begin{align}
\nonumber
\sum_{j} \frac 2r \cdot F_{\alpha,j,1-h_j(1)}
 &~=~ (1 \pm \epsilon)^\alpha \cdot \sum_z (1+\epsilon)^{z \alpha} 
    \cdot 2 X_z \\
\nonumber
 &~=~ (1 \pm \epsilon)^\alpha \cdot \sum_z (1+\epsilon)^{z \alpha} 
    \cdot (1 \pm \epsilon) \expect{ 2 X_z } \\
\EquationName{Fconc}
 &~=~ \big(1 \pm O(\epsilon)\big) \cdot \Fres_\alpha.
\end{align}

Recall that $\tFres_\alpha = \sum_{j=1}^r \frac 2r \tF_{\alpha,j,1-h_j(1)}$.
Since the geometric mean estimator is unbiased, we also have that
\begin{equation}
\EquationName{tFres}
\expect{ \tFres_\alpha }
~=~ \expect{ \sum_j \frac 2r F_{\alpha,j,1-h_j(1)} }.
\end{equation}
We conclude the analysis by showing that the random variable
$\tFres_\alpha$ is concentrated.
By \Lemma{stable-variance} applied to each substream, and properties of variance, we have
$$
\var{ \tFres_\alpha }
~=~ \frac{4}{r^2} \sum_{j=1}^r \var{ \tF_{\alpha,j,1-h_j(1) } }
~\leq~ \frac{4 \, C_{GM}}{r} \cdot \expect{ \tF_{\alpha,j,1-h_j(1)} }^2
~\leq~ \frac{C_{GM}}{r} \cdot \expect{\tFres_\alpha}^2.
$$
Chebyshev's inequality therefore shows that
$$
\prob{~ \tFres_\alpha = (1\pm\eps) \expect{\tFres_\alpha} ~}
 ~\geq~ 1 - \frac{\var{\tFres_\alpha}}{(\eps \cdot \expect{\tFres_\alpha})^2}
 ~\geq~ 1 - \frac{C_{GM}}{\eps^2 \, r}
 ~>~ 6/7,
$$
by appropriate choice of constants.
This event and the good event both occur with probability at least $3/4$.
When this holds, we have
$$
\tFres_\alpha
 ~=~ (1\pm\eps) \expect{\tFres_\alpha}
 ~=~ (1\pm\eps) \expect{ \sum_j \frac 2r F_{\alpha,j,1-h_j(1)} }
 ~=~ \big(1 \pm O(\epsilon)\big) \cdot \Fres_\alpha,
$$
by \Equation{tFres} and \Equation{Fconc}.

%%%%%%%%%%%%%%%%%%%%%%%%%%%%%%%%%%%%%%%%%%%%%%%%%%%%%%%%%%%%%%%%%%%%%%%%%%%%%%%%
\subsection{Proofs from \Section{res2}}
\AppendixName{res2}

\begin{proofof}{\Fact{good_hash}}
Let $B = \ceil{20/\epsilon}$ be the number of bins.
Let $\cH$ be a pairwise independent family of hash functions,
each function mapping $[n]$ to $[B]$.
Standard constructions yield such a family with $\card{\cH} = n^{O(1)}$.
We will let $h$ be a randomly chosen hash function from $\cH$.

For notational simplicity, suppose that $x_1 = \max_i x_i$.
Let $\cE_{i,j}$ be the indicator variable for the event that $h(i)=j$,
so that $\expect{ \cE_{i,j} } = 1/B$ and $\var{\cE_{i,j}} < 1/B$.
Let $X_j$ be the random variable denoting the weight of the items that
hash to bin $j$, i.e., $X_j = \sum_i x_i \cdot \cE_{i,j}$.
Since $\sum_i x_i = 1$, we have $\expect{ X_j } = 1/B$ and
$\var{ X_j } < \norm{x}_2^2 / B$.

Suppose that $x_1 \geq 1/2$.
Let $Y$ be the fraction of mass that hashes to $x_1$'s bin,
excluding $x_1$ itself.
That is, $Y = \sum_{i \geq 2} x_i \cdot \cE_{i,h(1)}$.
Note that $\expect{ Y } = ( \sum_{i \geq 2} x_i ) / B
< (\epsilon / 20) \cdot ( \sum_{i \geq 2} x_i )$.
By Markov's inequality,
$$
\prob{ Y \geq \epsilon \cdot ( \smallsum{i \geq 2}{} x_i ) }
~\leq~
\prob{ Y \geq 16 \expect{ Y } }
~\leq~ 1/16.
$$

Suppose that $x_1 < 1/2$.
This implies, by convexity, that $\norm{x}_2^2 < 1/2$.
Let $\beta = \sqrt{2/3} < 5/6$.
Then
$$
\prob{ \abs{ X_j - 1/B } \geq \beta } 
~\leq~ \frac{ \var{X_j} }{ \beta^2 }
~< \frac{3}{4B}.
$$
Thus, by a union bound,
$$
\prob{ \exists j \text{ such that } X_j \geq \beta + 1/B } ~\leq~
\frac{3}{4}.
$$

Suppose we want to test if $x_1 \geq 1/2$ by checking if there's a bin
of mass at least $5/6$. 
As argued above, the failure probability of one hash function is at
most $3/4$.  If we choose ten independent hash functions and check
that all of them have a bin of at least $5/6$, 
then the failure probability decreases to less than $1/16$.
\end{proofof}

%%%%%%%%%%%%%%%%%%%%%%%%%%%%%%%%%%%%%%%%%%%%%%%%%%%%%%%%%%%%%%%%%%%%%%%%%%%%%%%%
\subsection{Proofs from \Section{renyitsallis}}
\AppendixName{renyitsallis}

\begin{proofof}{\Theorem{additive_renyi}}
Let $m_i$ be the number of times the $i$-th element appears in the
stream. Recall that $m$ is the length of the stream. By computing a
$(1+\eps')$-approximation to the $\alpha\th$ moment (as in \Fact
{stable_moment}) and dividing by $||A||_1^\alpha$, we get 
a multiplicative approximation to $F_\alpha/||A||_1^\alpha =
||x||_\alpha^\alpha$.
We can thus compute the value
$$\frac{1}{1-\alpha}\log\left((1\pm\eps')\sum_{i=1}^{n}x_i^\alpha\right) = 
\frac{1}{1-\alpha}\log\left(\sum_{i=1}^{n}x_i^\alpha\right)
+ \frac{\log(1\pm\eps')}{1-\alpha}= H_{\alpha}(X) \pm \frac{\eps'}{1-\alpha}.$$
Setting $\eps' = \eps \cdot |1-\alpha|$, we obtain
an additive approximation algorithm using 
$$
O\left(\left(\frac{|1-\alpha|}{\eps^2 \cdot |\alpha -1|^2} + \frac{1}{\eps \cdot |\alpha -1|}\right)\log m\right)
~=~ O(\log m/(|1-\alpha|\cdot\eps^2))
$$
bits, as claimed.
\end{proofof}

\posB
\begin{proofof}{\Theorem{additive_tsallis}}
If $\alpha \in (0,1)$, then because the function $x^\alpha$ is
concave, we get by Jensen's inequality
$$\sum_{i=1}^n{x_i}^{\alpha} \le n \cdot \left(\frac{1}{n}\right)^{\alpha}
= n^{1-\alpha}.$$
If we compute a multiplicative $(1+(1-\alpha)\cdot\eps\cdot
n^{\alpha-1})$-approximation to the $\alpha\th$ moment, we obtain an
additive $(1-\alpha)\cdot\eps$-approximation to $(\sum_{i=1}^n
x_i^\alpha)-1$. This in turn gives an additive $\eps$-approximation to
$T_\alpha$. By \Fact{stable_moment},
$$
O\left(\left(\frac{1-\alpha}{((1-\alpha)\cdot\eps\cdot
n^{\alpha-1})^2} + \frac{1}{(1-\alpha)\cdot\eps\cdot
n^{\alpha-1}}\right)\log m\right) 
~=~
O(n^{2(1-\alpha)}\log m/((1-\alpha)\eps^2))
$$
bits of space suffice to achieve the required approximation to the
$\alpha\th$ moment.

For $\alpha > 1$, the value $F_\alpha/||A||_1^\alpha$ is at most $1$,
so it suffices to approximate $F_\alpha$ to within a factor of
$1+(\alpha-1)\cdot\eps$. For $\alpha \in (1,2]$, again using
\Fact{stable_moment}, we can achieve this using
$O(\log m/((\alpha-1)\eps^2))$ bits
of space.
\end{proofof}

\posB
\begin{proofof}{\Lemma{large_difference}}
Consider first $\alpha \in (0,1)$. For $x \in (0,5/6]$,
$$\frac{x^{\alpha}}{x} = x^{\alpha-1} \ge
\left(\frac{5}{6}\right)^{\alpha-1}\ge 1 + C_1 \cdot (1-\alpha),$$
for some positive constant $C_1$. The last equality follows from
convexity of $(5/6)^y$ as a function of $y$.  Hence,
$$\sum_{i=1}^n x_i^\alpha \ge \sum_{i=1}^n (1 + C_1 (1-\alpha))x_i = 1
+ C_1 (1-\alpha),$$ and furthermore,
$$\left|1 - \sum_{i=1}^n x_i^\alpha\right| = \left(\sum_{i=1}^n
  x_i^\alpha\right) - 1 \ge C_1 \cdot (1-\alpha) = C_1 \cdot
|\alpha-1|$$

When $\alpha \in (1,2]$, then for $x \in (0,5/6]$,
$$\frac{x^{\alpha}}{x} = x^{\alpha-1} \le
\left(\frac{5}{6}\right)^{\alpha-1}\le 1 - C_2 \cdot (\alpha-1),$$
for some positive constant $C_2$. This implies that 
$$\sum_{i=1}^n x_i^\alpha \le \sum_{i=1}^n x_i (1 - C_2 \cdot
(\alpha-1)) = 1 - C_2 \cdot (\alpha-1),$$
and
$$\left|1 - \sum_{i=1}^n x_i^\alpha\right| = 1 - \sum_{i=1}^n
x_i^\alpha \ge C_2 \cdot (\alpha - 1) = C_2 \cdot |\alpha - 1|.$$

To finish the proof of the lemma, we set $C = \min\{C_1,C_2\}$.
\end{proofof}

\posB
\begin{proofof}{\Lemma{approximate_difference}}
We first argue that a multiplicative
approximation to $|1-x_i^\alpha|$ can be obtained from a multiplicative approximation to
$1-x_i$. Let $g(y) = 1 - (1-y)^{\alpha}$. Note that $g(1-x_i) = 1 - x_i^\alpha$. Since $1-x_i \in [0,1/3]$,
we restrict the domain of $g$ to $[0,1/3]$. The derivative of $g$ is $g'(y) = \alpha (1-y)^{\alpha - 1}$.
Note that $g$ is strictly increasing for $\alpha \in (0,1)\cup(1,2]$.
For $\alpha \in (0,1)$, the derivative is in the range $[\alpha,\frac{3}{2}\alpha]$.
For $\alpha \in (1,2]$, it always lies in the range $[\frac{2}{3}\alpha,\alpha]$.
In both cases, a $(1+\frac{2}{3}\eps)$-approximation to $y$ suffices to compute
a $(1+\eps)$-approximation to $g(y)$.

We now consider two cases:
\begin{itemize}
\item Assume first that $\alpha \in (0,1)$. For any $x \in (0,1/3]$, we have
$$\frac{x^{\alpha}}{x} \ge \left(\frac{1}{3}\right)^{\alpha-1} =
3^{1-\alpha} \ge 1 + C_1(1-\alpha),$$
for some positive constant $C_1$. The last inequality follows from the
convexity of the function $3^{1-\alpha}$.
This means that if $x_i < 1$, then
$$\frac{\sum_{j\ne i} x_j^\alpha}{1-x_i} \ge \frac{\sum_{j\ne i} x_j(1
  + C_1(1-\alpha))}{1-x_i}
= \frac{(1-x_i)(1 + C_1(1-\alpha))}{1-x_i} = 1 + C_1(1-\alpha).$$
Since $x_i \le x_i^{\alpha} < 1$, we also have 
$$\frac{\sum_{j\ne i} x_j^\alpha}{1-x_i^\alpha}\ge \frac{\sum_{j\ne i} x_j^\alpha}{1-x_i} \ge 1 + C_1(1-\alpha).$$
This implies that if we compute a multiplicative
$1+(1-\alpha)\eps/D_1$-approximations to both $1-x_i^{\alpha}$ and
$\sum_{j\ne i} x_j^\alpha$, for sufficiently large constant $D_1$, we
compute a multiplicative $(1+\eps)$-approximation of $(\sum_{j=1}^n
x_j^\alpha) - 1$.

\item The case of $\alpha \in (1,2]$ is similar.
For any $x \in (0,1/3]$, we have
$$\frac{x^{\alpha}}{x} \le \left(\frac{1}{3}\right)^{\alpha-1} \le 1 - C_2(\alpha-1),$$
for some positive constant $C_2$.
Hence, 
$$\frac{\sum_{j\ne i} x_j^\alpha}{1-x_i} \le \frac{\sum_{j\ne i} x_j(1 - C_2(\alpha-1))}{1-x_i}
= \frac{(1-x_i)(1 - C_2(\alpha-1))}{1-x_i} = 1 - C_2(\alpha-1),$$
and because $x_i^\alpha \le x_i$,
$$\frac{\sum_{j\ne i} x_j^\alpha}{1-x_i^\alpha} \le \frac{\sum_{j\ne i} x_j^\alpha}{1-x_i} \le 1 - C_2(\alpha - 1).$$
This implies that if we compute a multiplicative
$1+(\alpha-1)\eps/D_2$-approximations to both $1-x_i^{\alpha}$ and
$\sum_{j\ne i} x_j^\alpha$, for sufficiently large constant $D_2$, we
can compute a multiplicative $(1+\eps)$-approximation to
$1-\sum_{j=1}^n x_j^\alpha$.
\end{itemize}
\end{proofof}

\posB
\begin{proofof}{\Theorem{mult_tsallis}}
We run the algorithm of \Section{heavy} to
find out if there is a very heavy element.
This only requires $O(\log n)$ words of space. 

If there is no heavy element, then by \Lemma{large_difference}
there is a constant $C \in (0,1)$ such that
$|1-\sum_i{x_i^\alpha}| \ge C|\alpha-1|$. We want to compute a multiplicative
approximation to $|1-\sum_i{x_i^\alpha}|$. We know that the difference between
$\sum_i{x_i^\alpha}$ and 1 is large. Therefore, if we compute a multiplicative
$(1+\frac{1}{2}|\alpha-1|C\eps)$-approximation to $\sum_i{x_i^\alpha}$, we obtain
an additive $(\frac{1}{2}|\alpha-1|C\eps\sum_i{x_i^\alpha})$-approximation to $\sum_i{x_i^\alpha}$.
If $\sum_i{x_i^\alpha} \le 2$, then
$$\frac{\frac{1}{2}|\alpha-1|C\eps\sum_i{x_i^\alpha}}{|1-\sum_i{x_i^\alpha}|} \le \frac{|\alpha-1|C\eps}{C|\alpha-1|} = \eps.$$
If $\sum_i{x_i^\alpha} \ge 2$, then
$$\frac{\frac{1}{2}|\alpha-1|C\eps\sum_i{x_i^\alpha}}{|1-\sum_i{x_i^\alpha}|} \le
\frac{1}{2}|\alpha-1|C\eps \cdot 2 \le \eps.$$
In either case, we obtain a multiplicative $(1+\eps)$-approximation to $|1-\sum_i{x_i^\alpha}|$,
which in turn yields a multiplicative approximation to the Tsallis entropy.
We now need to bound the amount of space we use in this case. We use
the estimator of \Fact{stable_moment},
which uses $O(\log m/(|\alpha-1|\eps^2))$ bits in our case.
	
Let us focus now on the case when there is a heavy element. By
\Lemma{approximate_difference} it suffices to approximate
$\Fres_1$ and $\Fres_\alpha$, which we can do using
the algorithm of \Section{res2}.
The number of bits required is
$$O\left(\frac{\log m}{\eps \cdot |\alpha-1|}\right) + \tilde O\left(\frac{|\alpha -1| \cdot \log m}{(\eps \cdot |\alpha-1|)^2}\right) = \tilde O\left(\frac{\log m}{\eps^2 \cdot |\alpha-1|}\right).$$
\end{proofof}

\posB
\begin{proofof}{\Lemma{log_approx}}
For $t \in [4/9,1]$, the derivative of the logarithm function lies in the range $[a,b]$, where
$a$ and $b$ are constants such that $0 < a < b$. This implies that in this case,
a $(1+\eps)$-approximation to $t-1$ gives a $1+\frac{b}{a}\eps$ approximation to $\log(t)$.
We are given $y \in [1-t,(1+\eps)(1-t)]$, and we can assume that 
$y \in [1-t,\min\{5/9,(1+\eps)(1-t)\}]$.
We have
$$-\log(t) \le -\log(1-y),$$
and
\begin{eqnarray*}
\frac{-\log(1-y)}{-\log(t)} &\le& \frac{-\log(1-(1+\eps)(1-t))}{-\log(t)} = \frac{-\log(t - \eps(1-t))}{-\log(t)}\\
&\le& \frac{-\log(t) + (-\log(t - \eps(1-t)) + \log(t))}{-\log(t)}\\
&\le& 1 + \frac{-\log(t - \eps(1-t)) + \log(t)}{-\log(t)}\\
&\le& 1 + \frac{\eps(1-t)\cdot\max_{z\in[\max\{t - \eps(1-t),4/9\},t]}(\log(z))'}{(1-t)\cdot\min_{z\in[4/9,1]}(\log(z))'}\\
&\le& 1 + \frac{\eps(1-t)\cdot\max_{z\in[t,1]}(\log(z))'}{(1-t)\cdot\min_{z\in[4/9,1]}(\log(z))'}\\
&\le& 1 + \frac{\eps(1-t)\cdot b}{(1-t)\cdot a} = 1 + \frac{b}{a}\eps.
\end{eqnarray*}
Consider now $t > 1$. We are given $y \in [t-1,(1+\eps)(t-1)]$,
and we have
$$\log(t) \le \log (y+1) \le \log((1+\eps)(t-1)+1).$$
Furthermore, 
\begin{eqnarray*}
\frac{\log((1+\eps)(t-1)+1)}{\log(t)} &\le& \frac{\log(t) + \log((1+\eps)(t-1)+1) - \log (t)}{\log(t)} \\
&=& 1 + \frac{\log(t+(t-1)\eps)-\log(t)}{\log(t)}\\
&=& 1 + \frac{\int_{t}^{t+(t-1)\eps}(\log(z))'dz}{\int_{1}^t(\log(z))'dz}\\
&\le& 1 + \frac{(t-1)\eps\max_{z\in[t,t+(t-1)\eps]}(\log(z))'}{(t-1)\max_{z\in[1,t]}(\log(z))'}\\
&\le& 1 + \frac{(t-1)\eps}{t-1} = 1 + \eps.
\end{eqnarray*}
Hence, we get a good multiplicative approximation to $\log(t)$.
\end{proofof}

\posB
\begin{proofof}{\Theorem{mult_renyi}}
We use the algorithm of \Section{heavy} to
check if there is a single element of high frequency.
This only requires $O(\log m)$ bits of space.

If there is no element of frequency greater than $5/6$, then the R\'enyi entropy for any $\alpha$ is greater than 
the min-entropy $H_\infty = -\log \max_i x_i \ge \log (6/5)$. Therefore, in this case it suffices to run the additive approximation algorithm with $\eps' = \log (6/5) \eps$ to obtain a sufficiently good estimate. To run that algorithm, we use $O\left(\frac{\log m}{|1-\alpha|\eps^2}\right)$ bits of space.

Let us consider the other case, when there is an element of frequency
at least $2/3$. For $\alpha\in(1,2]$, we have
$$\left(\frac{2}{3}\right)^2 \le \sum_i x_i^\alpha \le 1,$$
and for $\alpha \in (0,1)$, $\sum_{i=1}^n x_i^\alpha \ge 1$.
Therefore, by \Lemma{log_approx}, it suffices to compute a
multiplicative approximation to $|1-\sum_i x_i^\alpha|$, which we can
do by \Lemma{approximate_difference}. By
algorithms from \Section{res1} and \Section{res2}, we can compute the multiplicative
$(1+\Theta(|1-\alpha|\eps))$-approximations required by
\Lemma{approximate_difference} with the same space
complexity as for the approximation of Tsallis entropy (see the proof
of Theorem~\ref{thm:mult_tsallis}).
\end{proofof}

\posB
\begin{proofof}{\Theorem{lastone}}
\comment{We reduce from the communication game of one-way $t$-party
disjointness, which has a lower bound of $\Omega(n/t^{1+\gamma})$ bits
of communication against randomized communication protocols for any
$\gamma > 0$ \cite{BJKS04}.  In this problem there are $t$
players, where player $i$ has a string $x_i\in\{0,1\}^n$.  The string
$x_i$ should be interpreted as the characteristic vector of a set
$S_i\subseteq [n]$.  The players are guaranteed that either
$S_i\cap S_j = \emptyset$ for all $i\neq j$ (``yes'' instance), or
that there exists some $z\in [n]$ such that $S_i\cap S_j = \{z\}$ for
all $i\neq j$ (``no'' instance).  We measure the number of bits of
communication of a protocol by the total number of bits sent by all
players (and not by the maximum communication of a single player).
}
The proof is nearly identical to that of Theorem~3.1 in \cite{BJKS04}.
We need merely observe that if ${\tilde{H}}_{\alpha}$ is a
$(1+\varepsilon)$-approximation to $H_{\alpha}$, then
$m^{\alpha(1+\varepsilon)}2^{(1-\alpha){\tilde{H}}_{\alpha}}$ is a
multiplicative $m^{\alpha\varepsilon}$-approximation to
$F_{\alpha}$.  From here, we set $t = cm^{\varepsilon}n^{1/\alpha}$
and argue identically as in \cite{BJKS04} via a reduction from
$t$-party disjointness; we omit the details.
\comment{
Suppose we have a one-pass streaming algorithm $A$ which
$(1+\varepsilon)$-approximates $H_{\alpha}$.  First, we note that
$H_{\alpha}$ can be written as $(\log(F_{\alpha} /
m^{\alpha})) / (1-\alpha)$.  Thus if ${\tilde{H}}_{\alpha}$ is a
$(1+\varepsilon)$-approximation to $H_{\alpha}$, then
$m^{\alpha(1+\varepsilon)}2^{(1-\alpha){\tilde{H}}_{\alpha}}$ is a
multiplicative $m^{\alpha\varepsilon}$-approximation to
$F_{\alpha}$.  We therefore have a one-pass streaming algorithm $A'$
which provides a multiplicative $m^{\alpha\varepsilon}$-approximation
to $F_{\alpha}$ using the same space as $A$.  Using $A'$ we provide a
communication protocol for one-way $t$-party disjointness with $t =
cm^{\varepsilon}n^{1/\alpha}$ as
follows ($c$ is a constant to be chosen later).  Each player runs $A'$ on
their input string $x_i$, where
if the $j$th bit of $x_i$ is $1$, they treat that as the token $j$ in
their stream.  They then send the resulting state to the next player,
who continues running $A'$ with that state.  For ``yes'' instances the
resulting moment $F_{\alpha}$ is exactly $m$, and for ``no'' instances
$F_{\alpha}$
is at least $\Theta(nm^{\alpha\varepsilon})$.  It turns out that in
the lower bound proof of \cite{BJKS04}, the hard instances for
$t$-player disjointness have
Hamming weight $\Theta(n/t)$ on each $x_i$, and thus the total number
of tokens in the stream devised by this protocol is $\Theta(n)$.
We can thus pick $c$ so that the resulting $F_{\alpha}$ is at least
$m^{1+\alpha\varepsilon}$ for ``no'' instances so that $A'$ suffices
to decide between yes and no instances.
If $A'$ uses $s$ space,
then the total number of bits transmitted in this protocol is $st$,
which by \cite{BJKS04} must be $\Omega(n/t^{1+\gamma})$ for
arbitrary $\gamma > 0$.  Thus $s = \Omega(n/t^{2+\gamma}) =
\Omega(n^{1-2/\alpha - 2\varepsilon - \gamma(\varepsilon +
  1/\alpha)})$.
}
\end{proofof}

\end{document}